\def\boxit#1{\vbox{\hrule\hbox{\vrule\kern4pt
  \vbox{\kern1pt#1\kern1pt}
\kern2pt\vrule}\hrule}}
\newtheorem{lemma}{Lemma}
\newtheorem{theorem}{Theorem}
\newtheorem{corollary}{Corollary}
\newtheorem{observation}{Observation}
\newtheorem{reduction}{Reduction Rule}
\newcommand{\proofparagraph}[1]{\smallskip\emph{#1}}
\newcommand{\proofparagraphf}[1]{\emph{#1}}
\newtheorem*{thmhalfintegralxp}{\cref{thm:half-integral-xp}}
\newtheorem*{thmhalfintegralpolynoexcess}{\cref{thm:half-integral-poly-no-excess}}
\newcommand{\defprobMD}[3]{\par
 \vspace{3mm}
\noindent\fbox{
 \begin{minipage}{0.96\textwidth}
 \begin{tabular*}{\textwidth}{@{\extracolsep{\fill}}lr} #1  \vspace{1mm} \\ \end{tabular*}
 {\textbf{Input:}} #2%
	\vspace{1mm}\\%
 {\textbf{Question:}} #3%
 \end{minipage}
 }
 \vspace{3mm}
\par
}
\newcommand{\pCElong}{\textsc{Cluster Editing}}
\newcommand{\pCE}{\pCElong}
\newcommand{\pCEAlong}{\textsc{Cluster Editing above Modification-Disjoint $P_{3}$ Packing}}
\newcommand{\pCEA}{\textsc{CEaMP}}
\newcommand{\pCEAT}{\textsc{CEaHMP}}
\newcommand{\pCEATlong}{\textsc{Cluster Editing above Half-Integral Modification-Disjoint $P_{3}$ Packing}}
\newcommand{\pCDA}{\textsc{CDaMP}}
\newcommand{\pCEMT}{\textsc{CEMHMP}}
\newcommand{\pCEMTlong}{\textsc{Cluster Editing Matching Half-Integral Modification-Disjoint $P_{3}$ Packing}}
\newcommand{\pkg}{\ensuremath{\mathcal{H}}}
\newcommand{\true}{\ensuremath{\textsf{true}}}
\newcommand{\false}{\ensuremath{\textsf{false}}}
\renewcommand{\P}{\ensuremath{\mathcal{P}}}
\newcommand{\cl}{\ensuremath{d}} %
\newcommand{\vp}{\ensuremath{a}} %
\newcommand{\vq}{\ensuremath{b}} %
\newcommand{\vr}{\ensuremath{c}} %
\begin{document}

\title{Cluster Editing parameterized above\\modification-disjoint \(P_3\)-packings\footnote{An extended abstract of this work appeared at \emph{STACS 2021}~\cite{li_cluster_2021} and a full version appeared in \emph{ACM Transactions on Algorithms}~\cite{li_cluster_2023}.
}}

\author{Shaohua Li\thanks{Institute of Informatics, University of Warsaw, Poland, \texttt{Shaohua.Li@mimuw.edu.pl}.}
\and Marcin Pilipczuk\thanks{Institute of Informatics, University of Warsaw, Poland, \texttt{malcin@mimuw.edu.pl}.}
\and Manuel Sorge\thanks{Institute of Informatics, University of Warsaw, Poland and TU Wien, Vienna, Austria, \texttt{manuel.sorge@ac.tuwien.ac.at}.}}

\maketitle

\begin{abstract}
Given a graph $G=(V,E)$ and an integer $k$, the \textsc{Cluster Editing} problem asks whether we can transform~$G$ into a union of vertex-disjoint cliques by at most $k$ modifications (edge deletions or insertions).
In this paper, we study the following variant of \textsc{Cluster Editing}.
We are given a graph $G=(V,E)$, a packing~$\mathcal{H}$ of modification-disjoint induced $P_3$s
(no pair of $P_3$s in $\mathcal{H}$ share an edge or non-edge) and an integer~$\ell$. The task is to decide whether $G$ can be transformed into a union of vertex-disjoint cliques by at most $\ell+|\mathcal{H}|$ modifications (edge deletions or insertions).
We show that this problem is NP-hard even when $\ell=0$
(in which case the problem asks to turn $G$ into a disjoint union of cliques by performing
 exactly one edge deletion or insertion per element of $\mathcal{H}$) and when each vertex is in at most 23 \(P_3\)s of the packing.
This answers negatively a question of van Bevern, Froese, and Komusiewicz (CSR 2016, ToCS 2018),
repeated by C.\ Komusiewicz at Shonan meeting no.\ 144 in March 2019.
We then initiate the study to find the largest integer \(c\) such that the problem remains tractable when restricting to packings such that each vertex is in at most \(c\) packed \(P_3\)s. Here packed $P_3$s are those belonging to the packing~$\mathcal{H}$.  
Van Bevern et al.\ showed that the case \(c = 1\) is fixed-parameter tractable with respect to \(\ell\) and we show that the case \(c = 2\) is solvable in \(|V|^{2\ell + O(1)}\)~time.

\end{abstract}

\tikzset{middlearrow/.style={
		decoration={markings,
			mark= at position 0.7 with {\arrow{#1}} ,
		},
		postaction={decorate}
	}
}
\tikzset{none/.style={
	}
}

\tikzset{nodestyle1/.style={
		fill,circle,scale=0.7
	}
}
\tikzset{nodestyle2/.style={
		fill,circle,scale=0.5
	}
}

\tikzset{b/.style={
		draw=black,fill=white,circle,scale=3.5
	}
}

\tikzset{bs/.style={
		draw=black,fill=white,circle,scale=3
	}
}
\tikzset{gn/.style={
	}
}

\tikzset{vertex/.style={
		fill,circle,scale=0.3
	}
}
\tikzset{largeVertex/.style={
		scale=0.5
	}
}

\tikzset{normal/.style={
		black,thick
	}
}

\tikzset{thickEdge/.style={
		gray,line width=3.5mm
	}
}
\tikzset{thickGreen/.style={
		brown,line width=2mm
	}
}

\tikzset{varGadget/.style={
		fill,circle,scale=0.5
	}
}
\tikzset{trasEdge1/.style={
		red, very thick
	}
}
\tikzset{trasEdge2/.style={
		black, very thick
	}
}
\tikzset{trasEdge3/.style={
		green, very thick
	}
}
\tikzset{trasEdge4/.style={
		blue, very thick
	}
}
\tikzset{varGadEdge/.style={
		gray, thick
	}
}

\tikzset{newP3/.style={
		brown, very thick
	}
}

\tikzset{newP3Dot/.style={
		brown, dotted, very thick
	}
}

\tikzset{C8Edge1/.style={
		black, thick
	}
}

\tikzset{type1/.style={
		red, thick
	}
}

\tikzset{type2/.style={
		red, thick, dotted
	}
}

\section{Introduction}\label{sec:intro}

\textsc{Correlation Clustering} is a well-known problem motivated by research in computational biology \cite{DBLP:journals/jcb/Ben-DorSY99} and machine learning \cite{DBLP:journals/ml/BansalBC04}. In this problem we aim to partition data points into groups or clusters according to their pairwise similarity and this has been intensively studied in the literature, see \cite{DBLP:journals/jacm/AilonCN08,DBLP:conf/stoc/AlonMMN05,DBLP:journals/eccc/ECCC-TR05-058,DBLP:journals/ml/BansalBC04,DBLP:journals/jcb/Ben-DorSY99,
DBLP:journals/jcss/CharikarGW05}, for example.

\looseness=-1
In this paper, we study \textsc{Correlation Clustering} from a graph-based point of view, resulting in the following problem formulation.
A graph $H$ is called a \emph{cluster graph} if $H$ is a union of vertex-disjoint cliques; we also call these cliques \emph{clusters}.
Given a graph $G=(V,E)$, in the optimization version of \pCElong{} we ask for a minimum-size \emph{cluster-editing set}~$S$, that is,
a set \(S \subseteq \binom{V}{2}\) of vertex pairs such that $G\triangle S := (V,E\triangle S)$ is a cluster graph.
Here $E\triangle S$ is the symmetric difference of $E$ and $S$, that is, $E\triangle S = (E \setminus S) \cup (S \setminus E)$.
We also sometimes refer to vertex pairs as \emph{edits}.
\pCElong{} is NP-hard~\cite{DBLP:journals/dam/ShamirST04}.
Constant-ratio approximation algorithms have been found for the optimization variant \cite{DBLP:journals/jacm/AilonCN08,DBLP:journals/ml/BansalBC04,DBLP:journals/jcss/CharikarGW05} but it is also APX-hard \cite{DBLP:journals/jcss/CharikarGW05}.
We focus here on exact algorithms and the decision version of \pCE.

Given a natural number $k$ and a graph $G=(V,E)$, the decision version of \pCE\ asks whether there exists a cluster-editing set $S$ such that $|S|\leq k$.
Exact parameterized algorithms for \pCE\ and some of its variants have been extensively studied~\cite{DBLP:journals/mst/GrammGHN05,bocker_going_2009,DBLP:journals/mst/ProttiSS09,DBLP:journals/mst/Damaschke10,DBLP:journals/tcs/BodlaenderFHMPR10,DBLP:journals/siamdm/GuoKNU10,DBLP:journals/algorithmica/BockerBK11,DBLP:journals/ipl/BockerD11, DBLP:journals/disopt/FellowsGKNU11, DBLP:journals/algorithmica/GuoKKU11,DBLP:journals/jda/Bocker12,komusiewicz_cluster_2012,fomin_tight_2014,marx_fixedparameter_2014,bousquet_multicut_2018,Abu-KhzamBFS21,askeland_overlapping_2022,ArrighiBDSW23,FDHSSVW23}.
\pCE\ is but one of a large group of edge modification problems that have been studied, see Crespelle et al.~\cite{crespelle_survey_2020} for a recent survey.
Perhaps it is one of the most important such problems because of the practical motivation.
Barring few exceptions~\cite{DBLP:journals/mst/Damaschke10,komusiewicz_cluster_2012,fomin_tight_2014,DBLP:journals/mst/BevernFK18}, \pCE\ has mainly been studied with respect to the solution-size parameter~\(k\).
It is not hard to observe that \pCE\ is fixed-parameter tractable with respect to \(k\) and a series of papers~\cite{DBLP:journals/mst/GrammGHN05,gramm_automated_2004, bocker_going_2009,DBLP:journals/ipl/BockerD11,DBLP:journals/jda/Bocker12} continually improved the base in the exponential part of the running time, culminating in the current fastest fixed-parameter algorithm with running time $O(1.62^{k}+n+m)$~\cite{DBLP:journals/jda/Bocker12}, where \(n\) is the number of vertices of the input graph and \(m\) its number of edges.
Similarly, a series of papers~\cite{DBLP:journals/mst/GrammGHN05,Fellows06,FellowsLRS07, guo_more_2009,DBLP:journals/mst/ProttiSS09,DBLP:journals/algorithmica/CaoC12,DBLP:journals/jcss/ChenM12} gave more and more effective problem kernels\footnote{A problem kernel is a formalization of provably effective and efficient data reduction. It is a polynomial-time self-reduction that produces instances of size bounded by some function of the parameter.} until a problem kernel with $2k$ vertices was achieved~\cite{DBLP:journals/algorithmica/CaoC12,DBLP:journals/jcss/ChenM12}.

As mentioned, the interest in \pCE\ is not merely theoretical.
Indeed, parameterized techniques are implemented in standard clustering tools~\cite{morris_clustermaker_2011,wittkop_partitioning_2010}.
Although practitioners report that in particular the parameterized data-reduction techniques are effective~\cite{DBLP:journals/algorithmica/BockerBK11,bocker_cluster_2013}, the parameter \(k\) is not very small in several real-world data sets~\cite{bocker_going_2009,DBLP:journals/algorithmica/BockerBK11,DBLP:journals/mst/BevernFK18}.
For instance, B\"{o}cker et al.~\cite[Table~2]{bocker_going_2009} considered 26 graphs derived from biological data with 91 to 100 vertices on which the average number of needed edits is 315, despite noting that the \pCE\ model outperformed other clustering models.

A technique to deal with such large parameters is \emph{parameterization above lower bounds}.
Herein, the parameter is of the form $\ell=k-h$ where $h$ is a lower bound on the solution size, usually computable in polynomial time, and $\ell$ is the \emph{excess} of the solution size above the lower bound.
Research into parameterizations above lower bounds has been active and fruitful for several parameterized problems, not only on the theory-side (see \cite{DBLP:journals/jal/MahajanR99,DBLP:journals/toct/CyganPPW13,DBLP:conf/soda/GargP16,DBLP:journals/talg/LokshtanovNRRS14,kratsch_randomized_2018}, for example)
but also in practice, as algorithms based on that approach yielded quite efficient implementations for \textsc{Vertex Cover}~\cite{akiba_branchandreduce_2016} and among the most efficient ones for \textsc{Feedback Vertex Set}~\cite{iwata_lineartime_2017,kiljan_experimental_2018}.
For \pCE\ we are aware of only one research work considering parameterizations above lower bounds:
Van Bevern, Froese, and Komusiewicz~\cite{DBLP:journals/mst/BevernFK18} studied edge-modification problems parameterized above the lower bound from packings of forbidden induced subgraphs and showed that \pCE\ parameterized by the excess above the size of a given packing of \emph{vertex-disjoint} $P_3$s is fixed-parameter tractable.
Observe that a graph is a cluster graph if and only if it does not contain any~$P_3$, a path on three vertices, as an induced subgraph.
Consequently, one needs to perform at least one edge deletion or insertion per element of the packing.

\looseness=-1
As the \(P_3\)s in the above packing are vertex-disjoint, the value by which the packing can decrease the parameter is limited.
In the previous example with 315 edits, subtracting the resulting lower bound would reduce the parameter by at most~33.
In their conclusion, van Bevern et al.~\cite{DBLP:journals/mst/BevernFK18} asked whether \pCE\ is fixed-parameter tractable when parameterized above a stronger lower bound, the size of a modification-disjoint packing of~$P_3$s.
Here, a packing $\mathcal{H}$ of induced $P_{3}$s in $G$ is \emph{modification-disjoint}
if every two $P_{3}$s in $\mathcal{H}$ do not share edges nor non-edges, that is, they share at most one vertex.
The formal problem definition is as follows.

\defprobMD{\pCEAlong~(\pCEA)}
{A graph $G=(V,E)$, a modification-disjoint packing \pkg\ of induced $P_{3}$s of $G$, and a non-negative integer~$\ell$.}
{Is there a cluster editing set, i.e., a set of vertex pairs
$S\subseteq \binom{V}{2}$ so that $G\triangle S$ is a union of disjoint cliques, with $|S|-|\mathcal{H}|\leq\ell$?}

We also say that a set $S$ as above is a \emph{solution}.

\paragraph{Our results.}
At Shonan Meeting no.\ 144~\cite{Shonan144} Christian Komusiewicz re-iterated the question of van Bevern et al.~\cite{DBLP:journals/mst/BevernFK18} and it was also asked in Vincent Froese's dissertation~\cite{froese_finegrained_2018}.
In this paper, we answer this question negatively by showing the following.
\begin{theorem} \label{main}
\pCEAlong{} is NP-hard even for $\ell=0$ and when each vertex in the input graph is incident with at most 23 $P_3$s of \pkg.
\end{theorem}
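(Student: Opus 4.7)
The plan is to reduce from an NP-hard constraint satisfaction problem with bounded occurrences, such as $3$-SAT in which every variable appears in at most a constant number of clauses; such bounded-occurrence variants remain NP-hard and will enable us to produce instances in which each vertex belongs to at most $O(1)$ many $P_3$s of $\mathcal{H}$. The key starting observation is that when $\ell = 0$, any solution $S$ satisfies $|S| = |\mathcal{H}|$, and modification-disjointness forces $S$ to consist of exactly one modification per element of $\mathcal{H}$. For each $P_3$ on vertices $abc$ with center $b$, the three local resolution options are: delete $ab$, delete $bc$, or add $ac$. I would design gadgets whose globally compatible resolutions correspond precisely to satisfying truth assignments.

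The first ingredient is a variable gadget. I would use an induced $6$-cycle $v_1 v_2 \ldots v_6$ with packing $\{v_1 v_2 v_3, \, v_3 v_4 v_5, \, v_5 v_6 v_1\}$. These three $P_3$s pairwise share only vertices (no edges or non-edges), so the packing is modification-disjoint, and each vertex belongs to at most two $P_3$s. A short case analysis shows that the only way to resolve all three $P_3$s with exactly three modifications is to delete an alternating triple of edges, giving two valid solutions: ``true'' $= \{v_1 v_2, v_3 v_4, v_5 v_6\}$ and ``false'' $= \{v_2 v_3, v_4 v_5, v_6 v_1\}$. Any non-edge insertion would create a new induced $P_3$ outside $\mathcal{H}$ (for instance, adding $v_1 v_3$ turns $v_1 v_2 v_3$ into a triangle but creates the induced $P_3$ on $v_1, v_3, v_4$), and any non-alternating deletion pattern leaves an induced $P_3$ on three consecutive cycle vertices. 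For a variable occurring in several clauses I would concatenate such hexagons into a longer even cycle of $P_3$s with the same alternation property, providing enough interface vertices at which clause gadgets can attach.

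The second and more delicate ingredient is a clause gadget. For a clause $\ell_1 \vee \ell_2 \vee \ell_3$, I would build a small subgraph attached to chosen interface vertices of the three variable cycles, together with a few auxiliary $P_3$s added to $\mathcal{H}$. The gadget must be engineered so that its $P_3$s can be collectively resolved within the budget if and only if at least one incoming literal evaluates to ``true''; concretely, an ``all false'' assignment should either force two modifications on a single $P_3$ of the gadget or leave an induced $P_3$ outside $\mathcal{H}$ that no remaining modification can repair. The attachment must be verified to preserve modification-disjointness with the variable cycles and to keep every vertex incident to at most a bounded number of $P_3$s in $\mathcal{H}$.

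The main obstacle is designing this clause gadget and its interface so that it resists ``creative'' mixed-operation resolutions (combinations of edge insertions and deletions across gadget boundaries) that might exploit the budget in unintended ways; such issues are typically handled by an exhaustive case analysis over resolution patterns on the interface vertices. Once the gadgets are in place, correctness splits into two directions: a satisfying assignment induces a valid $|\mathcal{H}|$-sized solution by using the alternating-deletion pattern in each variable cycle together with the prescribed resolutions in each clause gadget; conversely, any $|\mathcal{H}|$-sized solution can be decoded by reading the alternation in each variable cycle and, via the clause-gadget semantics, yields a satisfying assignment. Both directions rely crucially on the fact that $\ell = 0$ pins down the solution size to exactly $|\mathcal{H}|$ and so forbids any modification outside the forced ``one modification per $P_3$'' scheme.
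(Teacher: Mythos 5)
The proposal reproduces the paper's high-level strategy (reduction from 3-SAT, variable gadget = even cycle whose only solutions are two alternating resolution patterns, clause gadget connected to three variable gadgets), but it stops precisely where the paper's real work begins. The clause gadget is left entirely abstract ("I would build a small subgraph $\ldots$ together with a few auxiliary $P_3$s"), and more importantly the proposal misdiagnoses which direction of the correctness proof is the hard one.

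You treat soundness (ruling out "creative mixed-operation resolutions") as the main obstacle and assume completeness will follow "by using the alternating-deletion pattern $\ldots$ together with the prescribed resolutions." But with $\ell = 0$, completeness is the delicate direction: \emph{every} vertex pair edited by the solution must lie in a $P_3$ of $\pkg$, and each $P_3$ must receive exactly one edit. When you merge two parts of the graph into a single cluster, every non-edge between them must be covered by a distinct $P_3$ of $\pkg$, and those $P_3$s must be arranged so that the one edit you charge to each of them is compatible with the merge. This is not automatic; it is the dominant technical difficulty of the paper, which is why the paper introduces proto-clusters of size $5$ (and larger), the merging model $H$ with its five levels, and the number-theoretic padding tool (Lemma~1) that produces triangle packings over $\mathbb{F}_p$ to cover all relevant vertex pairs.

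Your hexagon variable gadget is exactly where this bites. Its proto-clusters are single vertices, so a solution partitions the cycle into clusters of size $2$. To have a clause gadget "read" the truth value, you must add at least one edge $uv$ from a gadget vertex $u$ to some cycle vertex $v$. That edge creates new induced $P_3$s (e.g.\ $u v v'$ for each neighbor $v'$ of $v$ on the cycle) that are not in $\pkg$ and must somehow be destroyed, but with $\ell = 0$ there is no budget for edits outside $\pkg$'s $P_3$s, and the non-edges you would need to add (e.g.\ $u v'$) are not in any $P_3$ either. Without padding those vertex pairs with additional $P_3$s in $\pkg$ — which in turn changes the solution structure and must itself be reanalyzed — no clause attachment can work. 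Scaling the proto-clusters up and designing the padding so that it is simultaneously (a) modification-disjoint, (b) sufficient for completeness, and (c) bounded in the number of $P_3$s per vertex is the content of Lemma~1 and Sections~4.3--4.4 in the paper, none of which appears in the proposal. As it stands, the proof has a genuine gap: no clause gadget exists, and the claimed easy completeness direction is in fact where the construction would fail.
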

\noindent In other words, given a graph $G$ and a packing $\mathcal{H}$ of
modification-disjoint $P_3$s in $G$, it is NP-hard to decide if one can delete or insert exactly one edge per element of $\mathcal{H}$ to obtain a cluster graph.
Proving \cref{main} was surprisingly nontrivial.
A straightforward approach would be to amend the known reductions~\cite{komusiewicz_cluster_2012,fomin_subexponential_2013} that show NP-hardness for constant maximum vertex degree by specifying a suitable packing of \(P_3\)s.
However, an argument based on the linear-programming relaxation of packing modification-disjoint \(P_3\)s shows that the graphs produced by these reductions do not admit tight \(P_3\) packing bounds.
We did not find a way around this issue and thus developed a novel reduction based on new gadgets.

The verdict spelt by \cref{main} is unfortunately quite damning.
It indicates that even just reaching the lower bound given by a modification-disjoint \(P_3\) packing already captures the algorithmic hardness of the problem.
However, there may be a way out of this conundrum:
Call a modification-disjoint \(P_3\) packing \emph{\(1/c\)-integral} if each vertex is in at most~\(c\) packed~\(P_3\)s (and say \emph{integral} in place of \emph{1-integral} and \emph{half-integral} in place of \emph{$1/2$-integral}).
As the case \(c = 1\) is just the case of vertex-disjoint packings, van Bevern et al.~\cite{DBLP:journals/mst/BevernFK18} showed that \pCE\ parameterized by the excess over integral \(P_3\) packings is fixed-parameter tractable.
Thus it becomes an intriguing question to find the largest \(c < 23\) such that \pCEA\ remains tractable with respect to the excess over \(1/c\)-integral packings.
We provide progress towards answering this question here.
The problem \pCEATlong~(\pCEAT) is defined in the same way as \pCEA\ except that the input packing~\(\pkg\) is half-integral.
It turns out that the complexity of the problem indeed drops when making the packing half-integral:
\newcommand\thmhalfintegralxpstatement{%
  \pCEATlong\ parameterized by the number~\(\ell\) of excess edits is in XP.
  It can be solved in \(n^{2\ell + O(1)}\) time, where \(n\) is the number of vertices in the input graph.%
}
\begin{theorem}\label{thm:half-integral-xp}
  \thmhalfintegralxpstatement
\end{theorem}
A straightforward idea to prove \cref{thm:half-integral-xp} would be to adapt the fixed-parameter algorithm for vertex-disjoint packings given by van~Bevern~et al.~\cite{DBLP:journals/mst/BevernFK18}.
Their main idea is to show that if a packed \(P_3\)~\(P\) of the input graph~\(G\) admits a solution that is optimal for \(P\) and that respects certain conditions on the neighborhood of \(V(P)\) in \(G\) then this solution can be used in an optimal cluster-editing set for~\(G\).
Afterwards, each packed \(P_3\)~\(P\) either needs an excess edit in \(V(P)\) or an edit incident with \(V(P)\) in~\(G\).
Since the \(P_3\)s in the packing are vertex-disjoint, an edit incident with \(V(P)\) will be in excess over the packing lower bound as well.
It then follows that the overall number of edits is bounded by a function of the excess edits.

\looseness=-1
Unfortunately, the above idea fails for modification-disjoint packings for two reasons.
First, the property that packed \(P_3\)s have an edit incident with them is not helpful anymore, because these edits may be part of other packed \(P_3\)s and hence not be in excess.
Second, if we would like to preserve that these edits are excess, we need to check the special neighborhood properties of van Bevern et al.~\cite{DBLP:journals/mst/BevernFK18} for arbitrarily large connected components of packed \(P_3\)s efficiently.
We did not see a way around these issues and instead designed an algorithm from scratch:
A straightforward guessing of the excess edits reduces the problem to the case where we need to check for zero excess edits.
This case is then solved by an extensive set of reduction rules that exploit the structure given by the half-integral packing.
Essentially, we successively reduce the maximum size of clusters in the final cluster graph.
This then allows us to reduce the problem to \textsc{Cluster Deletion}.
Together with the properties of the packing, this problem allows a formulation as a \textsc{2-SAT} formula which we then solve in polynomial time.

\paragraph{Organization.}
After brief preliminaries in \cref{sec:prelims}, we give some intuition about \pCEA\ in \cref{sec:intuition}.
Then we proceed to the reduction used to show \cref{main} in \cref{sec:construction} (containing the construction) and \cref{sec:correctness} (containing the correctness proof).
\cref{sec:xp-algorithm} then contains the proof of \cref{thm:half-integral-xp}.

\section{Preliminaries}\label{sec:prelims}
In this paper, we denote an undirected graph by $G=(V,E)$, where $V = V(G)$ is the set of vertices, $E = E(G)$ is the set of edges, and $\binom{V}{2}\setminus E$ is the set of \emph{non-edges}.
An undirected edge between two vertices $u$ and $v$ will be denoted by $uv$ where we put $uv = vu$. An undirected non-edge between two vertices $x$ and $y$ will be denoted by $xy$, where we put $xy = yx$, and we will explicitly mention that $xy$ is a non-edge in case of confusion with the notation of an edge. If $uv$ is an edge in the graph, we say $u$ and $v$ are \emph{adjacent}.
We denote a bipartite graph by $B=(U,W,E)$, where $U,W$ are the two parts of the vertex set of $B$ and $E$ is the set of edges of~$B$. We say that a bipartite graph is \emph{complete} if for every pair of vertices $u\in U$ and $w\in W$, $uw \in E$. For a non-empty subset of vertices $X\subseteq V$, we denote the subgraph induced by $X$ by $G[X]$. A \emph{clique} $Q$ in a graph $G$ is a subgraph of $G$ in which any two distinct vertices are adjacent. A \emph{cluster graph} is a graph in which every connected component is a clique. A connected component in a cluster graph is called a \emph{cluster}.

\looseness=-1
Let $G'$ be a cluster graph and let $S$ be a cluster editing set $S$ such that $G\triangle S=G'$. We say that two cliques $Q_1$ and $Q_2$ of $G$ are \emph{merged} (in $G'$) if they belong to the same cluster in~$G'$. We say that $Q_1$ and $Q_2$ are \emph{separated} (in $G'$) if they belong to two different clusters in $G'$.
When mentioning the edges or non-edges between the vertices of the clique $Q_1$ and the vertices of the clique $Q_2$, we refer to the edges or non-edges between the clique $Q_1$ and the clique $Q_2$ for short. Let $\ell, r \in \mathbb{N}$. We denote a path with $\ell$ vertices by $P_{\ell}$ and a cycle with $r$ vertices by~$C_r$.

\looseness=-1
Let $x, y, z$ be vertices in a graph~$G$.
We say that $xyz$ is an \emph{induced} $P_3$ of $G$ if $xy, yz\in E(G)$ and $xz\notin E(G)$.
Vertex~$y$ is called the \emph{center} of~$xyz$.
We say that vertices $x,y,z$ \emph{belong to} $xyz$ or $x,y,z$ are \emph{incident with} $xyz$.
We also say that $xyz$ is \emph{incident with} the vertices $x,y$ and $z$.
In this paper, all $P_3$s we mention are induced $P_3$s; we sometimes skip the qualifier ``induced'' for convenience.

\looseness=-1
Given an instance $(G,{\mathcal{H}},\ell)$ of \pCEA, if $xyz$ is a $P_3$ in $G$ and $xyz\in \mathcal{H}$, we say that $xyz$ is \emph{packed}, and we say that the edges $xy,yz$ are \emph{covered} by $xyz$ and the non-edge $xz$ is \emph{covered} by $xyz$. If an edge $xy$ is covered by some $P_3$ of $\mathcal{H}$, we say that $xy$ is a \emph{packed} edge. Otherwise we say that $xy$ is a \emph{non-packed} edge. If a non-edge $uv$ is covered by some $P_3$ of $\mathcal{H}$, we say that $uv$ is a \emph{packed} non-edge. Otherwise we say that $uv$ is a \emph{non-packed} non-edge.
If none of the edges of a path $P$ is packed, we say that the path $P$ is \emph{non-packed}.
If $xyz$ is a $P_3$ in $G$ and $Q_1$, $Q_2$, and $Q_3$ are pair-wise non-intersecting vertex sets of $G$, we say that $xyz$ \emph{connects $Q_1$ and $Q_3$ via $Q_2$} if the center $y$ of $xyz$ belongs to $Q_2$ and $x,z$ belong to $Q_1$ and $Q_3$, respectively.

We sometimes need finite fields of prime order.
Let $p$ be some prime.
By $\mathbb{F}_p$ we denote the finite field with the $p$ elements $0, \ldots, p - 1$ with addition and multiplication modulo~$p$.
Let $x \in \mathbb{F}_p$.
Where it is not ambiguous, $-x$ and $x^{-1}$ will denote the additive and multiplicative inverse, respectively, of $x$ in $\mathbb{F}_p$.

When we say that we relabel the vertices of a graph, we use $v \leftarrow u$ to denote that we relabel the vertex $v$ by the new label $u$.

\section{Intuition}\label{sec:intuition}

Before giving the hardness proof, it is instructive to determine some easy and difficult cases when solving \pCEA\ with $\ell = 0$.
This will give us an intuition about the underlying combinatorial problem that we need to solve.

\newcommand{\gfix}{\ensuremath{G_\textsf{fix}}} \newcommand{\gsol}{\ensuremath{G_\textsf{sol}}} Let $(G, \pkg, 0)$ be an instance of \pCEA.
It is helpful to consider the subgraph \gfix\ of $G$ that contains only those edges of $G$ that are not contained in any $P_3$ in \pkg, that is, the non-packed edges.
Suppose that $(G, \pkg, 0)$ has a solution~$S$ and let \gsol\ be the associated cluster graph.
Observe that each connected component of \gfix\ is part of a single cluster in~\gsol.
Let us hence call the connected components of \gfix\ \emph{proto-clusters}.
Our task in finding \gsol\ is thus indeed to find a vertex partition $\mathcal{P}$ that is coarser than the vertex partition given by the proto-clusters and that satisfies certain further conditions.
The additional conditions herein are given by the $P_3$s in $G$ and also by the non-edges of $G$ which are not contained in any $P_3$ in \pkg, that is, by the non-packed non-edges.
A non-packed non-edge between two proto-clusters implies that these proto-clusters cannot be together in a cluster in \gsol.
Hence, we are searching for a vertex partition $\mathcal{P}$ as above subject to the constraints that certain proto-cluster pairs end up in different parts.

\begin{figure}[t]
  \centering
  \begin{tikzpicture}[
    every edge/.style = {draw, semithick}
    ]

    \tikzstyle{cluster} = [circle, draw, minimum size = 1cm]

    \node [cluster, label = left:A] (A) at (-1,-1) {};
    \node [cluster, label = above:B] (B) at (0,0) {};
    \node [cluster, label = below:C] (C) at (1,-1) {};
    \node [cluster, label = above:D] (D) at (2,0) {};
    \node [cluster, label = right:E] (E) at (3,-1) {};

    \draw (B) edge [dashed] (D);

    \node [vertex] (a) at (-1,-1) {};
    \node [vertex] (b) at (0,0) {};
    \node [vertex] (c) at (.9,-1) {};
    \node [vertex] (d) at (1.1,-1) {};
    \node [vertex] (e) at (2,0) {};
    \node [vertex] (f) at (3,-1) {};

    \draw (b) edge (a) edge (c);
    \draw (e) edge (d) edge (f);
  \end{tikzpicture}
  \caption{Five proto-clusters $A$ through $E$ and two $P_3$s in the underlying graph and in the $P_3$-packing that connects $A$ to $C$ via $B$ and $C$ to $E$ via $D$, respectively. The dashed edge between $B$ and $D$ means that there is a non-packed non-edge between $B$ and $D$.}
  \label{fig:dividing+p3}
\end{figure}

The constraints on $\mathcal{P}$ given by $P_3$s in $G$ can be distinguished based on the intersection of the $P_3$s with the proto-clusters.
We only want to highlight two situations that are most relevant for the hardness construction.
The first situation is when a $P_3$, name it $P$, intersects with three proto-clusters $D_1$, $D_2$, and $D_3$, each in exactly one vertex and with center vertex in $D_2$.
The corresponding constraint on $\mathcal{P}$ is that either $D_1$ and $D_2$ are merged or $D_2$ and $D_3$ are merged into one cluster.
We can satisfy such constraints easily, in the absence of further constraints, by merging all proto-clusters into one large cluster.
However, together with the constraints from non-packed non-edges a difficult picture emerges.
Consider \cref{fig:dividing+p3}.
Proto-clusters $B$ and~$D$ cannot be merged into one cluster because of a non-packed non-edge.
However, there is a path in $G$ from $B$ to $D$ via vertices of $C$.
Hence, either $B$ and $C$ are in different clusters in \gsol\ or $C$ and $D$ are.
If $B$ and $C$ are in different clusters, then since we have only budget one for the $P_3$ involving $A$, $B$, and $C$, it follows that $A$ and $B$ are merged into one cluster in \gsol.
It is not hard to imagine that such behavior can be very non-local and in fact two different generalizations of this behavior form the basis for the variable and clause gadget in our hardness reduction.

The second case is when there is a $P_3$ in $G$ and also in the packing \pkg\ that has an edge contained in one proto-cluster~$A$ and the remaining vertex in a different proto-cluster~$B$.
Call this $P_3$ $P$.
Intuitively, regardless of whether $A$ and $B$ are merged into one cluster in \gsol, $P$ can be edited without excess cost over \pkg\ to accommodate this choice.
In our hardness reduction, a main difficulty will be to pad subconstructions with $P_3$s in the packing \pkg, so that we are able to find a solution with zero excess edits.
For this we will heavily use $P_3$s of the form that we just described.

\section{NP-hardness for tight modification-disjoint packings} \label{sec:lower-bound}

In this section, we prove Theorem \ref{main} by showing a reduction from the NP-hard problem of deciding satisfiability of 3-CNF formulas.
Given a 3-CNF formula $\Phi$, we construct a graph $G=(V,E)$ with a modification-disjoint packing~$\mathcal{H}$ of induced $P_{3}s$ such that $\Phi$ has a satisfying assignment if and only if $G$ has a cluster editing set $S$ which consists of exactly one vertex pair of each $P_{3}$ in \pkg.
In other words, the \pCEA\ instance $(G, \pkg, 0)$ is a YES-instance.
We assume that every clause of $\Phi$ has exactly $3$ literals of pair-wise different variables as we can preprocess the formula to achieve this in polynomial time otherwise.
Similarly, we can assume that every variable of $\Phi$ appears at least twice.
In the following, we let $m$ denote the number of clauses in $\Phi$, denote the clauses of $\Phi$ by $\Gamma_0,\ldots,\Gamma_{m-1}$, let $n$ be the number of variables, and denote the variables of $\Phi$ by $x_0,\ldots,x_{n-1}$.
Furthermore, we let $m_i$ denote the number of clauses that contain the variable $x_i$, $i \ = 0, \ldots, n-1$.

\subsection{Construction}\label{sec:construction}

The outline of our construction is as follows.
In \cref{sec:var-gadget,sec:clause-gadget} we explain the basic construction of the variable and clause gadgets.
In these two sections we first show how to construct a subgraph of the final construction that enables us to show the soundness, that is, if the \pCEA\ instance is a yes-instance, then $\Phi$ is satisfiable.
The main difficulty is then to extend this construction so that the completeness also holds.
This we do in \cref{sec:merge-model,sec:implementation}.
\Cref{sec:completeness,sec:soundness} then contain the correctness proof.

Both the variable gadget and the clause gadget rely on some ideas outlined in \cref{sec:intuition}.
Our basic building blocks will be proto-clusters.
A proto-cluster is a subgraph that is connected through edges that are not contained in any $P_3$ in the constructed packing~\pkg.
The proto-clusters then have to be joined into larger clusters in a way that represents a satisfying assignment to $\Phi$.
The variable gadget basically consists of an even-length cycle of proto-clusters, connected by $P_3$s so that either odd or even pairs of proto-clusters on the cycle have to be merged.
These two options represent a truth assignment.
The construction of the variable gadget is more involved than a simple cycle of proto-clusters, however, because of the connection to the clause gadgets: We need to ensure that all vertex pairs between certain proto-clusters of a variable and clause gadget are covered by $P_3$s in \pkg, so to be able to merge these clusters in the completeness proof.
The way in which we cover these vertex pairs imposes some constraints on the construction of the variable gadgets, making the gadgets more complicated.

\subsubsection{Variable gadget}\label{sec:var-gadget}
As mentioned, a variable will be represented by a cycle of proto-clusters such that any solution needs to merge either each odd or each even pair of consecutive proto-clusters.
These two options represent the truth value assigned to the variable.
In order to enable both associated solutions with zero edits above the packing lower bound, we build an associated packing of $P_3$s such that all vertex pairs between consecutive proto-clusters are covered by a $P_3$ in the packing.
It would be tempting to make each proto-cluster a single vertex.
However, due to the connections to the clause gadget later on, we need proto-clusters containing five vertices each.

Recall that $m_i$ denotes the number of clauses that contain the variable $x_i$, $i = 0, 1, \ldots, n-1$.
For each variable~$x_i$, $i=0, 1,\ldots,n-1$, we create $4m_i$ vertex-disjoint cliques with~$5$ vertices each, namely $K_{0}^{i},\ldots,K_{4m_i-1}^{i}$.
In each $K_{j}^{i}$, $j = 0, 1, \ldots, 4m_i-1$, the vertices are $v_{j,0}^{i},\ldots,v_{j,4}^{i}$.
For each $j=0,2,\ldots,4m_{i}-2$, we create $P_{3}$s connecting $K_{j}^{i},K_{j+1}^{i}$ and $K_{j+2}^{i}$ (where we identify $K_{0}^{i}$ as $K_{4m_{i}}^{i}$) as we explain below, adding all edges between each two consecutive cliques.

Throughout the construction, the cliques we have just introduced will remain proto-clusters, that is, they contain a spanning tree of edges that are not covered by $P_3$s in the packing~\pkg.
We now add pairwise modification-disjoint $P_3$s so as to cover all edges between the cliques~$K_j^i$ we have just introduced.
Recall that $\mathbb{F}_5$ is the finite field of the integers modulo~5.
We take three consecutive cliques and add $P_3$s with one vertex in each of the three cliques.
To do this without overlapping two $P_3$s, we think about the cliques' vertices as elements of $\mathbb{F}_5$ and add a $P_3$ for each possible arithmetic progression.
That is, in each added $P_3$ the difference of the first two elements of the $P_3$ is equal to the difference of the second two elements.
In this way, each vertex pair is contained in a single $P_3$ since the third element is uniquely defined by the arithmetic progression.

Formally, for each $j=0,2,\ldots,4m_{i}-2$ and every triple of elements $p,q,r \in \mathbb{F}_5$ satisfying the equality $q - p = r - q$ over $\mathbb{F}_5$, we add to the graph the edges $v_{j, p}^{i}v_{j+1, q}^{i}$ and $v_{j+1, q}^{i} v_{j+2, r}^{i}$ and we add to the packing $\mathcal{H}$ the $P_3$ given by $v_{j, p}^i v_{j+1, q}^i v_{j+2, r}^i$.
Note that in this manner the clique $K_{j+1}^i$ becomes fully adjacent to $K_j^i$ and to $K_{j+2}^i$ while $K_{j+1}^i$ stays anti-adjacent to all other cliques $K_{j'}^i$.

Observe that the $P_3$s given by $v_{j, p}^{i}v_{j+1,q}^{i}v_{j+2, r}^{i}$ for $j=0,2,\ldots,4m_{i}-2$ such that $q - p = r - q$ are pairwise modification-disjoint: For each $j=0,2,\ldots,4m_{i}-2$, an arbitrary edge just introduced between $K_j^i$ and $K_{j + 1}^i$ has the form $\{v_{j, p}^{i}, v_{j+1, q}^{i}\}$ for some $p,q\in \mathbb{F}_5$.
It belongs to the unique $P_3$ given by $v_{j, p}^{i}v_{j+1, q}^{i}v_{j+2, r}^{i}$, where $r = 2q - p$.
Similarly, an arbitrary edge $\{v_{j+1, q}^{i}, v_{j+2, r}^{i}\}$ for $q,r\in \mathbb{F}_5$ belongs to the unique $P_3$ given by $v_{j, 2q-r}^{i}v_{j+1, q}^{i}v_{j+2, r}^{i}$ and an arbitrary non-edge $\{v_{j, p}^{i}, v_{j+2, r}^{i}\}$ for $p,r\in \mathbb{F}_5$ belongs to the unique $P_3$ given by $v_{j, p}^{i}v_{j+1, (p+r)\cdot 2^{-1}}^{i}v_{j+2, r}^{i}$, where $2^{-1}$ is the multiplicative inverse of $2$ over $\mathbb{F}_5$, that is, $2^{-1} = 3$.

After this construction, we set the modification-disjoint packing of the variable gadget to be
$${\mathcal{H}}_{\textsf{var}}=\{P_{3}\text{ given by }v_{j, p}^{i}v_{j+1, q}^{i}v_{j+2, r}^{i}~|~i=0,\ldots,n-1\text{; }j=0,2,\ldots,4m_{i}-2\text{; } p,q,r\in\mathbb{F}_5 \text{; and } q - p = r - q \}.$$
This finishes the first stage of the construction.
Notice that the cliques $K_j^i$ form a cyclic structure.
Intuitively, every second pair of cliques needs to be merged into one cluster by any solution due to the $P_3$s we have introduced, and we will see that the two resulting solutions are in fact the only ones.
The truth values of the variable are then represented as follows.
For every variable $x_i$, $i=0,\ldots,n-1$, if $K_{j}^{i}$ and $K_{j+1}^{i}$ are merged for $j=0, 2,\ldots,4m_{i}-2$, then this represents the situation that we assign false to the variable $x_i$.
If $K_{j+1}^{i}$ and $K_{j+2}^{i}$ are merged for $j=0, 2,\ldots,4m_{i}-2$, then this represents variable~$x_i$ being true.
We will make minor modifications to the variable gadgets and ${\mathcal{H}}_{\textsf{var}}$ in the following section, so as to transmit the choice of truth value to the clause gadgets.

\subsubsection{Skeleton of the clause gadget}\label{sec:clause-gadget}

In order to introduce the construction of the clause gadget, we first give a description of the skeleton of the clause gadget.
The skeleton is a subgraph of the final construction that allows us to prove the soundness.
The final construction is given in the succeeding sections.
We give a picture of the skeleton in Fig.~\ref{fig1}.
The basic idea is a generalization of the idea explained in \cref{sec:intuition}: A clause $\Gamma_d$ is represented by four proto-clusters (cliques), $Q^i_\cl{}$, $i = 1, \ldots, 4$, as in Fig.~\ref{fig1}.
The proto-clusters are connected by a path~$P$ of length $5$ containing vertices of $Q^1_\cl{}$, $Q^2_\cl{}$, $Q^3_\cl{}$, and $Q^4_\cl{}$ in that order.
However, between $Q^1_\cl{}$ and $Q^4_\cl{}$ there is a non-packed non-edge, meaning that every solution has to cut the path~$P$ by deleting all edges between $Q^1_\cl{}$ and $Q^2_\cl{}$, or between $Q^2_\cl{}$ and $Q^3_\cl{}$, or between $Q^3_\cl{}$ and $Q^4_\cl{}$.
We use this three-way choice to force the solution to select a variable that satisfies the clause~$\Gamma_\cl$.

\begin{figure}[tbp]
  \centering
  \scalebox{0.85}{\begin{tikzpicture}[scale=0.5]
		\node [style=b] (0) at (-3.25, 1.5) {};
		\node [style=b] (1) at (-3.25, 4.75) {};
		\node [style=bs] (2) at (-8.75, -1.25) {};
		\node [style=b] (3) at (-7.25, 3.25) {};
		\node [style=b] (4) at (-0.25, 4.75) {};
		\node [style=bs] (5) at (-12.25, 2) {};
		\node [style=bs] (6) at (-10, 1) {};
		\node [style=bs] (7) at (-17.25, 0.25) {};
		\node [style=b] (8) at (-0.25, 1.5) {};
		\node [style=bs] (9) at (-15, 1.75) {};
		\node [style=b] (11) at (-3.25, 4.75) {};
		\node [style=b] (12) at (4.75, 3.75) {};
		\node [style=b,label={[label distance=-1mm]90:$K_{4\pi(\vr{},\cl{})+2}^{\vr{}}$}] (13) at (10.25, 3.25) {};
		\node [style=b,label={[label distance=-2mm]190:$K_{4\pi(\vr{},\cl{})+1}^{\vr{}}$}] (14) at (7.75, 2) {};
		\node [style=b,label={[label distance=-1mm]180:$K_{4\pi(\vr{},\cl{})}^{\vr{}}$}] (15) at (7.75, -1) {};
		\node [style=b] (16) at (0, -2.5) {};
		\node [style=b] (17) at (-1, -7.5) {};
		\node [style=b] (18) at (1, -5.25) {};
		\node [style=b] (19) at (4.000001, -6) {};

		\node [style=vertex] (20) at (-3.25, 4.75) {};
		\node [style=vertex] (21) at (-7, 3.5) {};
		\node [style=vertex] (22) at (-7, 3) {};
		\node [style=vertex] (23) at (-3.5, 1.75) {};
		\node [style=vertex] (24) at (-3, 1.75) {};
		\node [style=vertex] (25) at (-0.25, 1.5) {};
		\node [style=vertex] (26) at (-3, 1.5) {};
		\node [style=vertex] (27) at (-3.25, 1.25) {};
		\node [style=vertex] (28) at (-0.2499996, -2.25) {};
		\node [style=vertex] (29) at (0.2499996, -2.25) {};
		\node [style=vertex] (30) at (-0.25, 4.75) {};
		\node [style=vertex] (31) at (-0.5, 1.75) {};
		\node [style=vertex] (32) at (0, 1.75) {};
		\node [style=vertex] (33) at (4.5, 4) {};
		\node [style=vertex] (34) at (4.5, 3.5) {};

		\node [style=largeVertex] (35) at (-7.5, 3.25) {};
		\node [style=largeVertex] (36) at (-9.75, 1) {};
		\node [style=largeVertex] (37) at (-8.6, -1.25) {};

		\node [style=largeVertex] (38) at (5, 3.75) {};
		\node [style=largeVertex] (39) at (7.75, 2.25) {};
		\node [style=largeVertex] (40) at (7.75, -1) {};
		\node [style=largeVertex] (41) at (0.2499996, -2.75) {};
		\node [style=largeVertex] (42) at (1, -5) {};
		\node [style=largeVertex] (43) at (-1, -7.4) {};
		
    \node [style=bs] (67) at (-17.75, -5.25) {};
		\node [style=bs] (68) at (-16, -7.25) {};
		\node [style=bs] (69) at (-13.5, -8) {};
		\node [style=bs] (70) at (-10.75, -7.5) {};
		\node [style=bs] (71) at (-9.000001, -6) {};
		\node [style=bs] (72) at (-8.5, -3.75) {};
		\node [style=bs] (73) at (-18.25, -2.5) {};

		\node  (44) at (-3.25, 6.5) {$Q_{\cl{}}^{1}$};
		\node  (45) at (-0.25, 6.5) {$Q_{\cl{}}^{4}$};
		\node (46) at (-4.7, 0.75) {$Q_{\cl{}}^{2}$};
		\node (47) at (1.2, 0.75) {$Q_{\cl{}}^{3}$};
		\node  (48) at (-7.25, 5) {$T_{\cl{}}^{\vp{}}$};
		\node  (49) at (4.75, 5.5) {$T_{\cl{}}^{\vr{}}$};
		\node (50) at (-1.6, -2.2) {$T_{\cl{}}^{\vq{}}$};

		\node  (51) at (-12.5, 3.5) {\small $K_{4\pi(\vp{},\cl{})}^{\vp{}}$};
		\node  (52) at (-7.5, 1) {\small $K_{4\pi(\vp{},\cl{})+1}^{\vp{}}$};
		\node  (53) at (-6.3, -1.5) {$K_{4\pi(\vp{},\cl{})+2}^{\vp{}}$};
		\node  (76) at (-6, -3.75) {$K_{4\pi(\vp{},\cl{})+3}^{\vp{}}$};

		\node [style=gn] (74) at (-11, 2.25) {\textbf{F}};
		\node [style=gn] (75) at (-8.75, 0.25) {\textbf{T}};
		\node [style=gn] (76) at (-8.000001, -2.5) {\textbf{F}};
		\node [style=gn] (77) at (-7.75, -5) {\textbf{T}};
		\node [style=gn] (78) at (-9.25, -7.5) {\textbf{F}};
		\node [style=gn] (79) at (-12, -8.75) {\textbf{T}};
		\node [style=gn] (80) at (-15.25, -8.5) {\textbf{F}};
		\node [style=gn] (81) at (-17.5, -7) {\textbf{T}};
		\node [style=gn] (82) at (-19, -4.25) {\textbf{F}};
		\node [style=gn] (83) at (-18.75, -0.9999999) {\textbf{T}};
		\node [style=gn] (84) at (-16.75, 1.75) {\textbf{F}};
		\node [style=gn] (85) at (-13.75, 2.75) {\textbf{T}};

	   \node [style=none] (60) at (-1.5, -4.5) {$K_{4\pi(\vq{},\cl{})+1}^{\vq{}}$};
		\node [style=none] (61) at (4.3, -4.4) {\small $K_{4\pi(\vq{},\cl{})+2}^{\vq{}}$};
		\node [style=none] (62) at (-3.3,-7.5) {$K_{4\pi(\vq{},\cl{})}^{\vq{}}$};
		\node [style=none] (63) at (2.5, -4.75) {\textbf{T}};
		\node [style=none] (64) at (-0.45, -6) {\textbf{F}};
        \node [style=none] (65) at (8.9,3.4) {\textbf{T}};
        \node [style=none] (65) at (7.0,0.5) {\textbf{F}};
		\draw [style=normal] (20) to (21);
		\draw [style=normal] (20) to (22);
		\draw [style=normal] (20) to (23);
		\draw [style=normal] (24) to (20);
		\draw [style=normal] (25) to (26);
		\draw [style=normal] (25) to (27);
		\draw [style=normal] (25) to (28);
		\draw [style=normal] (25) to (29);
		\draw [style=normal] (30) to (31);
		\draw [style=normal] (30) to (32);
		\draw [style=normal] (30) to (33);
		\draw [style=normal] (30) to (34);

		\draw [style=thickEdge] (5) to (6);
		\draw [style=thickEdge] (6) to (2);
		\draw [style=thickEdge] (9) to (7);
		\draw [style=thickEdge] (9) to (5);
		\draw [style=thickEdge] (18) to (17);
		\draw [style=thickEdge] (18) to (19);
		\draw [style=thickEdge] (14) to (15);
		\draw [style=thickEdge] (14) to (13);

		\draw [style=thickGreen] (35) to (36);
		\draw [style=thickGreen] (36) to (37);
		\draw [style=thickGreen] (38) to (39);
		\draw [style=thickGreen] (39) to (40);
		\draw [style=thickGreen] (42) to (41);
		\draw [style=thickGreen] (42) to (43);
		
    \draw [style=thickEdge](67) to (68);
		\draw [style=thickEdge](68) to (69);
		\draw [style=thickEdge](69) to (70);
		\draw [style=thickEdge](70) to (71);
		\draw [style=thickEdge](71) to (72);
		\draw [style=thickEdge](2) to (72);
		\draw [style=thickEdge] (73) to (7);
		\draw [style=thickEdge](73) to (67);

        \draw[fill=black,fill opacity=0.8] (35) ellipse (6pt and 13pt);
        \draw[fill=black,rotate=45,fill opacity=0.8] (36) ellipse (6pt and 13pt);
        \draw[fill=black,rotate=90,fill opacity=0.8] (37) ellipse (6pt and 13pt);

        \draw[fill=black,fill opacity=0.8] (38) ellipse (6pt and 13pt);
        \draw[fill=black,rotate=-45,fill opacity=0.8] (39) ellipse (6pt and 13pt);
        \draw[fill=black,rotate=-90,fill opacity=0.8] (40) ellipse (6pt and 13pt);

        \draw[fill=black,fill opacity=0.8,rotate=90] (41) ellipse (6pt and 13pt);
        \draw[fill=black,rotate=90,fill opacity=0.8] (42) ellipse (6pt and 13pt);
        \draw[fill=black,rotate=90,fill opacity=0.8] (43) ellipse (6pt and 13pt);

		\draw [gray, line width=3.5mm,dotted] (17) to [bend right=60](19);
		\draw [gray, line width=3.5mm,dotted] (15) to [bend right=60](13);

        \draw [blue, dotted, line width=0.3mm] (23) to [bend right=60](-3.75,1) to [bend right=60] (27);
        \draw [blue, dotted, line width=0.3mm] (25) to [bend right=60](0.4,1.25) to [bend right=60] (32);
\end{tikzpicture}}
        \caption{Skeleton of a clause gadget $\Gamma_{\cl{}}=(x_{\vp}\lor \neg x_{\vq{}}\lor \neg x_{\vr{}})$. The white circles represent cliques. The blue dotted lines inside $Q_\cl^2$ and $Q_\cl^3$ indicate that $Q_\cl^1$, $Q_\cl^2$, $Q_\cl^3$, and $Q_\cl^4$ are in one connected component. A pair of incident brown thick lines indicates a set of four transferring $P_3$s used to connect a clause gadget to a variable gadget. The cycles made from cliques and gray thick lines represent variable gadgets, where a dashed gray line indicates an omitted part of the cycle. The cycle for variable $x_\vp$ is shown completely, where we assume that $m_{\vp}=3$, that is, variable $x_\vp$ is in three clauses.
          Labels \textbf{T} and \textbf{F} on thick gray edges indicate the pairs of cliques that shall be merged into one cluster if the variable is to be set to \true\ or \false, respectively.} \label{fig1}\label{fig:skeleton}
\end{figure}

\paragraph{Main gadget.} Formally, for each variable $x_i$, $i=0,1,\ldots,n-1$, we fix an arbitrary ordering of the clauses that contain~$x_i$.
If a clause $\Gamma_j$ contains a variable $x_i$, let $\pi(i,j)\in\{0,\ldots,m_i-1\}$ denote the position of the clause $\Gamma_j$ in this ordering.
Let initially ${\mathcal{H}}_{\textsf{tra}}=\emptyset$.
For each clause $\Gamma_{\cl{}}$ ($\cl{}=0,\ldots,m-1$) proceed as follows.
We first introduce four cliques $Q_{\cl{}}^{1}, Q_{\cl{}}^{2}, Q_{\cl{}}^{3}$ and $Q_{\cl{}}^{4}$.
Let $\Gamma_\cl{}$ contain the variables $x_{\vp{}}, x_{\vq{}}$ and $x_{\vr{}}$.
We introduce the cliques $T_{\cl{}}^{\vp{}}, T_{\cl{}}^{\vq{}}$ and $T_{\cl{}}^{\vr{}}$, called \emph{transferring cliques}.
All of the cliques introduced are pairwise vertex disjoint and can be of different sizes.
We will give the exact sizes in \cref{sec:implementation}.

Next, we introduce the following $P_3$s:
\begin{itemize}
\item Introduce two $P_3$s, $P_{\cl{}}^{1}$ and $P_{\cl{}}^{2}$, that both connect $T_{\cl{}}^{\vp{}}$ and $Q_{\cl{}}^{2}$ via $Q_{\cl{}}^{1}$, such that $P_{\cl{}}^{1}$ and $P_{\cl{}}^{2}$ share the same vertex in~$Q_{\cl{}}^{1}$.
\item Introduce two $P_3$s, $P_{\cl{}}^{3}$ and $P_{\cl{}}^{4}$, that both connect $T_{\cl{}}^{\vq{}}$ and $Q_{\cl{}}^{2}$ via $Q_{\cl{}}^{3}$, such that $P_{\cl{}}^{3}$ and $P_{\cl{}}^{4}$ share the same vertex in $Q_{\cl{}}^{3}$.
\item Introduce two $P_3$s, $P_{\cl{}}^{5}$ and $P_{\cl{}}^{6}$, that both connect $T_{\cl{}}^{\vr{}}$ and $Q_{\cl{}}^{3}$ via $Q_{\cl{}}^{4}$, such that $P_{\cl{}}^{5}$ and $P_{\cl{}}^{6}$ share the same vertex in $Q_{\cl{}}^{4}$.
\end{itemize}
All the $P_3$s $P_\cl^i$ are pairwise vertex-disjoint except for the pairs sharing the center (as explicitly mentioned in the description).
We add each $P_{\cl{}}^{i}$ for $i=1,\ldots,6$ to ${\mathcal{H}}_{\textsf{tra}}$.
We call the $P_{3}$s of ${\mathcal{H}}_{\textsf{tra}}$ \emph{transferring $P_{3}$s}.

\begin{figure}
\centering
\begin{tikzpicture}
		\node [style=varGadget,label={[xshift=0.5mm, yshift=-0.5mm]$v_{5}$}] (0) at (-7, 1) {};
		\node [style=varGadget,label={[label distance=-2.5mm]95:$v_{1}$}] (1) at (-5, 1.5) {};
		\node [style=varGadget] (2) at (-3, 1) {};
		\node [style=varGadget,label={[label distance=-.5mm]270:$v_{6}$}] (3) at (-7, 0) {};
		\node [style=varGadget,label={[label distance=0mm]270:$v_{2}$}] (4) at (-5, 0.5) {};
		\node [style=varGadget,label={[label distance=0mm]0:$v_{3}$}] (5) at (-3, 0) {};
		\node [style=varGadget] (6) at (-7, -1) {};
		\node [style=varGadget] (7) at (-5, -0.5) {};
		\node [style=varGadget,label={[label distance=-1mm]0:$v_{4}$}] (8) at (-3, -1) {};
		\node [style=varGadget,label={[label distance=-1mm]135:$v_{7}$}] (9) at (-7, -2) {};
		\node [style=varGadget] (10) at (-5, -1.5) {};
		\node [style=varGadget] (11) at (-3, -2) {};
		\node [style=varGadget,label={[label distance=-1mm]270:$v_{8}$}] (12) at (-7, -3) {};
		\node [style=varGadget] (13) at (-5, -2.5) {};
		\node [style=varGadget] (14) at (-3, -3) {};
		\node [style=varGadget,label={[label distance=-1mm]90:$w_{1}$}] (15) at (-5, 3.25) {};
		\node [style=varGadget,label={[label distance=-1mm]90:$w_{2}$}] (16) at (-4.25, 3.25) {};
		\node [style=varGadget,label={[label distance=-1mm]90:$w_{3}$}] (17) at (-3.5, 3.25) {};
		\node [style=varGadget,label={[label distance=-1mm]90:$w_{4}$}] (18) at (-2.4, 3.25) {};

		\draw [style=trasEdge1] (15) to (1);
		\draw [style=trasEdge1] (1) to (5);
		\draw [style=trasEdge1,dotted] (15) to (5);
		\draw [style=trasEdge2] (16) to (4);
		\draw [style=trasEdge2] (4) to (8);
		\draw [style=trasEdge2,dotted] (16) to (8);
		\draw [style=trasEdge3] (17) to (4);
		\draw [style=trasEdge3] (4) to (5);
		\draw [style=trasEdge3,dotted] (17) to (5);
		\draw [style=trasEdge4] (18) to (1);
		\draw [style=trasEdge4] (1) to (8);
		\draw [style=trasEdge4,dotted] (18) to (8);
		\draw [style=varGadEdge] (0) to (1);
		\draw [style=varGadEdge] (1) to (2);
		\draw [style=varGadEdge] (6) to (7);
		\draw [style=varGadEdge] (7) to (8);
		\draw [style=varGadEdge] (9) to (10);
		\draw [style=varGadEdge] (10) to (11);
		\draw [style=varGadEdge] (12) to (13);
		\draw [style=varGadEdge] (13) to (14);
		\draw [style=newP3Dot] (1) to (12);
		\draw [style=newP3] (1) to (9);
		\draw [style=newP3] (4) to (3);
		\draw [style=newP3Dot] (4) to (0);
		\draw [style=newP3] (0) to (3);
		\draw [style=newP3] (9) to (12);

       \draw[draw=black] (6) ellipse (20pt and 80pt);
       \draw[draw=black] (7) ellipse (20pt and 80pt);
       \draw[draw=black] (8) ellipse (20pt and 80pt);
       \draw[draw=black] (-3.75, 3.6) ellipse (65pt and 25pt);

		\node [style=none] (19) at (-8.35, -1) {$K_{4\pi(\vp{},\cl{})}^{\vp{}}$};
		\node [style=none] (20) at (-6, 2.3) {$K_{4\pi(\vp{},\cl{})+1}^{\vp{}}$};
		\node [style=none] (21) at (-1.45, -1.25) {$K_{4\pi(\vp{},\cl{})+2}^{\vp{}}$};
		\node [style=none] (22) at (-6.35, 3.75) {$T_{\cl{}}^{\vp{}}$};
		\node [style=varGadget] (23) at (-4.25, 4) {};
		\node [style=varGadget] (24) at (-3.5, 4) {};

\end{tikzpicture}
\caption{Connection of a clause gadget with a variable gadget for
  a variable~$x_\vp{}$ which appears positively in the clause.
  White ellipses represent cliques.
  The vertices in the cliques in the variable gadget are ordered from top to bottom according to the elements of $\mathbb{F}_5$ which they represent.
  For example, the topmost vertex in $K_{4\pi(\vp{}, \cl{})}^\vp{}$ is $v_{4\pi(\vp{}, \cl{}), 0}^\vp{}$ (corresponding to $0 \in \mathbb{F}_5$) and the bottom-most is $v_{4\pi(\vp{}, \cl{}), 4}^\vp{}$ (corresponding to $4 \in \mathbb{F}_5$).
  The gray lines adjacent to cliques in the variable gadget represent some of the $P_3$s that were introduced into the variable gadgets in the beginning.
  (Some gray lines are super-seeded by edges of other colors.)
  The $P_3$s represented by the gray lines have the associated arithmetic progression ``$+ 0$'', that is, $q - p = r - q = 0$ in the definition of the $P_3$s.
  The $P_3$s for the remaining arithmetic progressions are omitted for clarity.
  In colors red, black, green, and blue we show the $P_3$s that connect the transferring clique $T_{\cl{}}^\vp{}$ with the variable gadget of variable~$x_\vp{}$.
  Herein, dotted lines are non-edges and solid lines are edges.
  Note that these connecting $P_3$s supplant some of the edges of previously present $P_3$s in the variable gadget---the previously present $P_3$s are then removed from both $G$ and \pkg.
  For example the green $P_3$ replaces the edge $v_2 v_3$ of the $P_3$ given by $v_6 v_2 v_3$ that was previously present.
  To maintain that each vertex pair between consecutive cliques in the variable gadget is covered by some $P_3$ in the packing, we add the two brown $P_3$s.}
\label{fig3}\label{fig:connection}
\end{figure}

\paragraph{Connection to the variable gadgets.}
Next we connect the transferring cliques $T_{\cl{}}^{\vp{}}$, $T_{\cl{}}^{\vq{}}$, and $T_{\cl{}}^{\vr{}}$ to the variable gadgets of $x_\vp{}$, $x_\vq{}$, and $x_\vr{}$, respectively.
To avoid additional notation, we only explain the procedure for $T_{\cl{}}^\vp{}$ and $x_\vp{}$, the other pairs are connected analogously.
We connect $T_{\cl{}}^\vp{}$ to the variable gadget of $x_\vp{}$ by a set of four modification-disjoint $P_{3}$s as shown in Fig.~\ref{fig3} and explained formally below.
The centers of these $P_{3}$s are in $K_{4\pi(\vp{},\cl{})+1}^{\vp{}}$.
For each of these four $P_{3}$s, exactly one endpoint is an arbitrary distinct vertex in $T_{\cl{}}^{\vp{}}$ which is different from the endpoints of the $P_3$s connecting $T_\cl^\vp$ to $Q_\cl^1$; we denote these endpoints as $w_1,w_2,w_3,w_4$.
The other endpoint is in $K_{4\pi(\vp{},\cl{})+2}^{\vp{}}$ if $x_{\vp{}}$ appears positively in $\Gamma_{\cl{}}$ and the other endpoint is in $K_{4\pi(\vp{},\cl{})}^{\vp{}}$ otherwise.
The precise centers and endpoints in the cliques $K_{4\pi(\vp{},\cl{})+2}^{\vp{}}$ or $K_{4\pi(\vp{},\cl{})}^{\vp{}}$ are specified below.
Since these newly introduced $P_3$s use edges that belong to some $P_3$s in ${\mathcal{H}}_{\textsf{var}}$ that were introduced while constructing the variable gadgets, we will remove such $P_3$s in the variable gadget from ${\mathcal{H}}_{\textsf{var}}$, remove their corresponding edges from the graph, and add some new $P_3$s to~${\mathcal{H}}_{\textsf{var}}$ as described below.
As a result, the clique $K_{4\pi(\vp{},\cl{})+1}^\vp{}$ may no longer be fully adjacent to $K_{4\pi(\vp{},\cl{})}^\vp{}$ or $K_{4\pi(\vp{},\cl{})+2}^\vp{}$.
We will however maintain the invariant that each vertex pair between $K_{4\pi(\vp{},\cl{})+1}^\vp{}$ and $K_{4\pi(\vp{},\cl{})}^\vp{}$ or $K_{4\pi(\vp{},\cl{})+2}^\vp{}$ is covered by a $P_3$ in the packing and that all the $P_3$s of ${\mathcal{H}}_{\textsf{var}}$ are pairwise modification-disjoint.

Formally, if $x_\vp{}$ appears positively in $\Gamma_\cl{}$, we denote:
\begin{align*}
v_1 &= v_{4\pi(\vp{},\cl{})+1,0}^\vp{} & v_2 &= v_{4\pi(\vp{},\cl{})+1,1}^\vp{} \\
v_3 &= v_{4\pi(\vp{},\cl{})+2,1}^\vp{} & v_4 &= v_{4\pi(\vp{},\cl{})+2,2}^\vp{} \\
v_5 &= v_{4\pi(\vp{},\cl{}),0}^\vp{} & v_6 &= v_{4\pi(\vp{},\cl{}),1}^\vp{} \\
v_7 &= v_{4\pi(\vp{},\cl{}),3}^\vp{} & v_8 &= v_{4\pi(\vp{},\cl{}),4}^\vp{}.
\end{align*}
If $x_\vp{}$ appears negatively in $\Gamma_\cl{}$, we swap the roles of $K_{4\pi(\vp{},\cl{})}^\vp{}$ and $K_{4\pi(\vp{},\cl{})+2}^\vp{}$, that is:
\begin{align*}
v_1 &= v_{4\pi(\vp{},\cl{})+1,0}^\vp{} & v_2 &= v_{4\pi(\vp{},\cl{})+1,1}^\vp{} \\
v_3 &= v_{4\pi(\vp{},\cl{}),1}^\vp{} & v_4 &= v_{4\pi(\vp{},\cl{}),2}^\vp{} \\
v_5 &= v_{4\pi(\vp{},\cl{})+2,0}^\vp{} & v_6 &= v_{4\pi(\vp{},\cl{})+2,1}^\vp{} \\
v_7 &= v_{4\pi(\vp{},\cl{})+2,3}^\vp{} & v_8 &= v_{4\pi(\vp{},\cl{})+2,4}^\vp{}.
\end{align*}
As shown in Fig.~\ref{fig3}, we remove $P_3$s given by $v_8v_1v_3, v_7v_1v_4, v_6v_2v_3, v_5v_2v_4$ from ${\mathcal{H}}_{\textsf{var}}$ and we remove their corresponding edges from the graph. Then we add the $P_3$s given by $v_5v_6v_2$ and $v_1v_7v_8$ to the graph and to~${\mathcal{H}}_{\textsf{var}}$.
Finally, we connect $T_\cl^\vp$ via $K_{4\pi(\vp{},\cl{})+1}^\vp{}$ by adding the $P_3$s given by $w_1v_1v_3$, $w_2v_2v_4$, $w_3v_2v_3$, and $w_4v_1v_4$ to the graph and to $\mathcal{H}_{\textsf{tra}}$.
Note that, indeed, each vertex pair between $K_{4\pi(\vp{},\cl{})+1}^\vp{}$ and $K_{4\pi(\vp{},\cl{})}^\vp{}$ and between $K_{4\pi(\vp{},\cl{})+1}^\vp{}$ and $K_{4\pi(\vp{},\cl{})+2}^\vp{}$ remains covered by a $P_3$ in the packing after replacing all~$P_3$s.
This finishes the construction of the skeleton of the clause gadgets.

\medskip

The intuitive idea behind the connection to the variable gadget and how it is used in the soundness proof is as follows.
Recall from above that we need to delete at least one of three sets of edges in the solution, namely the edges between $Q_{\cl{}}^{1}$ and $Q_{\cl{}}^{2}$, the edges between $Q_{\cl{}}^{2}$ and $Q_{\cl{}}^{3}$, or the edges between $Q_{\cl{}}^{3}$ and~$Q_{\cl{}}^{4}$.
Assume that the edges between $Q_{\cl{}}^{1}$ and $Q_{\cl{}}^{2}$ are deleted and the variable $x_{\vp{}}$ appears positively in the clause~$\Gamma_{\cl{}}$ as in Fig.~\ref{fig1}.
Since we can modify at most one vertex pair for each of the $P_3$s $P_{\cl{}}^{1}$ and $P_{\cl{}}^{2}$, cliques $T_{\cl{}}^{\vp{}}$ and $Q_{\cl{}}^{1}$ have to be merged in the final cluster graph.
Since $K_{4\pi(\vp{},\cl{})+1}^{\vp{}}$ cannot be merged with $Q_{\cl{}}^{1}$ (there are no edges between $Q_{\cl{}}^{1}$ and $K_{4\pi(\vp{},\cl{})+1}^{\vp{}}$, and no $P_3$s connecting $Q_{\cl{}}^{1}$ and $K_{4\pi(\vp{},\cl{})+1}^{\vp{}}$), we have to separate $T_{\cl{}}^{\vp{}}$ from $K_{4\pi(\vp{},\cl{})+1}^{\vp{}}$.
Then, the $P_{3}$s connecting $T_{\cl{}}^{\vp{}}$ with $K_{4\pi(\vp{},\cl{})+2}^{\vp{}}$ force $K_{4\pi(\vp{},\cl{})+1}^{\vp{}}$ and $K_{4\pi(\vp{},\cl{})+2}^{\vp{}}$ to merge.
This means $x_{\vp{}}$ is true and it satisfies the clause $\Gamma_{\cl{}}$.

The $P_3$s added so far are indeed sufficient to conduct a soundness proof of the above reduction: They ensure that there exists a satisfying assignment to the input formula provided that there exists an appropriate cluster editing set.
However, the completeness is much more difficult: We need to add some more ``padding'' $P_3$s to the packing (and edges to the graph between the cliques that can be potentially merged) to ensure that a satisfying assignment can always be translated into a cluster-editing set.
The goal of the next two sections is to develop a methodology of padding such cliques with $P_3$s in the packing.
The padding will rely on the special structure of $P_3$s that we have established above in the clause gadget and connection between clause and variable gadget.

\subsubsection{Merging model of the clause gadget}\label{sec:merge-model}

In the sections above, we have defined all proto-clusters of the final constructed graph: As we will see in the correctness proof, each clique will be a proto-cluster in the end.
Thus, all solutions will construct a cluster graph whose clusters represent a coarser partition than the partition given by the proto-clusters, or cliques.
What remains is to ensure that the proto-clusters indeed can be merged as required to construct a solution from a satisfying assignment to $\Phi$ in the completeness proof.
To do this, we pad the proto-clusters with $P_3$s (in the graph and packing $\pkg$).
To simplify this task we now divide the set of proto-clusters into five levels $L_0, \ldots, L_4$.
Then, we will go through the levels in increasing order and add padding $P_3$s from proto-clusters of the current level to proto-clusters of all lower~levels if necessary.

There are two issues that we need to deal with when introducing the padding $P_3$s.
For the padding, we will use a number-theoretic tool that we introduce in \cref{sec:implementation} which has the limitation that, when padding a proto-cluster~$D$ with $P_3$s to some sequence~$D_1, \ldots, D_s$ of proto-clusters of lower level, we need to increase the number of vertices in $D$ to be roughly $2 \cdot \sum_{i = 1}^{s} |D_i|$.
Hence, first, we need to make sure that the number of levels is constant since the number of size increases of proto-clusters compounds exponentially with the number of levels.
Second, we aim for the property that each vertex is only in a constant number of $P_3$s in \pkg\ and thus, we need to ensure that the number~$s$ of lower-level proto-clusters and their size is constant.

\begin{figure}[t]
\centering
\begin{tikzpicture}
		\node [nodestyle1,label=above:{\Large 1}] (0) at (-2, 2.5) {};
		\node [nodestyle1,label=below:{\Large 3}] (1) at (-2, 0.5) {};
		\node [nodestyle1,label=above:{\Large 1}](2) at (0.25, 2.5) {};
		\node [nodestyle1,label=below:{\Large 2}] (3) at (0.25, 0.5) {};
		\node [nodestyle1,label=above:{\Large 4}] (4) at (-3.75, 2.5) {};
		\node [nodestyle1,label=left:{\Large 0}] (5) at (-7.75, 1.5) {};
		\node [nodestyle1,label=right:{\Large 0}] (6) at (-5.75, 1.5) {};
		\node [nodestyle1,label=right:{\Large 0}] (7) at (-4.75, -0.25) {};
		\node [nodestyle1,label=right:{\Large 0}] (8) at (-5.75, -2) {};
		\node [nodestyle1,label=left:{\Large 0}] (9) at (-7.75, -2) {};
		\node [nodestyle1,label=left:{\Large 0}] (10) at (-8.75, -0.25) {};
		\node [nodestyle1,label=left:{\Large 0}] (12) at (4, 1.5) {};
		\node [nodestyle1,label=above left:{\Large 0}] (14) at (3.25, 2.75) {};
		\node [nodestyle1,label=left:{\Large 0}](15) at (4, 4) {};
		\node [nodestyle1,label=right:{\Large 0}] (17) at (3.75, -2) {};
		\node [nodestyle1,label=left:{\Large 0}] (20) at (1.5, -3.25) {};
		\node [nodestyle1,label=left:{\Large 0}] (21) at (2.25, -2) {};
		\node [nodestyle1,label=right:{\Large 4}] (23) at (1.5, -0.5) {};
		\node [nodestyle1,label=above:{\Large 4}] (24) at (1.75, 3) {};

		\node [nodestyle1,label={[label distance=-2mm]120:\Large 0}] (25) at (-6.75, 1.75) {};
		\node [nodestyle1,label=right:{\Large 0}] (26) at (-5, 0.75) {};
		\node [nodestyle1,label=right:{\Large 0}] (27) at (-5, -1.25) {};
		\node [nodestyle1,label={[label distance=-2mm]-45:\Large 0}] (28) at (-6.75, -2.25) {};
		\node [nodestyle1,label=left:{\Large 0}] (29) at (-8.5, -1.25) {};
		\node [nodestyle1,label=left:{\Large 0}] (30) at (-8.5, 0.75) {};

		\draw  [middlearrow={latex}] (3.center) to (2.center);
		\draw [middlearrow={latex}](1.center) to (3.center);
		\draw  [middlearrow={latex}] (1.center) to (2.center);
		\draw [middlearrow={latex}] (3.center) to (0.center);
		\draw [middlearrow={latex}] (1.center) to (0.center);
		\draw  (21.center) to (17.center);
		\draw  (21.center) to (20.center);
		\draw  (14.center) to (15.center);
		\draw  (14.center) to (12.center);
		\draw [middlearrow={latex}] (4.center) to (6.center);

		\draw  (6.center) to (25.center);
		\draw  (6.center) to (26.center);
		\draw [middlearrow={latex}] (4.center) to (25.center);
		\draw [middlearrow={latex}] (4.center) to (26.center);
		\draw  (7.center) to (26.center);
		\draw  (7.center) to (27.center);
		\draw  (8.center) to (27.center);
		\draw  (8.center) to (28.center);
		\draw  (9.center) to (28.center);
		\draw  (9.center) to (29.center);
		\draw  (10.center) to (29.center);
		\draw  (10.center) to (30.center);
		\draw  (5.center) to (30.center);
		\draw  (5.center) to (25.center);

		\draw [middlearrow={latex}] (4.center) to (0.center);
		\draw [middlearrow={latex}] (24.center) to (2.center);
		\draw [middlearrow={latex}] (24.center) to (14.center);
		\draw [middlearrow={latex}] (24.center) to (15.center);
		\draw [middlearrow={latex}] (24.center) to (12.center);
		\draw [middlearrow={latex}] (23.center) to (3.center);
		\draw [middlearrow={latex}] (23.center) to (2.center);
		\draw [middlearrow={latex}] (23.center) to (20.center);
		\draw [middlearrow={latex}] (23.center) to (21.center);
		\draw [middlearrow={latex}] (23.center) to (17.center);

		\draw [dotted,line width=0.4mm] (17) to [bend left=60](20);
        \draw [dotted,line width=0.4mm] (15) to [bend left=60](12);
\end{tikzpicture}
\caption{Merging model of a clause $\Gamma_{\cl{}}=(x_{\vp{}}\lor \neg x_{\vq{}}\lor \neg x_{\vr{}})$. The number $i \in \{0, 1, 2, 3, 4\}$ beside a vertex $v$ denotes that $v\in L_i$. The placement of vertices corresponds to the placement of the cliques in Fig.~\ref{fig:skeleton}. For example, the two vertices of level 1 on the top correspond to $Q_\cl^1$ and $Q_\cl^4$. We assume that $m_{\vp{}}=3$.} \label{fig2}\label{fig:merging-model}
\end{figure}

To achieve the above goals, we introduce an auxiliary graph $H$, the \emph{merging model}, which will further guide the padding process.
The merging model has as vertices the cliques that were introduced before and an edge between two cliques if we want it to be possible that they are merged by a solution.
Formally,

\begin{align*}
  V(H) & := \{ K_j^i \mid i = 0, 1, \ldots, n - 1 \text{ and } j = 0, 1, \ldots, 4m_i - 1\} \cup {} \\
       & \phantom{{}:={}} \{ Q^1_\cl, Q^2_\cl, Q^3_\cl, Q^4_\cl \mid \cl = 0, 1, \ldots, m - 1\} \cup {} \\
       & \phantom{{}:={}} \{ T^\vp_s \mid \text{variable $x_\vp{}$ occurs in clause } \Gamma_s \},
\end{align*}
and the edge set, $E(H)$, is defined as follows.
See also Fig.~\ref{fig2}.
First, it shall be possible to merge the cliques in the variable gadget in a cyclic fashion,\footnote{Indeed, we have already ensured that this is possible. The edges introduced in the first step purely serve to reinforce the intuition of the merging model.} that is, we add
\[\{ \{K_j^i, K_{j + 1}^i\} \mid i = 0, 1, \ldots, n - 1 \text{ and } j =
  0, 1, \ldots, 4m_i - 1\}\] to $E(H)$.
Second, it shall be possible to merge transferring cliques of a clause gadget to any of the relevant cliques of the associated variable gadget, that is, we add to $E(H)$ the set
\[\{ \{T^i_\cl, K_{4\pi(i,\cl{})}^{i}\}, \{T^i_\cl, K_{4\pi(i,\cl{})+1}^{i}\}, \{T^i_\cl, K_{4\pi(i,\cl{})+2}^{i}\} \mid \text{variable $x_i$ occurs in clause } \Gamma_\cl\}. \]
Third, it shall be possible to merge subsets of $\{Q^1_\cl, Q^2_\cl, Q^3_\cl, Q^4_\cl\}$, and hence we add to $E(H)$ the set
\[ \{ \{Q^1_\cl, Q^2_\cl\}, \{Q^1_\cl, Q^3_\cl\}, \{Q^2_\cl, Q^3_\cl\}, \{Q^2_\cl, Q^4_\cl\}, \{Q^3_\cl, Q^4_\cl\} \mid \cl = 0, 1, \ldots, m - 1\}. \]
Finally, it shall be possible to merge the transferring cliques to subsets of $\{Q^1_\cl, Q^2_\cl, Q^3_\cl, Q^4_\cl\}$. Hence, we add to $E(H)$ the set
\begin{align*}
  &\{ \{T^i_\cl, Q^k_\cl\} \mid \text{if variable $x_i$ occurs in $\Gamma_\cl$ and $T^i_\cl$ is adjacent in $G$ to $Q^k_\cl$ with } k \in \{1, 4\}\} \cup {} \\
  & \{ \{T^i_\cl, Q^3_\cl\}, \{T^i_\cl, Q^4_\cl\} \mid \text{if variable $x_i$ occurs in $\Gamma_\cl$ and $T^i_\cl$ is adjacent in $G$ to } Q^3_\cl\}.
\end{align*}
Note that this construction is slightly asymmetric (see Fig.~\ref{fig:merging-model}).

Now we define the levels~$L_0$ to $L_4$ such that orienting the edges in $H$ from higher to lower level gives an acyclic orientation when ignoring the edges in level~$L_0$.
\begin{itemize}
\item $L_0$ contains all cliques in variable gadgets.
\item $L_1$ contains $Q^1_\cl$ and $Q^4_\cl$ for each $\cl = 0, \ldots, m - 1$.
\item $L_2$ contains $Q^3_\cl$ for each $\cl = 0, \ldots, m - 1$.
\item $L_3$ contains $Q^2_\cl$ for each $\cl = 0, \ldots, m - 1$.
\item $L_4$ contains all transferring cliques.
\end{itemize}
We now orient all edges in $H$ from higher-level vertices to lower-level vertices.
Edges in level $L_{0}$ remain undirected.
Observe that, apart from edges in $L_0$, all edges in~$H$ are between vertices of different levels and, indeed, ignoring edges in $L_0$, there are no cycles in $G$ when orienting the edges from higher to lower level.
In the following section, we will look at each clique~$R$ in levels $L_1$ and higher, and add $P_3$s to the packing $\pkg$ so as to cover all vertex pairs containing a vertex of $R$ and an out-neighbor of $R$ in~$H$.

\subsubsection{Implementation of the clause gadget}\label{sec:implementation}
In this section, we first introduce a number-theoretical construction (\cref{lem:padding}) that serves as a basic building block for ``padding'' $P_3$s in the packing.
Then we use this construction to perform the actual padding of $P_3$s.

The abstract process of padding $P_3$s works as follows.
It takes as input a clique~$R$ in~$H$ (represented by $W$ in the below Lemma~\ref{lem:padding}), and a set of cliques that are out-neighbors of $R$ in~$H$ (represented by~$V$).
Furthermore, it receives a set of vertex pairs between $R$ and its out-neighbors that have previously been covered (represented by $F$).
The goal is then to find a packing of $P_3$s that cover all vertex pairs \emph{except} the previously covered pairs.
The previously covered vertex pairs have some special structure that we carefully selected so as to make covering of all remaining vertex pairs possible in a general way:
The construction so far was carried out in such a way that the connected components induced by previously covered vertex pairs are $P_3$s or $C_8$s.

In Lemma~\ref{lem:padding} we will indeed pack triangles instead of $P_3$s because this is more convenient in the proof.
We will replace the triangles by $P_3$s afterwards: Recall the intuition from \cref{sec:intuition} that $P_3$s in the packing~\pkg\ which have exactly one endpoint in one clique~$T$ and their remaining two vertices in another clique~$R$ can accommodate both merging $R$ and $T$ or separating~$R$ and~$T$ without excess edits.
Hence, we will replace the triangles by such $P_3$s.
Recall that we aim for each clique to be a proto-cluster in the final construction, that is, each clique contains a spanning tree of edges which are not contained in $P_3$s in \pkg.
Since putting the above kind of $P_3$s into the packing \pkg\ allows in principle to delete edges within~$R$, we need to ensure that $R$ remains a proto-cluster.
We achieve this via the connectedness property in \cref{lem:padding}.

\paragraph{Number-theoretic padding tool.}

\begin{figure}
  \mbox{}\hfill
\begin{tikzpicture}

  \node [style=varGadget,label=left:{$w_{i}$}] (0) at (-7, 2) {};
  \node [style=varGadget,label={[label distance=-0.5mm]90:$v_{i}$}] (1) at (-5, 2) {};
  \node [style=varGadget,label=right:{$w'_{i}$}] (2) at (-3, 2) {};
  \node [style=varGadget,label=left:{$w_{i+1}$}] (3) at (-7, 1) {};
  \node [style=varGadget,label={[label distance=-0.5mm]90:$v_{i+1}$}] (4) at (-5, 1) {};
  \node [style=varGadget,label=right:{$w'_{i+1}$}] (5) at (-3, 1) {};
  \node [style=varGadget,label=left:{$w_{i+2}$}] (6) at (-7, -0) {};
  \node [style=varGadget,label={[label distance=-0.5mm]90:$v_{i+2}$}] (7) at (-5, -0) {};
  \node [style=varGadget,label=right:{$w'_{i+2}$}] (8) at (-3, -0) {};
  \node [style=varGadget,label=left:{$w_{i+3}$}] (9) at (-7, -1) {};
  \node [style=varGadget,label={[label distance=-0.5mm]below:$v_{i+3}$}] (10) at (-5, -1) {};
  \node [style=varGadget,label=right:{$w'_{i+3}$}] (11) at (-3, -1) {};

  \draw [style=C8Edge1] (1) to (3);
  \draw [style=C8Edge1] (1) to (8);
  \draw [style=C8Edge1] (3) to (4);
  \draw [style=C8Edge1] (4) to (5);
  \draw [style=C8Edge1] (5) to (10);
  \draw [style=C8Edge1] (10) to (6);
  \draw [style=C8Edge1] (6) to (7);
  \draw [style=C8Edge1] (7) to (8);
\end{tikzpicture}
\hfill
\begin{tikzpicture}
  \input{tikz-hypergraph}

  \foreach \vert / \label in
  {0/$w_{i + 1}$, 1/$v_i$, 2/$w'_{i + 2}$, 3/$v_{i + 2}$, 4/$w_{i + 2}$, 5/$v_{i + 3}$, 6/$w'_{i + 1}$, 7/$v_{i + 1}$}{
    \pgfmathparse{360 - 360/8 * \vert}
    \node [style=varGadget,label={[label distance=2mm]\pgfmathresult:{\label}}] (\vert) at (\pgfmathresult:1.5cm) {};
  }

  \foreach \vert [remember=\vert as \lastvert (initially 7)] in {0, 1, ..., 7}{
    \draw [style=C8Edge1] (\vert) to (\lastvert);
  }

  \newcommand\radius{3mm}

  \begin{pgfonlayer}{bg}
  \draw [fill = red, fill opacity = .25] \hedgeiii{0}{1}{2}{\radius};
  \draw [fill = blue, fill opacity = .25] \hedgeiii{2}{3}{4}{\radius};
  \draw [fill = green, fill opacity = .25] \hedgeiii{4}{5}{6}{\radius};
  \draw [fill = brown, fill opacity = .25] \hedgeiii{6}{7}{0}{\radius};
\end{pgfonlayer}
\end{tikzpicture}
\hfill
\mbox{}

\caption{Left: The labels of a $C_{8}$ in $(V\cup W,F)$. Right: The triangles in $\tau^2_F$ covering a $C_8$.}
\label{fig4}
\end{figure}

\begin{lemma} \label{lemma1}\label{lem:padding}
  Let $p$ be a prime number with $p \geq 2$. Let $B=(V,W,E)$ be a complete bipartite graph such that $|V|=p$ and $|W|=2p$. Let $F\subseteq E$ be a set of edges such that each connected component of $(V\cup W, F)$ is a either a singleton, a $P_3$ with a center in $V$, or a~$C_8$. Then there exists an edge-disjoint triangle packing $\tau$ in $(V\cup W, (E\setminus F) \cup \binom{W}{2})$ which covers $E\setminus F$ such that the graph $(W, \binom{W}{2}\setminus E(\tau))$ is connected.
  Moreover, each vertex~\(v \in V \cup W\) is in at most \(p\) triangles of \(\tau\), it is in at most \(p - 1\) triangles if \(v\) is in a connected component of $(V\cup W, F)$ that is a \(P_3\), and in at most \(p - 2\) triangles if \(v\) is in connected component of $(V\cup W, F)$ that is a~\(C_8\).
\end{lemma}
\begin{proof}
First, we divide $W$ into two parts $W_{1}$ and $W_{2}$ of equal sizes such that if two vertices $w,w'\in W$ are connected to the same vertex $v\in V$ by edges in $F$, then $w$ and $w'$ are in different parts. Note that this is easy for a connected component of $(V\cup W, F)$ if it is a $P_3$. For a connected component of $(V\cup W, F)$ which is a $C_8$, this is also doable as shown in Fig.~\ref{fig4}, where $w_{i},w_{i+1},w_{i+2},w_{i+3}$ belong to $W_{1}$, $w'_{i},w'_{i+1},w'_{i+2},w'_{i+3}$ belong to $W_{2}$, and $v_{i},v_{i+1},v_{i+2},v_{i+3}$ belong to $V$.

We now label the vertices by elements from the finite field $\mathbb{F}_p$ of size $p$ (recall that $\mathbb{F}_p$ consists of the elements $\{0, 1, \ldots,p-1\}$ with addition and multiplication modulo $p$).
To each vertex $v \in V$, each vertex $w \in W_1$, and each vertex $w' \in W_2$, we will assign a unique label $v_i$, $w_j$, and $w'_k$, respectively, with $i, j, k \in \mathbb{F}_p$.
In other words, we construct three bijections that map $\mathbb{F}_p$ to $V$, $W_1$, and $W_2$, respectively.

First, we label the vertices from the connected components of $(V\cup W, F)$ (and some singleton vertices) by going through the connected components one-by-one.
For each yet-unlabeled connected component of $(V\cup W, F)$ that is a $P_3$ given by $wvw'$ such that $v\in V, w\in W_1, w'\in W_2$, we label vertex $w$ as $w_j$, vertex $v$ as $v_j$ and vertex $w'$ as $w'_j$ for the smallest $j$ from $\mathbb{F}_p$ which is not yet used in the labeling of vertices of~$V$.
For each yet-unlabeled connected component~$C$ in $(V\cup W, F)$ that is a $C_8$ we proceed as follows.
By the way we have divided vertices from $W$ into $W_1$ and $W_2$, we can assign, to each such connected component~$C$, four vertices which have degree~zero in~$(V\cup W, F)$: two in $W_1$ and two in~$W_2$; see also Fig.~\ref{fig4}.
We thus label the vertices in $C$ and the four degree-zero vertices assigned to~$C$ as in Fig.~\ref{fig4}, for the smallest integer $i$ from $\mathbb{F}_p$ such that $i,i+1,i+2$ and $i+3$ are not used in the labeling of vertices of $V$.

Second, we label the remaining unlabeled vertices that are not in the connected components of $(V\cup W, F)$.
For an unlabeled vertex $w\in W_{1}$, label it as $w_{k}$ for an arbitrary integer $k$ from $\mathbb{F}_p$ which is not used in the labeling of vertices in $W_{1}$.
Similarly, for an unlabeled vertex $v\in V$, we label it as $v_{h}$ for an arbitrary integer~$h$ from $\mathbb{F}_p$ which is not used in the labeling of vertices in $V$ and for an unlabeled vertex $w'\in W_{2}$, we label it as~$w'_{s}$ for an arbitrary integer $s$ from $\mathbb{F}_p$ which is not used in the labeling of vertices in $W_{2}$.
After the labeling, the vertices in $V,W_1$ and $W_2$ are $v_1,\ldots,v_{p-1}$, $w_1,\ldots,w_{p-1}$ and $w'_1,\ldots,w'_{p-1}$, respectively.

We now proceed to constructing the packing~$\tau$.
First, let
\begin{align*}
\tau_{\textsf{all}} & =\left\{uvw \;\middle|\; uvw\text{ is a triangle in } \left(V\cup W, E\cup \binom{W}{2}\right)\text{ such that }u\in V, v\in W_1, w\in W_2\right\} \text{, and}\\
\tau_{\textsf{cover}}& =\{v_i w_j w'_k \in \tau_{\textsf{all}} \mid i, j, k \in \mathbb{F}_p \text{ and } j-i=k-j \text{ over }\mathbb{F}_p \}.
\end{align*}
In the following, for any triangle packing $\tau$, by $E(\tau)$ we will denote the union of the edge sets of the triangles in~$\tau$.

We claim that the triangles in $\tau_{\textsf{cover}}$ are edge-disjoint and cover all edges of $E$.
Consider an arbitrary edge $v_i w_j\in E$ between $V$ and $W_1$ for $i,j\in \mathbb{F}_p$.
According to the definition of $\tau_{\textsf{cover}}$, each triangle $v_iw_jw'_x \in \tau_{\textsf{cover}}$ that covers edge $v_i w_j$
satisfies $x = 2j-i$ (over $\mathbb{F}_p$).
Since $\mathbb{F}_p$ is a field, there is thus exactly one such triangle.
Similarly, each edge $v_hw'_k\in E$ between $V$ and $W_1$ for some $h,k\in \mathbb{F}_p$ is covered by the unique triangle $v_hw_{(h+k)\cdot 2^{-1}}w'_k \in \tau_{\textsf{cover}}$.
Finally, each edge $w_sw'_t$ between $W_1$ and $W_2$ is covered by the unique triangle $v_{2s-t}w_sw'_t \in \tau_{\textsf{cover}}$.
Thus the claim holds.

Let
\begin{equation*}
  \tau_{F}^{1} =\{v_{h}w_{h}w'_{h}\in \tau_{\textsf{all}} \mid \text{vertices }w_{h}, v_{h}, w'_{h} \text{ induce a $P_3$ in } (V\cup W, F) \}\text{, and}
\end{equation*}
\begin{multline*}
  \tau_{F}^{2} =\{v_{h}w_{h+1}w'_{h+2},
  \ v_{h+1}w_{h+1}w'_{h+1},
  \ v_{h+2}w_{h+2}w'_{h+2},
  \ v_{h+3}w_{h+2}w'_{h+1} \in \tau_{\textsf{all}} \mid\\
  \text{vertices }v_{h},w'_{h+2},v_{h+2},w_{h+2},v_{h+3},w'_{h+1},v_{h+1},w_{h+1} \text{ induce a $C_{8}$ in $(V\cup W, F)$} \}.
\end{multline*}
Observe that $\tau_{F}^{1},\tau_{F}^{2}\subseteq \tau_{\textsf{cover}}$.
For example, if we put $v_{h + 3}w_{h + 2}w'_{h + 1} = v_{i}w_{j}w'_{k}$, then it follows that $j - i = p - 1 = k - j$ over $\mathbb{F}_p$, that is, $v_{h + 3}w_{h + 2}w'_{h + 1}$ satisfies the conditions in the definition of $\tau_{\textsf{cover}}$.
Moreover, $\tau_{F}^{1}\cup \tau_{F}^{2}$ covers all edges of~$F$.
Furthermore, each edge in the edge set $E(\tau_{F}^{1}\cup \tau_{F}^{2})$ of $\tau_{F}^{1}\cup \tau_{F}^{2}$ is either in $F$ or between $W_1$ and~$W_2$.
(See also Fig.~\ref{fig4}.)
Thus, $E\setminus F$ has an empty intersection with $E(\tau_{F}^{1}\cup \tau_{F}^{2})$.
Let $\tau=\tau_{\textsf{cover}}\setminus (\tau_{F}^{1}\cup \tau_{F}^{2})$.
It follows that $\tau$ covers all edges of $E\setminus F$.
It remains only to show that $\tau$ satisfies the connectedness condition.
Since $\tau_{\textsf{cover}}$ does not cover any edge of $\binom{W_1}{2}$ or $\binom{W_2}{2}$, it follows that $(W_1, \binom{W_1}{2}\setminus E(\tau))$ and $(W_2, \binom{W_2}{2}\setminus E(\tau))$ are cliques.
Now observe that $\tau_{F}^1 \cup \tau_{F}^2$ contains at most $|V| = p$ edges of $\binom{W}{2}$, while $W_1 \times W_2$ is of size $p^2 > p$.
Thus in the graph $(W, \binom{W}{2}\setminus E(\tau))$ there is at least one edge $\{w_1,w_2\}$ such that $w_1\in W_1$ and $w_2\in W_2$.
As a result, $(W, \binom{W}{2}\setminus E(\tau))$ is connected.
Finally, observe that each vertex \(v \in V \cup W\) is in at most \(p\) triangles in \(\tau_{\textsf{cover}}\).
If \(v\) is in a \(P_3\) of $(V\cup W, F)$, then at least one of these triangles is removed from \(\tau_{\textsf{cover}}\) to obtain \(\tau\).
If \(v\) is in a \(C_8\) of $(V\cup W, F)$, then at least two of the triangles in \(\tau_{\textsf{cover}}\) that contain \(v\) are removed to obtain \(\tau\).
This concludes the proof.
\end{proof}

The following corollary is slightly easier to apply than \cref{lem:padding}.

\begin{corollary} \label{cor1}\label{cor:padding}
  Let $p$ be a prime and let $B=(V,W,E)$ be a complete bipartite graph with $|V| \leq p,|W|=2p$. Let $F\subseteq E$ be a nonempty set of edges such that every connected component of $(V\cup W, F)$ is a either a $P_3$ with a center in $V$ or a $C_8$. Then there exists an edge-disjoint triangle packing $\tau$ in $(V\cup W, (E\setminus F) \cup \binom{W}{2})$ which covers $E\setminus F$ such that $(W, \binom{W}{2}\setminus E(\tau))$ is connected.
  Each vertex~\(v \in V \cup W\) is in at most \(p\) triangles of \(\tau\), at most \(p - 1\) if \(v\) is in a connected component of $(V\cup W, F)$ that is a \(P_3\), and at most \(p - 2\) if \(v\) is in connected component of $(V\cup W, F)$ that is a~\(C_8\).
\end{corollary}
\begin{proof}
Add extra $p - |V|$ dummy vertices to $V$, obtaining a complete bipartite graph $B' = (V', W, E)$, apply Lemma~\ref{lemma1} to $B'$, $p$, and $F$, obtaining a packing $\tau'$, and return a sub-packing $\tau \subseteq \tau'$ containing only triangles
with vertices in $B$. Since every triangle in $\tau'$ contains exactly one vertex of $V'$, $\tau$ satisfies all the required properties.
\end{proof}

\paragraph{Concluding the construction.}
Equipped with Lemma~\ref{lemma1} and \cref{cor:padding}, we can finish the construction of the clause gadgets and indeed the whole instance $(G, \pkg, 0)$ of \pCEA.
We now specify the exact size of each clique introduced above and add padding $P_3$s to $G$ and \pkg\ so as to cover all vertex pairs between cliques that are adjacent in the merging model~$H$.
Put initially the set ${\mathcal{H}}_{\textsf{pad}}$ of \emph{padding $P_3$s} to be ${\mathcal{H}}_{\textsf{pad}} =\emptyset$.
We start with levels~$0$ and~$1$.
We do not change the sizes of any clique on level~$0$.
That is, as shown in the variable gadget, there are five vertices in every clique of level~$0$.
Besides, we set the size of every clique of level~$1$ to be one.
Note that no cliques of levels $0$ and $1$ are adjacent in the merging model~$H$, that is, no two of them need to be merged in the solution.
Hence, it is not necessary to add padding $P_3$s within these levels.

Now we turn each level~$i$, $i \geq 2$, in order of increasing~$i$.
For each clique $Q$ of level~$i$, we apply Corollary~\ref{cor1} in the following scenario.
Let $V$ be the union of all cliques of levels $j < i$ that are out-neighbors of $Q$ in the merging model~$H$.
Let $p$ be the smallest prime with $p \geq |V|$ and $2p \geq |Q|$.
Introduce $2p - |Q|$ new vertices, put them into~$Q$, and make~$Q$ a clique.
Put $W = Q$ and let $E=\{\{u,v\} \mid u\in V, v\in W\}$.

We claim that Corollary~\ref{cor1} is applicable to $p$, graph $B = (V, W, E)$, and $F$.
To see this, we need to show that each connected component in $(V \cup W, F)$ is either a $P_3$ with center in~$V$ or a $C_8$.
Indeed, if $Q$ is not a transferring clique, that is, $Q = Q^j_\cl$ for some $\cl \in \{0, 1, \ldots, m - 1\}$ and $j \in \{1, 2, 3, 4\}$, then each connected component in $(V \cup W, F)$ consists of two edges of two different transferring $P_3$s with the same center in~$V$, as claimed (see also Fig.~\ref{fig1}).
If $Q$ is a transferring clique, then each connected component of $(V \cup W, F)$ consists either of two edges of two different transferring $P_3$s with the same center in some $Q^j_\cl \subseteq V$ for some $j \in \{1, 3, 4\}$, or of some vertex pairs of transferring $P_3$s between $Q$ and the cliques of a variable gadget.
In the first case, the claim clearly holds.
In the second case, observe that the edges and non-edges between $V$ and $W$ in the transferring $P_3$s are each incident with one of $w_1, w_2, w_3, w_4$ and one of $v_1, v_2, v_3, v_4$ as defined when connecting variable and clause gadgets.
These edges and non-edges indeed induce a $C_8$ given by $v_1 w_1 v_3 w_3 v_2 w_2 v_4 w_4 v_1$ (see also Fig.~\ref{fig:connection}).
Thus, Corollary~\ref{cor1} is applicable.

Corollary~\ref{cor1} gives us an edge-disjoint triangle packing $\tau$ in $(V\cup W, (E\setminus F) \cup \binom{W}{2})$ which covers all edges of $E\setminus F$ such that $(W, \binom{W}{2}\setminus E(\tau))$ is connected.
Note that every triangle $vw_{1}w_{2}\in \tau$ has one vertex $v\in V$ and two vertices $w_1,w_2\in W$.
For every triangle $vw_{1}w_{2}\in \tau$, we add a $P_3$ to $G$ by using exactly two edges of the triangle in $G$; more precisely, we put $\{v,w_1\},\{w_1,w_2\}\in E(G),vw_2\notin E(G)$, and then add the $P_3$ of $G$ given by $v w_1 w_2$ into ${\mathcal{H}}_{\textsf{pad}}$.
Finally, let $\pkg= \pkg_{\textsf{var}} \cup \pkg_{\textsf{tra}} \cup \pkg_{\textsf{pad}}$.
Note that $\pkg$ is a modification-disjoint packing of $P_3$s: This is by construction for $\pkg_{\textsf{var}} \cup \pkg_{\textsf{tra}}$ and, by Corollary~\ref{cor1}, no $P_3$ in $\pkg_{\textsf{pad}}$ shares a vertex pair with any $P_3$ in $\pkg_{\textsf{var}} \cup \pkg_{\textsf{tra}}$.
This concludes the construction of the \pCEA\ instance $(G, \pkg, 0)$.

To see that the construction takes polynomial time and to see that indeed each vertex is in some constant number of $P_3$s in \pkg, let us now derive the precise sizes of each clique in the construction.
Recall that the cliques on level $0$ are exactly those in the variable gadgets, and these have exactly five vertices each.
The cliques on level~$1$ are $Q_{\cl}^{1}$ and $Q_\cl^4$ for $\cl \in \{0, 1, \ldots, m - 1\}$, and they have $1$ vertex each.
On level~$2$ we have the cliques~$Q_{\cl}^{3}$, $\cl \in \{0, 1, \ldots, m - 1\}$, and since the only out-neighbor in $H$ of $Q_{\cl}^{3}$ is $Q_{\cl}^{4}$, our procedure sets $p=2$ and thus $Q_\cl^3$ has $4$ vertices.
On level $3$ there are the cliques $Q_{\cl}^{2}$, $\cl \in \{0, 1, \ldots, m - 1\}$, and we set $p = 7$ as $|Q_{\cl}^{1}\cup Q_{\cl}^{3}\cup Q_{\cl}^{4}|=6$.
Thus clique $Q_{\cl}^{2}$ has $14$ vertices.
For the clique $T_{\cl}^{\vp{}}$, we set $p = 17$ as $|Q_{\cl}^{1}\cup K_{4\pi(\vp{},\cl)}^{\vp{}}\cup K_{4\pi(\vp{},\cl)+1}^{\vp{}}\cup K_{4\pi(\vp{},\cl)+2}^{\vp{}}| = 16$.
So the clique $T_{\cl}^{\vp{}}$ has $2 \cdot 17=34$ vertices.
Similarly, $T_\cl^\vr{}$ has $34$ vertices as well.
For the clique $T_{\cl}^{\vq{}}$, we set $p = 23$, as $|Q_{\cl}^{3}\cup Q_{\cl}^{4}\cup K_{4\pi(\vq{},\cl)}^{\vq{}}\cup K_{4\pi(\vq{},\cl)+1}^{\vq{}}\cup K_{4\pi(\vq{},\cl)+2}^{\vq{}}| = 20$.
Thus $T_{\cl}^{\vq{}}$ is a clique of size $2 \cdot 23=46$.
By the bounds on the number of triangles in the packing, each vertex is in at most 23 $P_3$s of \pkg.
It also follows that the construction takes overall polynomial time.

\subsection{Correctness}\label{sec:correctness}

We now prove the correctness of the reduction given in \cref{sec:construction}

\subsubsection{Completeness}\label{sec:completeness}
Now we show how to translate a satisfying assignment of $\Phi$ into a cluster editing set of size $|\mathcal{H}|$ for the constructed instance.
\begin{lemma}
  If the input formula $\Phi$ is satisfiable, then the constructed instance $(G,{\mathcal{H}}, \ell = 0)$ is a YES-instance.
\end{lemma}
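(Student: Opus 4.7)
Given a satisfying assignment $\sigma$ of $\Phi$, the plan is to construct an explicit vertex partition $\mathcal{P}$ of $V(G)$ which refines the proto-cluster structure, let $G_{\mathcal{P}}$ be the associated cluster graph (each part becomes a clique), and set $S := E(G)\triangle E(G_{\mathcal{P}})$. The task then reduces to verifying that each $P_3$ in $\pkg$ contributes exactly one flipped pair to $S$ and that every flipped pair in $S$ lies in some $P_3$ of $\pkg$; this gives $|S|=|\pkg|$ as required. The partition $\mathcal{P}$ is defined by prescribing, for each pair of cliques that forms an edge of the merging model $H$, whether the two cliques are merged or kept separate.

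\textbf{Partition from the assignment.} For each variable $x_i$, if $\sigma(x_i)=\true$ then I merge the pairs $K_{j+1}^i\cup K_{j+2}^i$ for $j=0,2,\ldots,4m_i-2$; if $\sigma(x_i)=\false$ then I merge the pairs $K_{j}^i\cup K_{j+1}^i$ for the same range of $j$. For each clause $\Gamma_\cl$ I pick a literal, say on variable $x_v$ occurring at position $\vp$, which $\sigma$ satisfies (a witness exists by assumption). In the generic case $v=\vp$ of the first literal, I merge $Q_\cl^2\cup Q_\cl^3\cup Q_\cl^4$ into one cluster, keep $Q_\cl^1$ separate, merge $T_\cl^\vp$ with $Q_\cl^1$, and merge $T_\cl^\vq$, $T_\cl^\vr$ with $Q_\cl^2\cup Q_\cl^3\cup Q_\cl^4$. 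On the variable side I then attach $T_\cl^\vp$ to the pair $K_{4\pi(\vp,\cl)+1}^\vp\cup K_{4\pi(\vp,\cl)+2}^\vp$ when $x_\vp$ is true and appears positively (so this pair is already merged by the variable rule), and analogously for the other sign; for the non-satisfying literals $x_\vq$, $x_\vr$ I attach $T_\cl^\vq$ and $T_\cl^\vr$ to the variable-gadget pair that is \emph{already} merged on their respective cycles, so no further constraint is forced there. The cases where the satisfying literal sits at position $\vq$ or $\vr$ are handled symmetrically, with the cut in the $Q$-path moving to between $Q_\cl^2$ and $Q_\cl^3$ or between $Q_\cl^3$ and $Q_\cl^4$.

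\textbf{Counting edits via the packing.} The verification proceeds $P_3$-by-$P_3$ over the three sub-packings. For a $P_3$ in $\pkg_{\textsf{var}}$ of the form $v_{j,p}^iv_{j+1,q}^iv_{j+2,r}^i$, consecutive cliques on the cycle are merged on exactly one of the two sides $\{K_j^i,K_{j+1}^i\}$ or $\{K_{j+1}^i,K_{j+2}^i\}$ by the rule above, so in $G_\mathcal{P}$ exactly one of the two edges of the $P_3$ survives and the non-edge is filled in; that is exactly one flipped pair. For a transferring $P_3$ in $\pkg_{\textsf{tra}}$ the partition rules above were designed precisely so that in each such $P_3$ one of its vertex pairs is flipped; this is where the chosen satisfying literal is used and must be checked with a short case distinction on which $Q$-edge is cut and which sign the literal has. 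For a padding $P_3$ given by $v w_1 w_2$ coming from a triangle $vw_1w_2$ of Corollary~\ref{cor:padding} with $v\in V$ and $w_1,w_2\in W=Q$, I observe that $w_1,w_2$ always lie in the same cluster of $\mathcal{P}$ (the proto-cluster $Q$ is never split because the leftover edges inside $W$ form a connected subgraph by the final clause of Corollary~\ref{cor:padding} and the proto-cluster spanning tree), so exactly one of the pairs $\{v,w_1\}$, $\{v,w_2\}$ is flipped depending on whether $Q$ is merged with the out-neighbor containing $v$.

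\textbf{Coverage and the main obstacle.} The remaining inclusion is to show that \emph{every} flipped pair in $S$ is witnessed by some $P_3$ of $\pkg$; equivalently, only pairs between cliques that are edges of the merging model $H$ get flipped. This is essentially guaranteed by construction: dividing non-edges are never flipped because endpoints of such non-edges live in proto-clusters that are non-adjacent in $H$ and by the rules above end up in different clusters of $\mathcal{P}$. The genuinely delicate part of the argument is the clause gadget bookkeeping: one must verify, uniformly across all three possible positions of the satisfying literal and both polarities, that the prescribed merges (i) respect all dividing non-edges between the $Q_\cl^i$'s and between $T$-cliques and the variable cycles, and (ii) cover exactly one vertex pair per transferring $P_3$. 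Concretely, the trickiest sub-case is when a transferring clique $T_\cl^\vq$ (with $\vq$ the second literal) must be merged across $Q_\cl^2$, $Q_\cl^3$ and also with the right pair of $K$-cliques in the $x_\vq$-cycle despite $x_\vq$ not being the satisfying literal; I expect this to be where the construction's asymmetry forced by Fig.~\ref{fig:merging-model} has to be exploited carefully, with the $C_8$-connected components of the packing on $T_\cl^\vq$'s side absorbing the resulting edits.
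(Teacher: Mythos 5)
Your overall strategy matches the paper's: build a partition $\mathcal{P}$ that refines the proto-cluster structure, starting from the variable gadgets (merge odd pairs if true, even pairs if false), and then handle each clause by merging the $Q$-cliques and transferring cliques according to a chosen satisfying literal, and finally count edits $P_3$-by-$P_3$ over $\pkg_{\textsf{var}}$, $\pkg_{\textsf{tra}}$, $\pkg_{\textsf{pad}}$. Your accounting for $\pkg_{\textsf{var}}$ and $\pkg_{\textsf{pad}}$ is correct and in line with the paper.

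However, your prescription for the clause-gadget merges contains a concrete error. You merge $T_\cl^\vp$ both with $Q_\cl^1$ and (``on the variable side'') with the pair $K_{4\pi(\vp,\cl)+1}^\vp\cup K_{4\pi(\vp,\cl)+2}^\vp$. But $Q_\cl^1$ and $K_{4\pi(\vp,\cl)+1}^\vp$ are non-adjacent in $G$ and the non-edges between them are dividing (this is exactly what the soundness proof uses), so putting them in one cluster forces edits that no $P_3$ in $\pkg$ covers, blowing past the budget. The same problem occurs for the non-satisfying literals: you merge $T_\cl^\vq,T_\cl^\vr$ with $Q_\cl^2\cup Q_\cl^3\cup Q_\cl^4$ \emph{and} with a pair in their respective variable cycles, which again drags dividing non-edges into a single cluster. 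In the correct partition, each transferring clique goes in exactly one direction: the satisfying literal's $T_\cl^\beta$ is merged only into the $Q$-side (e.g.\ $T_\cl^\vp\cup Q_\cl^1$, or $T_\cl^\vq\cup Q_\cl^3\cup Q_\cl^4$, or $T_\cl^\vr\cup Q_\cl^4$ depending on position), while each non-satisfying $T_\cl^\gamma$ is merged only into the variable side, namely with the already-merged pair $K_{4\pi(\gamma,\cl)+1}^\gamma\cup K_{4\pi(\gamma,\cl)+2}^\gamma$ if $\alpha(x_\gamma)=\true$ and with $K_{4\pi(\gamma,\cl)}^\gamma\cup K_{4\pi(\gamma,\cl)+1}^\gamma$ otherwise. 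This is precisely the point where the satisfying assignment is used: for the satisfying literal, the transferring $P_3$s toward the variable gadget get their one edit on the $T$-to-$K$ edge because the $K$-pair is already merged; for the non-satisfying literals, the $T$-clique can safely join the variable gadget since it is cut off from the $Q$-side. Your final paragraph signals awareness that something is delicate around $T_\cl^\vq$, but leaves the case analysis unresolved; the analysis is required, and with the corrected merge rule it goes through without invoking the $C_8$ structure (that structure is only needed in Lemma~\ref{lem:padding}, not in this counting argument).
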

\begin{proof}
  Assume that there is a satisfying assignment $\alpha$ for the formula $\Phi$.
  Recall that $n$ is the number of variables of~$\Phi$ and $m$ is the number of clauses of $\Phi$.
  Instead of building the solution directly, we build a partition~\P\ of $V(G)$ into clusters.
  Then, we argue that the number of edges between clusters and the number of non-edges inside clusters is at most $|\pkg|$.
  Thus, the partition~\P\ will induce a solution with the required number of edge edits.

  Recall that $H$ denotes the merging model of our hardness construction.
  The basic building blocks of our vertex partition~\P\ are the cliques in $G$ that correspond to the vertices of $V(H)$.
  We will never separate such a clique during building \P, that is, \P\ corresponds to a partition of~$V(H)$.
  For simplicity, we will slightly abuse notation and indeed also treat $\P$ as a partition of $V(H)$.
  We build \P\ by taking initially $\P = V(H)$ and then successively \emph{merging} parts of \P, which means to take the parts out of \P\ and replace them by their union.
  Each vertex of $H$ is a clique of $G$, so has no non-edges in $G$.
  Thus, below it suffices to consider edges and non-edges between pairs of cliques corresponding to vertices in $V(H)$ to determine the number of edits in the solution corresponding to \P.

  We start with the variable gadgets.
  Consider each variable $x_i$, $i=0,1,\ldots,n-1$.
  Call a pair of cliques $K_{j}^{i}$, $K_{j + 1}^{i}$ in $x_i$'s variable gadget \emph{even} if $j$ is even and \emph{odd} otherwise (indices are taken modulo $4m_i$).
  If $\alpha(x_i) = \true$, then merge each odd pair.
  If $\alpha(x_i) = \false$, then merge each even pair.
  We will not merge any further pair of cliques contained in variable gadgets.

  Now consider each clause $\Gamma_\cl$, $\cl=0,\ldots,m-1$, in some arbitrary order.
  Let $x_{\vp{}}$, $x_{\vq{}}$, and $x_{\vr{}}$ be the variables in $\Gamma_\cl$.
  We use the same notation as when defining the clause gadgets.
  See Fig.~\ref{fig1} for the skeleton of the clause gadget of $\Gamma_\cl$, up to variables appearing positively instead of negatively or vice versa.
  We choose an arbitrary variable that satisfies $\Gamma_\cl$.
  The basic idea is to separate (that is, to not merge) the transferring clique from the the cliques in the satisfying variable's gadget by deleting some edges of the transferring $P_3$s.
  This will induce at most one edit for each transferring $P_3$ since the remaining edge in a transferring $P_3$ will be part of a cluster in \P.
  Then we cut from the clause gadget all transferring cliques belonging to variables that have not been chosen.
  Since we do not spend edits inside of transferring $P_3$s in this way, this allows us to merge the transferring cliques to the variable gadgets regardless of whether the variable was set to \true\ or \false.

  Formally, we perform the following merges in \P.
  \begin{description}
  \item[\rm If we have chosen $x_{\vp{}}$ from the variables satisfying the clause $\Gamma_\cl$:]\mbox\\
    \begin{itemize}
    \item Merge $T_{\cl}^{\vp{}}$ with $Q_\cl^1$.
    \item Merge the cliques $Q_{\cl}^{2}, Q_{\cl}^{3}$ and $Q_{\cl}^{4}$.
    \end{itemize}
  \item[\rm If we have chosen $x_\vq$:]\mbox\\
    \begin{itemize}
    \item Merge the cliques $Q_{\cl}^{1}, Q_{\cl}^{2}$.
    \item Merge the cliques $T_\cl^\vq$, $Q_\cl^3$, and $Q_{\cl}^{4}$.
    \end{itemize}
  \item[\rm If we have chosen $x_\vr$:]\mbox\\
    \begin{itemize}
    \item Merge $T_{\cl}^{\vr{}}$ with $Q_\cl^4$.
    \item Merge the cliques $Q_{\cl}^{1}$, $Q_{\cl}^{2}$ and $Q_{\cl}^{3}$.
    \end{itemize}
  \end{description}
  Finally, let $\beta \in \{\vp, \vq, \vr\}$ be the index of the chosen variable that satisfies $\Gamma_\cl$.
  For both $\gamma \in \{\vp, \vq, \vr\} \setminus \{\beta\}$ do the following.
  If $\alpha(x_{\gamma})=\true$, then merge $T_{\cl}^{\gamma}$ with the part of \P\ consisting of $K_{4\pi(\gamma,\cl)+1}^{\gamma}$ and $K_{4\pi(\gamma,\cl)+2}^{\gamma}$.
  If $\alpha(x_{\gamma})=\false$, then merge $T_\cl^\gamma$ with the part of \P\ consisting of $K_{4\pi(\gamma,\cl)+1}^{\gamma}$ and $K_{4\pi(\gamma,\cl)}^{\gamma}$.
  This concludes the definition of the vertex partition~\P.
  Let us denote the corresponding cluster editing set by~$S$.
  That is, $S$ contains all edges in~$G$ between parts of \P\ and all non-edges within parts of \P.

  We claim that (c1) each edit in $S$ is contained in a $P_3$ of \pkg\ and (c2) every $P_3$ of \pkg\ is edited at most once by~$S$.
  Note that the claim implies that $S$ is a solution to $(G, \pkg, 0)$.
  We first prove part (c1) of the claim.
  Note that each edit in $S$ is between two cliques in $V(H)$.
  There are three types of edits in \pkg: within a variable gadget, between a clause and a variable gadget, and within a clause gadget.

  Consider first the edits contained in the variable gadget of an arbitrary variable~$x_i$.
  Observe that each such edit is contained in an odd or an even pair of $x$'s gadget.
  Such an edit is contained in a $P_3$ in \pkg, because, by construction of the variable gadgets, all edges and non-edges between the cliques of an odd or an even pair are covered by $P_3$s in \pkg.

  For the edits in $S$ which are not contained in variable gadgets, observe that between each pair of cliques in a single level $L_s$, $s > 0$, there are no edges in~$G$.
  Whenever we merge two or more parts during the construction of \P, we either merge a clique on level $L_4$ to two cliques on level $L_0$ or we merge cliques on pairwise different positive levels.
  Hence, each edit $e \in S$ which is not in a variable gadget is between two cliques on different levels.
  Moreover, observe that the cliques containing the endpoints of $e$ are adjacent in $V(H)$.
  Thus, by the way we have defined $\pkg_{\textsf{pad}}$ via \cref{cor:padding}, there is a $P_3$ in $\pkg_{\textsf{pad}}$ containing $e$.
  We have thus shown that claim (c1) holds.

  For part (c2) of the claim, we first observe the following.
  Each $P_3$ in \pkg\ that intersects only two cliques in $V(H)$ contains at most one edit of~$S$.
  Let $P$ be such a $P_3$ and let $D_1$, $D_2$ be the two cliques in $V(H)$ that intersect~$P$.
  Note that $\pkg_{\textsf{tra}}$ does not contain $P_3$s that intersect only two cliques in $V(H)$ and thus either $P \in \pkg_{\textsf{var}}$ or $P \in \pkg_{\textsf{pad}}$.
  In both cases, there is exactly one edge and one non-edge of $P$ between $D_1$ and $D_2$:
  This is clear if $P \in \pkg_{\textsf{pad}}$.
  If $P \in \pkg_{\textsf{var}}$ then $P$ was introduced when connecting a clause gadget to a variable gadget.
  In the notation used there, either $P = v_5 v_6 v_2$ or $P = v_1 v_7 v_8$, both of which have the required form.
  Thus, as $D_1$ and $D_2$ are either merged or not in $\P$, there is at most one edit in~$P$.

  To prove (c2) it remains to consider $P_3$s in \pkg\ that intersect three cliques in $V(H)$.
  Let $P$ be such a $P_3$.
  Note that $P \notin \pkg_{\textsf{pad}}$.
  If $P \in \pkg_{\textsf{var}}$, then it connects $K_j^i$ to $K_{j + 2}^i$ via $K_{j + 1}^i$ for some even $j$ and some variable index $i \in \{0, 1, \ldots, n - 1\}$.
  Since we merge either all odd or all even pairs in $x_i$'s variable gadget to obtain~\P, indeed exactly one edge of $P$ is edited, as claimed.
  If $P \in \pkg_{\textsf{tra}}$, then we distinguish two cases.

  First, $P$ does not contain a vertex of some variable-gadget clique.
  Then, $P$ connects some clique $Q_\cl^s$ to some transferring clique~$T^\delta_\cl$ via $Q_\cl^{s'}$.
  According to the construction of \P, either $T^\delta_\cl$ and $Q_\cl^{s'}$ are in different parts of \P\ and $Q_\cl^{s'}$ and $Q_\cl^s$ are merged, or $T^\delta_\cl$ and $Q_\cl^{s'}$ are merged and $Q_\cl^s$ and $Q_\cl^{s'}$ are in different parts of~\P.
  In both cases, there is at most one edit of $S$ in $P$.

  Second, $P$ contains a vertex of some variable-gadget clique.
  Then, by construction of $G$ and \pkg, path $P$ indeed contains two vertices of two variable-gadget cliques, say $K_j^i$ and $K_{j + 1}^i$ and one vertex of a transferring clique, say $T_\cl^i$.
  Assume that variable $x_i$ appears positively in clause $\Gamma_\cl$, the other case is analogous.
  Then the center of $P$ is $K_j^i$ and moreover $j$ is odd.
  If $x_i$ was not chosen among the variables satisfying clause~$\Gamma_\cl$ when constructing \P, then $T_\cl^i$ and $K_j^i$ is in the same part~$Q$ of~\P.
  Furthermore $K_{j + 1}^i$ is either in a part different from $Q$ or also in $Q$.
  In both cases, there is at most one edit from $S$ in~$P$.
  If $x_i$ was chosen among the the variables satisfying clause~$\Gamma_\cl$ when constructing \P, then $T_\cl^i$ is in a part in \P\ which is different from the one(s) containing $K_j^i$ and $K_{j + 1}^i$.
  However, since $x_i$ satisfies $\Gamma_\cl$, we have $\alpha(x_i) = \true$ and thus $K_j^i$ and $K_{j + 1}^i$ are merged (recall that $j$ is odd).

  Thus, indeed, the claim holds, that is, each edit in $S$ is contained in a $P_3$ in \pkg\ and every $P_3$ of \pkg\ is edited at most once by~$S$.
\end{proof}

\subsubsection{Soundness}\label{sec:soundness}
Before we show how to translate a cluster editing set of size $|\mathcal{H}|$ for the constructed instance into a satisfying assignment of $\Phi$, we make some structural observations.

Recall the definition of a proto-cluster, a connected component of the subgraph of~$G$ whose edge set contains precisely those edges of $G$ which are not contained in any $P_3$ in $\pkg$.

\begin{lemma}\label{lem:proto-clusters}
  $V(H)$ is precisely the set of proto-clusters of $G$ with respect to \pkg.
\end{lemma}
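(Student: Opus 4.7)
The plan is to establish the claim via three sub-claims: (i) the cliques in $V(H)$ partition $V(G)$; (ii) each $C \in V(H)$ induces a connected subgraph of $\gfix$; (iii) no edge of $\gfix$ joins two distinct cliques of $V(H)$. Sub-claim (i) is immediate from the construction, since every vertex is introduced as an element of exactly one clique in $V(H)$. Sub-claims (ii) and (iii) together imply that each $C \in V(H)$ is exactly one connected component of $\gfix$, and so the set of proto-clusters is precisely $V(H)$.

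For (ii), I would treat the three types of cliques separately. Level-$1$ cliques $Q_\cl^1, Q_\cl^4$ each contain a single vertex and are trivially connected. For a level-$0$ clique $K_j^i$, I claim that the only $P_3$s in $\pkg$ using two vertices of the same $K_j^i$ are the replacement $P_3$s $v_5v_6v_2$ and $v_1v_7v_8$ introduced while attaching a clause gadget, because every other $P_3$ in $\pkg_{\textsf{var}} \cup \pkg_{\textsf{tra}}$ has its three vertices in three distinct cliques of $V(H)$, and every $P_3$ in $\pkg_{\textsf{pad}}$ places its two same-clique vertices in a clique at level $\geq 2$. Moreover, the index $4\pi(i,\cl)$ uniquely determines $\cl$, so each $K_j^i$ with $j\equiv 0,2\pmod 4$ is affected by at most one clause; hence at most two of its $\binom{5}{2}=10$ internal edges enter $\pkg$, and the remainder keep $K_j^i$ connected in $\gfix$. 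For a clique $W$ on level $\geq 2$, Corollary~\ref{cor:padding} applied during $W$'s padding step produces an edge-disjoint triangle packing $\tau$ such that $(W, \binom{W}{2}\setminus E(\tau))$ is connected; since no internal edges of $W$ are used by $\pkg_{\textsf{var}} \cup \pkg_{\textsf{tra}}$ or by later padding steps (which only use internal edges of the higher-level clique being padded), these surviving internal edges all lie in $\gfix$.

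For (iii), I would traverse the construction once more and confirm that every edge of $G$ between two distinct cliques of $V(H)$ is introduced together with some covering $P_3$ in $\pkg$. Edges between consecutive $K_j^i, K_{j+1}^i$ in a variable gadget are covered by $P_3$s in $\pkg_{\textsf{var}}$ by the invariant explicitly maintained through the clause attachment. Edges arising from the transferring $P_3$s $P_\cl^1,\dots,P_\cl^6$ or from the four $P_3$s connecting a transferring clique to a variable gadget lie, by their very definition, in those $P_3$s. Finally, every remaining edge between a padded clique $W$ and its lower-level $H$-out-neighbors $V$ is covered by a padding $P_3$, because Corollary~\ref{cor:padding} covers the pairs $E\setminus F$ while $F$ captures exactly the pairs already in transferring $P_3$s. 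In particular, every edge of $G$ between two distinct $V(H)$-cliques corresponds to an $H$-edge between those cliques.

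The main obstacle is the bookkeeping behind (ii) and (iii): I must verify both that the replacement $P_3$s $v_5v_6v_2$ and $v_1v_7v_8$ are the only source of same-clique vertex pairs inside any level-$0$ clique and that their effect is bounded by two edges, and simultaneously that the applications of Corollary~\ref{cor:padding} truly leave no inter-clique edge of $G$ uncovered. Both tasks reduce to a careful pass through the clause-attachment and padding steps of \cref{sec:clause-gadget,sec:implementation}.
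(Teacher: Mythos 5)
Your proof is correct and follows essentially the same approach as the paper's: it shows that (a) no edge of $\gfix$ joins two distinct cliques of $V(H)$ because all inter-clique edges of $G$ are covered by packing $P_3$s, and (b) each clique of $V(H)$ retains a connected spanning subgraph in $\gfix$, treated level by level with the level-$0$ replacement $P_3$s $v_5 v_6 v_2$, $v_1 v_7 v_8$ and the connectedness guarantee of \cref{cor:padding} for levels $\geq 2$. You add a detail the paper leaves implicit --- the injectivity of $\pi(i,\cdot)$ to argue that each level-$0$ clique loses at most two internal edges to $\pkg$ --- which is a welcome tightening of the same argument.
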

\begin{proof}
  By construction, all edges in $G$ between two cliques in $V(H)$ are in a $P_3$ in \pkg.
  Thus each proto-cluster is contained in some clique in $V(H)$.
  We claim that each clique $C \in V(H)$ contains a spanning tree of edges which are not contained in a $P_3$ in \pkg.
  If $C \in L_1$, then this is clear; such a $C$ contains only a single vertex and a trivial spanning tree.
  If $C \in L_0$, then there are only two $P_3$s in \pkg\ that contain edges of $C$: The one given by $v_5 v_6 v_2$ and the one given by $v_1 v_7 v_8$ as defined in \cref{sec:clause-gadget} when connecting variable and clause gadgets.
  Since $|C| = 5$, indeed $C$ contains the required spanning tree.
  If $C \in L_i$ for $i \geq 2$, then by the connectedness property of \cref{cor:padding}, $C$ has the required spanning tree.
\end{proof}

Recall that each solution~$S$ to $(G, \pkg, 0)$ cannot remove any edge from $G$ which is not contained in a $P_3$ in \pkg.
Thus, since $V(H)$ is a vertex partition of $G$, each solution~$S$ generates a cluster graph $G \triangle S$ whose clusters induce a coarser vertex partition than $V(H)$.
This leads to the following.
\begin{observation}\label{obs:cluster-struc}
  For each solution $S$ to $(G, \pkg, 0)$, each cluster in $G \triangle S$ is a disjoint union of cliques in $V(H)$.
\end{observation}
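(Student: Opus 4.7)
The plan is to leverage \cref{lem:proto-clusters} together with the packing disjointness to show that no solution can delete an edge of the proto-cluster spanning trees, and then deduce the union structure from the fact that $V(H)$ is a vertex partition.

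First I would fix a solution $S$ for $(G,\pkg,0)$. Since $\ell = 0$, we have $|S| = |\pkg|$. Because every $P_3$ in $\pkg$ is a forbidden induced subgraph of any cluster graph, at least one of its three vertex pairs (two edges and one non-edge) must lie in $S$. Since the $P_3$s in $\pkg$ are pairwise modification-disjoint, the $|\pkg|$ such vertex pairs counted in this way are all distinct, and because $|S| = |\pkg|$ a straightforward pigeonhole argument shows that each $P_3$ in $\pkg$ contributes exactly one modification, and conversely every modification in $S$ belongs to some $P_3$ in~$\pkg$.

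Next I would use the previous fact to conclude that $S$ contains no edge of $G$ that is outside every $P_3$ in $\pkg$, i.e., no edge of the subgraph whose connected components are the proto-clusters. By \cref{lem:proto-clusters}, each clique $C \in V(H)$ is a proto-cluster, so $C$ contains a spanning tree of edges that are untouched by $S$. Consequently the vertices of $C$ stay in one connected component of $G \triangle S$, hence in a single cluster of the cluster graph $G \triangle S$.

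It then follows that every cluster of $G \triangle S$ is a union of cliques from $V(H)$, and since $V(H)$ is a partition of $V(G)$ this union is automatically disjoint, giving the claim. The only mildly delicate point is the pigeonhole step establishing that no edit in $S$ can fall outside $\bigcup \pkg$; everything else is essentially a direct consequence of \cref{lem:proto-clusters} and the definition of a proto-cluster. I do not anticipate any substantive obstacle, since the argument is already sketched in the paragraph preceding the observation.
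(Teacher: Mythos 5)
Your proposal is correct and takes essentially the same route as the paper's (terse) justification: the paper's preceding paragraph also argues that a solution cannot delete any edge lying outside every $P_3$ of $\pkg$, combines this with \cref{lem:proto-clusters} to keep each proto-cluster inside a single cluster, and notes that $V(H)$ partitions $V(G)$. You merely make explicit the pigeonhole step (each $P_3$ in $\pkg$ contributes exactly one edit, so $S\subseteq\bigcup\pkg$), which the paper leaves implicit.
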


Using the above structural observations, we are now ready to prove the soundness of the construction.

\begin{lemma}
  If the constructed instance $(G,{\mathcal{H}}, \ell = 0)$ is a YES-instance, then the formula $\Phi$ is satisfiable.
\end{lemma}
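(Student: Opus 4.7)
The plan is to start from a solution~$S$ with $|S|=|\pkg|$ and recover a satisfying assignment for~$\Phi$. Since each $P_3$ of~$\pkg$ must contain at least one edit of~$S$ and the $P_3$s of~$\pkg$ are modification-disjoint, the equality $|S|=|\pkg|$ forces each $P_3$ of~$\pkg$ to contain \emph{exactly} one edit of~$S$. Combined with \cref{obs:cluster-struc}, the clusters of $G\triangle S$ form a coarsening of~$V(H)$, so I will say that two cliques $A,B\in V(H)$ are \emph{merged} if they share a cluster of $G\triangle S$ and \emph{separated} otherwise. I will repeatedly use that if every vertex pair between two cliques $A$ and $B$ is a dividing non-edge, then $A$ and $B$ are separated.

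The first step is to show that for each variable $x_i$ the solution restricted to the cycle $K_0^i,\ldots,K_{4m_i-1}^i$ falls into one of exactly two patterns. For each even~$j$, the $25$ $P_3$s in $\pkg_{\textsf{var}}$ on the triple $(K_j^i,K_{j+1}^i,K_{j+2}^i)$ partition the $25$ edges between $K_j^i$ and $K_{j+1}^i$, the $25$ edges between $K_{j+1}^i$ and $K_{j+2}^i$, and the $25$ non-edges between $K_j^i$ and $K_{j+2}^i$. A local edit count then shows that with the one-edit-per-$P_3$ budget only three configurations are possible: (a) all three cliques in one cluster, (b) $K_j^i\sim K_{j+1}^i$ and $K_{j+1}^i$ separated from $K_{j+2}^i$, or (c) $K_j^i$ separated from $K_{j+1}^i$ and $K_{j+1}^i\sim K_{j+2}^i$. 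Because every variable occurs in at least two clauses, $m_i\geq 2$, the cycle has length at least~$8$, and the non-edges between cliques $K_s^i,K_t^i$ whose indices are neither consecutive nor both belong to a common $\pkg_{\textsf{var}}$-triple are dividing. Configuration~(a) at some triple~$j$ propagates: the subsequent triple can only be (c), and by induction every further triple must be (c); closing the cycle forces $K_{j-1}^i$ and $K_{j+1}^i$ (which are at cyclic distance~$2$ via $K_j^i$ but centers of different triples) into the cluster of $K_j^i$, contradicting a dividing non-edge. The same kind of argument rules out a triple in configuration~(c) being immediately followed by a triple in~(b). Hence either every triple is~(b), giving all even pairs merged, or every triple is~(c), giving all odd pairs merged. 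Set $\phi(x_i)=\false$ in the former case and $\phi(x_i)=\true$ in the latter.

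The second step verifies that~$\phi$ satisfies every clause~$\Gamma_\cl$. The non-edges between $Q_\cl^1$ and $Q_\cl^4$ are dividing, so $Q_\cl^1$ and $Q_\cl^4$ are separated, and transitivity forces at least one of the pairs $(Q_\cl^1,Q_\cl^2), (Q_\cl^2,Q_\cl^3), (Q_\cl^3,Q_\cl^4)$ to be separated. Consider the case $Q_\cl^1\not\sim Q_\cl^2$; the other two are symmetric, involving $T_\cl^\vq$ or $T_\cl^\vr$ in place of~$T_\cl^\vp$. Each of the two transferring $P_3$s $P_\cl^1,P_\cl^2$ contains one edge between $T_\cl^\vp$ and $Q_\cl^1$ and one edge between $Q_\cl^1$ and~$Q_\cl^2$; the latter is deleted and uses the one allowed edit, so the former is preserved and hence $T_\cl^\vp\sim Q_\cl^1$. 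The cliques $Q_\cl^1$ and $K_{4\pi(\vp,\cl)+1}^\vp$ are not adjacent in~$H$, so every vertex pair between them is a dividing non-edge and they are separated; thus $T_\cl^\vp\not\sim K_{4\pi(\vp,\cl)+1}^\vp$. Each of the four transferring $P_3$s $w_1v_1v_3$, $w_2v_2v_4$, $w_3v_2v_3$, $w_4v_1v_4$ (in the notation of \cref{sec:clause-gadget}) now has its $T_\cl^\vp$-side edge deleted, which uses up its single edit, so its other edge survives and forces the merge of $K_{4\pi(\vp,\cl)+1}^\vp$ with either $K_{4\pi(\vp,\cl)+2}^\vp$ (if $x_\vp$ occurs positively in $\Gamma_\cl$) or $K_{4\pi(\vp,\cl)}^\vp$ (if negatively). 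The former is an odd-pair merge in $x_\vp$'s gadget, forcing $\phi(x_\vp)=\true$; the latter is an even-pair merge, forcing $\phi(x_\vp)=\false$. In both cases the literal of $x_\vp$ in~$\Gamma_\cl$ evaluates to true under~$\phi$.

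The main obstacle I anticipate is establishing the variable-gadget dichotomy: the local edit count on each triple, the cyclic propagation, and the several case distinctions involving dividing non-edges at cycle distances~$2$ and~$3$ have to be carried out carefully to exclude every alternative to all-(b) or all-(c). Once this dichotomy is in place, the clause analysis is a relatively direct chase through the one-edit-per-$P_3$ budget of each transferring $P_3$ and a few applications of the dividing non-edge principle.
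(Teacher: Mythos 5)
Your proposal is correct and follows essentially the same approach as the paper: establish the all-even/all-odd dichotomy in each variable cycle using the one-edit-per-$P_3$ budget and dividing non-edges, then chase each transferring $P_3$'s surviving edge to show the chosen literal is satisfied. One small imprecision: after a variable triple $(K_j^i,K_{j+1}^i,K_{j+2}^i)$ is connected to a clause gadget, four of the $25$ three-clique $P_3$s in $\pkg_{\textsf{var}}$ are replaced by two-clique $P_3$s and four transferring $P_3$s, so the local edit-count argument should appeal only to the remaining three-clique $P_3$s (at least one of which always survives), rather than to all $25$; the paper sidesteps this by invoking merely the existence of \emph{some} $P_3$ centered in $K_{j+1}^i$.
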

\begin{proof}
Suppose that there exists a set of vertex pairs $S\subseteq \binom{V}{2}$ so that $G\Delta S$ is a union of vertex-disjoint cliques and $|S|-|{\mathcal{H}}|=0$. In other words, there exists a solution that transforms $G$ into a cluster graph~$G'$ by editing exactly one edge or non-edge of every $P_3$ of $\mathcal{H}$. We will construct a satisfying assignment $\alpha \colon \{x_0, x_1, \ldots, x_{n - 1}\} \to \{\true, \false\}$ for the formula $\Phi$.

By \cref{obs:cluster-struc}, the set of clusters in $G'$ induces a partition of the cliques in $V(H)$.
Recall that we say that two cliques in $V(H)$ are \emph{merged} if they are in the same cluster in $G'$ and \emph{separated} otherwise.

To define $\alpha$, we need the following observation on the solution.
Consider variable $x_i$ and the cliques $K^i_j$, $j = 0, 1, \ldots, 4m_i - 1$, in $x_i$'s variable gadget.
Call a pair $K^i_j$, $K^i_{j + 1}$ \emph{even} if $j$ is even (where $j + 1$ is taken modulo $4m_i$) and call this pair \emph{odd} otherwise.
We claim that either (i) each even pair is merged and each odd pair is separated, or (ii) each odd pair is merged and each even pair is separated (and not both).
Note that, for each even $j$, pair $K^i_j$, $K^i_{j + 1}$ is merged or pair $K^i_{j + 1}$, $K^i_{j + 2}$ is merged, because there is a $P_3$ in~$G$ containing vertices in these cliques with center in $K^i_{j + 1}$.
To show the claim, it is thus enough to show that not both an odd pair and an even pair is merged.

For the sake of contradiction, suppose that an odd pair is merged and an even pair is merged.
Then, there exists an index $j \in \{0, 1, \ldots, 4m_i - 1\}$ and a cluster~$C$ in $G'$ such that $K^i_j, K^i_{j + 1}, K^i_{j + 2} \subseteq C$, where here and below the indices are taken modulo~$4m_i$.
Observe that there are no edges between $K^i_j$ and $K^i_{j + 2}$ in~$G$.
If $j$ is odd, then all of these non-edges are non-packed.
All of these non-edges are thus in~$S$.
This is a contradiction to the fact that $S$ contains at most $|\pkg|$ vertex pairs.
Thus, $j$ is even.

We now show that for each $k \in \mathbb{N} \cup \{0\}$, pair $K^i_{j + 1 + 2k}$, $K^i_{j + 2 + 2k}$ is merged by induction on~$k$.
Clearly, for $k = 0$, this holds by supposition.
If $k > 0$ then, by the construction of $\pkg_{\textsf{var}}$, there are non-packed non-edges between $K^i_{j + 2k - 1}$ and $K^i_{j + 2k + 1}$.
Combining this with the fact that $K^i_{j + 1 + 2(k - 1)} = K^i_{j + 2k - 1}$ and $K^i_{j + 2 + 2(k - 1)} = K^i_{j + 2k}$ are merged by inductive assumption, it follows that $K^i_{j + 2k}$ and $K^i_{j + 2k + 1}$ are separated.
Since there is a $P_3$ in~$G$ connecting $K^i_{j + 2k}$, $K^i_{j + 2k + 1}$, and $K^i_{j + 2k + 2}$ with center in $K^i_{j + 2k + 1}$ and $S$ contains at most
one edit in this $P_3$, it follows that $K^i_{j + 2k + 1}$, $K^i_{j + 2k + 2}$ are merged, as required.

It now follows in particular that $K^i_{j - 1}$ and $K^i_j$ are merged (recall that indices are taken modulo $4m_i$). Since by assumption also $K^i_j$ and $K^i_{j + 1}$ are merged, we have that $K^i_{j'}$, $K^i_{j' + 1}$, and $K^i_{j' + 2}$ are contained in the same cluster in~$G'$ for some odd $j'$.
As already argued, this leads to a contradiction.
Thus the claim holds.

We define the assignment $\alpha$ as follows.
For each variable $x_i$, $i=0,1,\ldots,n-1$, if in $G'$ all even pairs $K_{2j}^{i}$, $K_{2j+1}^{i}$, $j=0,1\ldots,m_{i}-1$, are merged, then $\alpha(x_i)= \false$.
Otherwise $\alpha(x_i)= \true$.

We now show that $\alpha$ satisfies $\Phi$.
Consider an arbitrary clause $\Gamma_\cl$ of $\Phi$ containing the three variables $x_\vp$, $x_\vq$, and~$x_\vr$.
We use the same notation as when defining the clause gadget and its connection to the variable gadget.
Since there are non-packed non-edges between cliques $Q_\cl^1$ and $Q_\cl^4$, cliques $Q_\cl^1$ and $Q_\cl^4$ must end up in different clusters in $G'$. In other words, $Q_\cl^1$ and $Q_\cl^4$ are separated.
Observe that there is a path in $G$ consisting of vertices in $Q_\cl^1$, $Q_\cl^2$, $Q_\cl^3$, and $Q_\cl^4$ in this sequence.
Since each of these four cliques is a proto-cluster (\cref{lem:proto-clusters}), in order to separate $Q_\cl^1$ and $Q_\cl^4$, one of the following three cases must happen in the solution~$S$: (i) The edges between $Q_\cl^1$ and $Q_\cl^2$ are deleted. In other words, $Q_\cl^1$ and $Q_\cl^2$ are separated.
(ii)~$Q_\cl^2$ and $Q_\cl^3$ are separated.
(iii)~$Q_\cl^3$ and $Q_\cl^4$ are separated.
We now show that case (i), (ii), and (iii) imply that variable $x_\vp$, $x_\vq$, and $x_\vr$, respectively, is set by $\alpha$ so as to satisfy $\Gamma_\cl$.
We only give the proof showing that case~(i) implies that $x_\vp$ is set accordingly.
The other cases are analogous.

Assume that case~(i) holds.
Then, by the constraints imposed by the two transferring $P_3$s $P_\cl^1$ and $P_\cl^2$, cliques $T_\cl^\vp$ and $Q_\cl^1$ are merged.
Since there are non-packed non-edges between $K_{4\pi(\vp,\cl)+1}^\vp$ and $Q_\cl^1$, it follows that $K_{4\pi(\vp,\cl)+1}^{\vp}$ and $Q_\cl^1$ are separated.
Consider the case that $x_\vp$ appears positively in $\Gamma_\cl$.
Then, when connecting the variable gadget of $x_\vp$ to the clause gadget of $\Gamma_\cl$ we have introduced into~$G$ a $P_3$ connecting $T^\vp_\cl$, $K_{4\pi(\vp,\cl)+1}^{\vp}$, and $K_{4\pi(\vp,\cl)+2}^{\vp}$ with center in $K_{4\pi(\vp,\cl)+1}^{\vp}$ (for example, the $P_3$ given by $w_1 v_1 v_3$).
Since $T^\vp_\cl$ and $K_{4\pi(\vp,\cl)+1}^{\vp}$ are separated, thus $K_{4\pi(\vp,\cl)+1}^{\vp}$ and $K_{4\pi(\vp,\cl)+2}^{\vp}$ are merged.
There is thus at least one odd pair in $x_\vp$'s variable gadget that is merged and thus $\alpha(x_\vp) = \true$.
The case where $x_\vp$ appears negatively in $\Gamma_\cl$ is similar: We have introduced into~$G$ a $P_3$ connecting $T^\vp_\cl$, $K_{4\pi(\vp,\cl)+1}^{\vp}$, and $K_{4\pi(\vp,\cl)}^{\vp}$ with center in $K_{4\pi(\vp,\cl)+1}^{\vp}$ (for example, the $P_3$ given by $w_1 v_1 v_3$).
It follows that $K_{4\pi(\vp,\cl)+1}^{\vp}$, and $K_{4\pi(\vp,\cl)}^{\vp}$ are merged, showing that at least one even pair is merged in $x_\vp$'s variable gadget.
Thus, $\alpha(x_\vp) = \false$.

Thus each clause $\Gamma_\cl$ is satisfied, finishing the proof.
\end{proof}

\section{XP-algorithm for half-integral packings}\label{sec:xp-algorithm}

In this section, we study \pCEA\ in the special setting where every vertex is incident with at most two $P_3$s of the packing \pkg. More precisely, we consider the following variant of \pCEA.

\defprobMD{\pCEATlong\ (\pCEAT)}
{A graph $G=(V,E)$, a modification-disjoint packing \pkg\ of induced $P_{3}$s of $G$ such that every vertex $v\in V(G)$ is incident with at most two $P_3$s of \pkg, and a non-negative integer~$\ell$.}
{Is there a cluster editing set, i.e., a set of vertex pairs
$S\subseteq \binom{V}{2}$ so that $G\triangle S$ is a union of disjoint cliques, with $|S|-|\mathcal{H}|\leq\ell$?}

We give a polynomial-time algorithm to solve \pCEAT\ when $\ell$ is a fixed constant, in contrast with the NP-hardness of the general version of \pCEA\ when $\ell=0$.
\begin{thmhalfintegralxp}[Restated]
  \thmhalfintegralxpstatement
\end{thmhalfintegralxp}
The main tool in proving \cref{thm:half-integral-xp} is a polynomial-time algorithm for the case where \(\ell = 0\):
\newcommand\thmhalfintegralpolynoexcessstatement{%
  \pCEATlong\ can be solved in polynomial time when $\ell=0$, that is, when no excess edits are allowed.%
}
\begin{theorem} \label{thm:half-integral-poly-no-excess}
  \thmhalfintegralpolynoexcessstatement
\end{theorem}
The proof of \cref{thm:half-integral-poly-no-excess} will be given in~\cref{sec:poly-algo}. With this tool in hand, we can show \cref{thm:half-integral-xp}.

\begin{algorithm}[t]
  \DontPrintSemicolon

  \KwIn{An instance \((G, \pkg, \ell)\) of \pCEAT.}

  \KwOut{Whether \((G, \pkg, \ell)\) is a YES-instance.}

  \ForEach{\(\ell_a = 0, 1, \ldots, \ell\)}{
    \ForEach{\(\ell_b = 0, 1, \ldots, \ell - \ell_a\)}{
      \ForEach{set \(S_a\) of \(\ell_a\) vertex pairs \(\{u, v\} \in \binom{V(G)}{2}\) such that \(\forall P \in \pkg \colon |\{u, v\} \cap V(P)| \leq 1\)}{
        \(G_a \leftarrow G \triangle S_A\)\;
        \ForEach{set \(\pkg_b\) of \(\ell_b\) distinct \(P_3\)s in \(\pkg\)}{
          \ForEach{set \(S_b\) containing for each \(P \in \pkg_b\) at least two vertex pairs in \(V(P)\)}{
            \If{\(|S_a| + |S_b| \leq |\pkg_b| + \ell\)}{
              \(G_b \leftarrow G_a \triangle S_B\)\;
              \(\pkg' \leftarrow \pkg \setminus \pkg_b\)\;
              \If(\tcc*[f]{Using \cref{thm:half-integral-poly-no-excess}}){\(G_b\) has a cluster-editing set with \(|\pkg'|\) edits}{accept and halt}
            }
          }
        }
      }
    }
  }
  reject\;
  \caption{Solve \pCEAT.}
  \label{alg:half-integral-xp}
\end{algorithm}

\begin{proof}[Proof of \cref{thm:half-integral-xp}]
  Let \((G, \pkg, \ell)\) be an instance of \pCEAT.
  The algorithm is given in \cref{alg:half-integral-xp}.
  Essentially, it guesses (by trying all possibilities) the number, \(\ell_a\), of excess edits that are not contained in any \(P_3\) in \pkg\ and guesses the concrete edits to be made (Lines~1-4).
  Then it guesses the \(P_3\)s in \pkg\ that harbor the remaining excess edits and it guesses how these \(P_3\)s are resolved (Lines~5-9).
  Then it checks whether the remaining instance has a cluster-editing set without excess edits over the remaining \(P_3\) packing \(\pkg'\) using the algorithm from \cref{thm:half-integral-poly-no-excess}.

  For the running time, observe that there are at most \(n^{2\ell_a}\) choices for \(S_a\).
  Since each vertex is in at most two \(P_3\)s in \pkg\ and each \(P_3\) covers exactly three vertices, we have \(3|\pkg| \leq 2n\) and thus there are in total at most \(n\) \(P_3\)s in \(\pkg\).
  Thus, there are \(O(n^{\ell_b})\) choices for \(\pkg_b\).
  Since there are four possibilities to select a set of at least two vertex-pairs in the vertex set of a \(P_3\), there are \(O(4^{\ell_b})\) possibilities for \(S_b\) in Line~6.
  Hence, overall the running time is \(O(4^{\ell_b}n^{2\ell_a + \ell_b + O(1)}) \leq n^{2\ell + O(1)}\).

  It remains to prove the correctness.
  If the algorithm accepts, then there is a cluster-editing set~\(S_0\) for \(G_b\) with \(|\pkg'|\) edits.
  Since \(S_0\) is contained in the vertex sets of the \(P_3\)s in \(\pkg'\), set \(S_0\) is disjoint from \(S_a\) and \(S_b\).
  Thus, \(G \triangle S^\star\) is a cluster graph where \(S^\star = S_a \cup S_b \cup S_0\).
  Moreover, \(|S^\star| \leq |\pkg'| + |\pkg_b| + \ell = |\pkg| + \ell\), and thus, \((G, \pkg, \ell)\) is a YES-instance.

  Conversely, if \((G, \pkg, \ell)\) is a YES-instance, then there is a cluster-editing set~\(S^\star\) of \(G\) with \(|S^\star| \leq |\pkg| + \ell\).
  Let \(S^\star_a\) be the subset of \(S^\star\) that contains precisely those edits in \(S^\star\) that are not contained in \(P_3\)s of \(\pkg\).
  In one of the iterations of \cref{alg:half-integral-xp}, \(\ell_a = |S^\star_a|\) and \(S_a = S^\star_a\).
  Now let \(\pkg_b^\star\) be the subset of \(\pkg\) that contains precisely those \(P_3\)s \(P\) such that \(S^\star\) contains at least two edits in \(V(P)\).
  Observe that \(|\pkg_b^\star| \leq \ell - \ell_a\).
  Thus, in one of the iterations of \cref{alg:half-integral-xp}, we have \(\ell_b = |\pkg_b^\star|\) and \(\pkg_b = \pkg_b^\star\).
  Moreover, in one of the iterations \(S_b = S^\star_b\), where \(S^\star_b\) is the subset of \(S^\star\) that contains precisely those edits that are contained in the \(P_3\)s in \(\pkg_b\).
  Let \(S^\star_0 = S^\star \setminus (S^\star_a \cup S^\star_b)\).
  Since each edit in \(S^\star_0\) is contained in a unique \(P_3\) in \(\pkg \setminus \pkg_b^\star\), we have \(|S_a| + |S_b| = |S^\star_a| + |S^\star_b| \leq |\pkg_b^\star| + \ell = |\pkg_b| + \ell\).
  Thus, in that iteration the algorithm proceeds to the if-condition in Line~10.
  Again since each edit in \(S^\star_0\) is contained in a unique \(P_3\) in \(\pkg \setminus \pkg_b^\star\), this set witnesses that \((G_b, \pkg', 0)\) is a YES-instance and thus the algorithm accepts.
  Hence, the algorithm is correct.
\end{proof}

\subsection{Polynomial-time algorithm for zero excess edits}\label{sec:poly-algo}
Let \pCEMTlong\ (\pCEMT) be the special case of \pCEAT\ where $\ell = 0$.
That is, an instance of \pCEMT\ is given by a tuple $(G, \mathcal{H})$ of a graph $G$ and a half-integral $P_3$ packing $\mathcal{H}$ in $G$.
In this section we give a polynomial-time algorithm for \pCEMT.
Again, we use the term \emph{proto-clusters} to denote the connected components of the graph obtained by removing the edges of all packed $P_3$s.

The intuition behind the polynomial-time result is that, with the constraint that every vertex $v\in V(G)$ is incident with at most two packed $P_3$s, we cannot freely merge or separate two large proto-clusters without excess edits as in the NP-hardness proof of \cref{sec:lower-bound}.
This is because the triangles formed by the packed $P_3$s cannot cover every vertex pair between two large proto-clusters. Thus we can separate the large proto-clusters and deal with them separately.

The polynomial-time algorithm mainly proceeds by applying reduction rules that simplify the instance step by step.
Herein, our first goal is to eliminate proto-clusters of size at least four, which can be done by a series of straightforward reduction rules (\cref{sec:simple-rules}).
We then look at proto-clusters of size three and observe that their connections to the rest of the graph have quite a limited structure.
This observation can be used to eliminate proto-clusters of size three as well (\cref{sec:clust-size-struc}).
The reduction rules we have developed at this point give more structural observations on smaller proto-clusters which can be used to show that the size of solution clusters is at most four (\cref{sec:sol-clust-size}).
Afterwards, we show that the only situation in which solution clusters of size four can occur is when there is a certain path-like structure in the instance.
A final, quite involved reduction rule takes care of such path-like structures (\cref{sec:path-like}).
This then results in an instance with a solution whose clusters have size at most three.
Using this cluster-size bound we can finally show that, if there is a solution, then there is also one that only deletes edges.
This then leads to a formulation as an instance of \textsc{2-SAT} (\cref{sec:2sat}), which is well-known to be polynomial-time solvable.

We use the following notation.
We say a proto-cluster $C$ is \emph{isolated} from a proto-cluster $D$ if there are no edges of $G$ between $C$ and $D$.
We classify the $P_3$s of $\mathcal{H}$ into four types.
For an induced $P_3$ $xyz\in \mathcal{H}$:
\begin{itemize}
\item if $x,y$ belong to one proto-cluster and $z$ belongs to another proto-cluster, or symmetrically $y,z$ belong to one proto-cluster and $x$ belongs to another proto-cluster, then $xyz$ is a \emph{type-$\alpha$} $P_3$;
\item if $x,z$ belong to one proto-cluster and $y$ belongs to another proto-cluster, then $xyz$ is a \emph{type-$\beta$} $P_3$;
\item if $x,y,z$ belong to three distinct proto-clusters respectively, then $xyz$ is a \emph{type-$\gamma$} $P_3$; and
\item if $x,y,z$ belong to one proto-cluster then $xyz$ is a \emph{type-$\delta$} $P_3$.
\end{itemize}
As mentioned, in the following, we present a series of reduction rules, which are algorithms that take an instance of \pCEMT\ and produce a new instance of \pCEMT.
By saying that a reduction rule is \emph{safe}, we mean that the instance before applying this reduction rule is a YES-instance if and only if the instance after applying this reduction rule is a YES-instance.
Since the $P_3$s of $\mathcal{H}$ are modification-disjoint, we have the following handy observation.
\begin{observation} \label{obs2}
  A solution $S$ to an instance of \pCEMT\ must edit exactly one edge or non-edge of every $P_3$ of $\mathcal{H}$, and neither non-packed edges nor non-packed non-edges can be edited by $S$.
\end{observation}

\subsubsection{Simple reduction rules}\label{sec:simple-rules}
We start by getting rid of several simple situations.
\begin{reduction} \label{RR1}
For any proto-cluster $C$, if there are two vertices $u,v\in V(C)$ such that $uv$ is a non-packed non-edge, i.e., $uv$ is not covered by any $P_3$ of $\mathcal{H}$, then return NO.
\end{reduction}

\begin{lemma}
Reduction Rule~\ref{RR1} is safe.
\end{lemma}
\begin{proof}
Given an instance $(G,{\mathcal{H}})$ of \pCEMT\ satisfying the condition of Reduction Rule~\ref{RR1}, suppose for contradiction that there is a solution $S$ to this instance. Since $u,v$ belong to the same proto-cluster, there is a non-packed path $P$ from $u$ to $v$. By Observation~\ref{obs2}, $uv\notin S$ and none of the edges of $P$ is edited by $S$. Thus $G\triangle S$ is not a cluster graph, contradicting that the instance has a solution. This completes the proof for the lemma.
\end{proof}

The second reduction rule handles type-$\beta$ and type-$\delta$ $P_3$s (see \cref{fig:ReductionRule2}).

\begin{reduction} \label{RR2}
If there is a type-$\beta$ or type-$\delta$ $P_3$ $xyz\in\mathcal{H}$, insert the edge $xz$ and remove $xyz$ from~$\mathcal{H}$.
\end{reduction}

\begin{figure}[t]
\begin{center}
\begin{tikzpicture}[scale=0.7]

		\node (4) at (-4, -0) {};
		\node [style=nodestyle2] (5) at (-4, -0.75) {};
		\node [style=nodestyle2] (6) at (-4, 0.75) {};
		\node [style=nodestyle2] (7) at (-2, -0) {};

        \draw[draw=black] (4) ellipse (15pt and 45pt);
        \draw[draw=black] (7) ellipse (15pt and 45pt);
		\node  (16) at (-3, -2.25) {type-$\beta$};

		\draw [style=type1] (5) to (7);
		\draw [style=type1] (6) to (7);
		\draw [style=type2] (6) to (5);

\begin{scope}[shift={(-3cm,0cm)}]

        \node (11) at (5.5, -0) {};
		\node [style=nodestyle2] (12) at (5.5, 1) {};
		\node [style=nodestyle2] (13) at (6, -0) {};
		\node [style=nodestyle2] (14) at (5.5, -1) {};
        \draw[draw=black] (5.6,0) ellipse (20pt and 45pt);

		\node  (18) at (5.75, -2.25) {type-$\delta$};

		\draw [style=type1] (12) to (13);
		\draw [style=type1] (13) to (14);
		\draw [style=type2] (12) to (14);

\end{scope}
\end{tikzpicture}
\end{center}
\vspace*{-3mm}
\caption{Examples for Reduction Rule \ref{RR2}.} \label{fig:ReductionRule2}
\end{figure}

\begin{lemma}
Reduction Rule~\ref{RR2} is safe.
\end{lemma}
\begin{proof}
  Suppose that the given instance of \pCEMT\ is $(G,{\mathcal{H}})$ such that there exists a type-$\beta$ $P_3$ $xyz$ in $G$. After inserting the edge $xz$ and removing $xyz$ from $\mathcal{H}$, we get an instance $(G',{\mathcal{H}}')$. We claim that $(G,{\mathcal{H}})$ is a YES-instance if and only if $(G',{\mathcal{H}}')$ is a YES-instance. On one hand, suppose that $(G',{\mathcal{H}}')$ is a YES-instance and $S'$ is a cluster editing set of $G'$ such that $|S'|=|{\mathcal{H}}'|$. Obviously, $S'\cup\{xz\}$ is a cluster editing set for $G$ and $|S'\cup\{xz\}|=|{\mathcal{H}}|$. On the other hand, suppose that $(G,{\mathcal{H}})$ is a YES-instance and $S$ is a cluster editing set of $G$ such that $|S|=|\mathcal{H}|$. We show that $xz\in S$ and $S\setminus \{xz\}$ is the solution for $(G',{\mathcal{H}}')$. For contradiction, suppose this is not true. Then either $xy\in S$ or $yz\in S$ holds. Without loss of generality we assume that $xy\in S$. Suppose that after deleting $xy$ from $G$ and removing $xyz$ from $\mathcal{H}$, we get an instance $(G'',{\mathcal{H}}'')$. Since $x,z$ belong to one proto-cluster of $G$, there is a non-packed path $P$ from $x$ to $z$ in $G$. Thus $x,z$ belong to one proto-cluster of $G''$. Since $xyz$ is removed from $\mathcal{H}$, $xz$ becomes a non-packed non-edge. By Reduction Rule~\ref{RR1}, $(G'',{\mathcal{H}}'')$ is a NO-instance, contradicting that $S$ is a solution to $(G,{\mathcal{H}})$.

A similar analysis applies to the case that $xyz\in \mathcal{H}$ is a type-$\delta$ $P_3$. This completes the proof for the lemma.
\end{proof}

After applying Reduction Rules~\ref{RR1} and~\ref{RR2} exhaustively, if the algorithm did not return NO, then there is no type-$\beta$ or type-$\delta$ $P_3$s in the instance.
The next reduction rule applies to the case in which there is both a non-packed non-edge and a packed edge between two proto-clusters, see \cref{fig:ReductionRule3} for an illustration.

\begin{reduction} \label{RR3}
For any two proto-clusters $A$ and $B$, if there is a non-packed non-edge $uv$ such that $u\in V(A)$ and $v\in V(B)$, and there is a packed edge $xy$ such that $x\in V(A)$ and $y\in V(B)$ (not necessarily distinct from $u$ or $v$), then delete $xy$ and remove the corresponding packed $P_3$ from $\mathcal{H}$.
\end{reduction}

\begin{figure}[t]
\begin{center}
\begin{tikzpicture}[scale=0.7]

		\node (0) at (-5, -0) {};
		\node (1) at (-3, -0) {};
		\node [style=nodestyle2,label=above:$u$] (2) at (-5, 0.75) {};
		\node [style=nodestyle2,label=above:$v$] (3) at (-3, 0.75) {};
		\node [style=nodestyle2,label=below:$x$] (4) at (-5, -0.5) {};
		\node [style=nodestyle2,label=below:$y$] (5) at (-3, -0.5) {};
		\node [style=nodestyle2] (6) at (-1, 0.35) {};
        \node (9) at (-1, -0) {};

		\node[font=\large] (7) at (-5, -2.3) {$A$};
		\node[font=\large] (8) at (-3, -2.3) {$B$};
		\node  (16) at (-3, -3) {type-$\gamma$};

        \draw[draw=black] (0) ellipse (15pt and 45pt);
        \draw[draw=black] (1) ellipse (15pt and 45pt);
        \draw[draw=black] (9) ellipse (15pt and 45pt);

		\draw [draw=black,thick,dotted] (2) to (3);

		\draw [style=type1] (5) to (6);

		\draw [style=type1] (4) to (5);
        \draw [style=type2] (4) to (6);

\begin{scope}[shift={(9cm,0cm)}]

		\node (0) at (-5, -0) {};
		\node (1) at (-3, -0) {};
		\node [style=nodestyle2,label=above:$u$] (2) at (-5, 0.75) {};
		\node [style=nodestyle2,label=above:$v$] (3) at (-3, 0.75) {};
		\node [style=nodestyle2,label=below:$x$] (4) at (-5, -0.7) {};
		\node [style=nodestyle2,label=below:$y$] (5) at (-3, -0.7) {};
		\node [style=nodestyle2] (6) at (-3, 0) {};

		\node  (18) at (-4, -3) {type-$\alpha$};

		\node[font=\large] (7) at (-5, -2.3) {$A$};
		\node[font=\large] (8) at (-3, -2.3) {$B$};

        \draw[draw=black] (0) ellipse (15pt and 45pt);
        \draw[draw=black] (1) ellipse (15pt and 45pt);

		\draw [draw=black,thick,dotted] (2) to (3);

		\draw [style=type1] (5) to (6);

		\draw [style=type1] (4) to (5);
        \draw [style=type2] (4) to (6);
\end{scope}

\end{tikzpicture}
\end{center}
\caption{Examples for Reduction Rule \ref{RR3}.} \label{fig:ReductionRule3}
\end{figure}

\begin{lemma}
Reduction Rule~\ref{RR3} is safe.
\end{lemma}
\begin{proof}
  Given an instance $(G,{\mathcal{H}})$ of \pCEMT\ satisfying the condition of Reduction Rule~\ref{RR3} with $xy$ covered by a type-$\gamma$ $P_3$ $xyz$.
  Without loss of generality, we do not analyze the symmetrical case where $x$ is the center vertex of the $P_3$ instead of $y$.
  We get an instance $(G',{\mathcal{H}}')$ of \pCEMT\ after deleting $xy$ and removing $xyz$ from $\mathcal{H}$.
  We claim that $(G,{\mathcal{H}})$ is a YES-instance if and only if $(G',{\mathcal{H}}')$ is a YES-instance.
For the soundness, assume that $(G',{\mathcal{H}}')$ is a YES-instance and $S'$ is a cluster editing set of size $|{\mathcal{H}}'|$ for $G'$. Then obviously $S'\cup \{xy\}$ is a solution to $(G,{\mathcal{H}})$.
For the completeness, assume that $(G,{\mathcal{H}})$ is a YES-instance and $S$ is a cluster editing set of size $|\mathcal{H}|$ for $G$. We claim that $xy\in S$. Suppose for contradiction that $xy\notin S$. Then $xy$ becomes a non-packed edge in $G\triangle S$. Since $u,x\in V(A)$ and $v,y\in V(B)$, there is a non-packed path $P_A$ from $u$ to $x$ and a non-packed path $P_B$ from $v$ to $y$ in $G$. By Observation~\ref{obs2}, the edges of $P_A$ and $P_B$ are not edited by $S$ and $uv\notin S$. Thus there is a non-packed path from $u$ to $v$. Since $uv$ is a non-packed non-edge in $G\triangle S$, $G\triangle S$ is not a cluster graph, contradicting the assumption that $S$ is a solution to $(G,{\mathcal{H}})$.

A similar analysis applies to the case in which $xy$ is covered by a type-$\alpha$ $P_3$ $xyz$ (and its symmetrical case where $x$ is the center vertex instead of $y$). This concludes the proof for the lemma.
\end{proof}

The next reduction rule deals with isolated cliques in graph $G$.

\begin{reduction} \label{RRIsClique}
If there is a proto-cluster $C$ which is an isolated clique of $G$, then remove $C$ from the graph.
\end{reduction}

\begin{lemma} \label{safeRRIsClique}
Reduction Rule~\ref{RRIsClique} is safe.
\end{lemma}

\begin{proof}
Given an instance $(G,{\mathcal{H}})$ of \pCEMT\ such that there is a proto-cluster $C$ which is an isolated clique, we remove $C$ from $G$ and get an instance $(G',{\mathcal{H}})$. We claim that $(G,{\mathcal{H}})$ is a YES-instance if and only if $(G',{\mathcal{H}})$ is a YES-instance. On one hand, assume that $(G',{\mathcal{H}})$ is a YES-instance. Then obviously $(G,{\mathcal{H}})$ is a YES-instance. On the other hand, assume that $(G,{\mathcal{H}})$ is a YES-instance and $S$ is a solution. Since $C$ is an isolated clique, by Observation~\ref{obs2}, neither edges of $C$ nor non-edges between $V(C)$ and $V(G)\setminus V(C)$ are edited by $S$. Thus $S$ is also a solution to $(G',{\mathcal{H}})$. This completes the proof for the lemma.
\end{proof}

In later analysis, we will see that some constant-size configurations cannot be connected to the rest of the graph.
To remove such configurations, we introduce the following reduction rule.

\begin{reduction} \label{RR4}
  If there is a connected component $C$ in $G$ of size at most $6$, then do brute force on $C$ to check if there is a cluster editing set $F$ for $C$ such that $|F|$ is equal to the number of packed $P_3$s incident with a vertex of $C$.
  If there is such a cluster editing set $F$, then perform the operations of $F$ to $C$ and remove the corresponding packed $P_3$s from $\mathcal{H}$.
  Otherwise, if there is no such cluster editing set $F$, return NO.
\end{reduction}

\begin{lemma}
Reduction Rule~\ref{RR4} is safe.
\end{lemma}
\begin{proof}
  Given an instance $(G,{\mathcal{H}})$ of \pCEMT\ such that there is a connected component $ C$ in the graph of size at most $6$, suppose that there is a cluster editing set $F$ for $ C$ satisfying the condition of Reduction Rule~\ref{RR4}. After performing the operations of $F$, we get an instance $(G',{\mathcal{H}}')$ of \pCEMT.
  We claim that $(G,{\mathcal{H}})$ is a YES-instance if and only if $(G',{\mathcal{H}}')$ is a YES-instance. On one hand, assume that $(G',{\mathcal{H}}')$ has a solution $S'$. Obviously, $S'\cup F$ is a cluster editing set for $G$ and $|S'\cup F|=|{\mathcal{H}}|$. On the other hand, assume that $(G,{\mathcal{H}})$ has a solution $S$. By Observation~\ref{obs2}, no vertex pair between $V( C)$ and $V(G)\setminus V( C)$ is edited by $S$. Let $S_1\subseteq S$ be the set of vertex pairs which are edges or non-edges of $ C$. Then $S\setminus S_1$ is a solution to~$(G',{\mathcal{H}}')$.

Suppose that there is no such cluster editing set $F$ for $ C$. We claim that $(G,{\mathcal{H}})$ is a NO-instance. For contradiction, assume that $(G,{\mathcal{H}})$ has a solution $S$. Let $S_1\subseteq S$ be the set of vertex pairs which are edges or non-edges of $ C$. Then $S_1$ is a cluster editing set for $ C$ and $|S_1|$ is equal to the number of packed $P_3$s incident with a vertex of $ C$ by Observation~\ref{obs2}, a contradiction. Thus $(G,{\mathcal{H}})$ is a NO-instance.

The component $ C$ is of size at most $6$ so we can do brute force in constant time.
This completes the proof for the lemma.
\end{proof}

We now move to analyzing the size of the remaining proto-clusters.

\begin{lemma} \label{noC5}
After applying Reduction Rules \ref{RR1} to~\ref{RRIsClique} exhaustively, if the algorithm did not return NO, then there is no proto-cluster of size at least $5$.
\end{lemma}
\begin{proof}
  Suppose for contradiction that there is a proto-cluster $C$ of size at least $5$. If $C$ is a proto-cluster which is isolated from other proto-clusters, then $C$ must be a clique since otherwise Reduction Rule~\ref{RR1} or Reduction Rule~\ref{RR2} can be applied, a contradiction. Then Reduction Rule~\ref{RRIsClique} can be applied and $C$ will be removed from the graph. Thus $C$ is not an isolated proto-cluster.

  Let $D$ be a proto-cluster such that there is an edge $uv$ between $C$ and $D$, say $u\in V(C)$ and $v\in V(D)$. If $uv$ is covered by a type-$\beta$ $P_3$, then Reduction Rule~\ref{RR2} can be applied, a contradiction. Thus we assume that $uv$ is covered by a type-$\alpha$ or a type-$\gamma$ $P_3$.
Since $v$ is incident with at most two packed $P_3$s, there must be one vertex $w\in V(C)$ such that $wv$ is a non-packed non-edge. Then Reduction Rule~\ref{RR3} can be applied, a contradiction. As a result, there is no proto-cluster of size at least $5$. This completes the proof for the lemma.
\end{proof}

Next we focus on proto-clusters of size $4$.

\begin{figure}[t]
\begin{center}
\begin{tikzpicture}

		\node [style=nodestyle2,label=right:{$v_1$}] (0) at (-3, 2) {};
		\node [style=nodestyle2,label={[label distance=-0.5mm]30:$v_{2}$}] (1) at (-3, 1) {};
		\node [style=nodestyle2,label={[label distance=-0.5mm]-30:$v_{3}$}] (2) at (-3, -0) {};
		\node [style=nodestyle2,label=right:{$v_4$}] (3) at (-3, -1) {};
		\node [style=nodestyle2,label={[label distance=-1mm]0:$w$}] (4) at (0, 0.5) {};
		\node [style=none] (5) at (-3, 0.5) {};
        \draw [draw=black] (4) ellipse (10pt and 12pt);
		\draw [draw=black] (5) ellipse (25pt and 54pt);

		\node [style=none] (6) at (-4.5, 0.5) {$C$};
		\node [style=none] (7) at (0.65, 0.5) {$D$};

		\draw [red,thick] (0) to (1);
		\draw [red,thick] (1) to (4);
		\draw [green,thick] (2) to (3);
		\draw [green,thick] (2) to (4);
		\draw [red,thick,dashed] (0) to (4);
		\draw [green,thick,dashed] (3) to (4);
		\draw [black,thick] (1) to  (2);
		\draw [black,thick] (0)[bend right=30] to (2);
		\draw [black,thick] (3)[bend left=30] to (1);
        \draw [black,thick] (0)[bend right=45] to (3);

\end{tikzpicture}
\end{center}
\caption{An example for Lemma \ref{noC4}.} \label{fig:Lemma-noC4}
\end{figure}

\begin{lemma} \label{noC4}
After applying Reduction Rules \ref{RR1} to \ref{RR3} exhaustively, if there is a proto-cluster $C$ of size $4$ which is not an isolated clique of $G$, then there is a proto-cluster $D$ of size $1$ such that the vertex pairs between $C$ and~$D$ are covered by two type-$\alpha$ $P_3$s. In addition, $V(C)\cup V(D)$ forms a connected component in the graph.
\end{lemma}
\begin{proof}
  After applying Reduction Rules \ref{RR1} to \ref{RR3} exhaustively, let $C$ be a proto-cluster of size $4$ and $V(C)=\{v_1,v_2,v_3,v_4\}$.
  See \cref{fig:Lemma-noC4} for an illustration.
  Let $w$ be a vertex such that there is an edge between $w$ and $V(C)$.
  If the vertex pairs between $V(C)$ and $w$ are not covered by two type-$\alpha$ $P_3$s,
  then either there is a non-packed non-edge between $C$ and $D$ or there is a type-$\beta$ $P_3$ between $C$ and $D$.
  Thus Reduction Rule~\ref{RR2} or~\ref{RR3} can be applied, a contradiction.
  Without loss of generality, suppose that $v_1v_2$ and $v_3v_4$ are covered by these two type-$\alpha$ $P_3$s. Assume for contradiction that there is another vertex $u$ such that $u$ and (without loss of generality) $v_1$ are adjacent, and $uv_1$ is a packed edge. Since we have applied Reduction Rule~\ref{RR2} exhaustively, there are neither type-$\beta$ nor type-$\delta$ $P_3$s in the graph. Thus $uv_1$ must be covered by a type-$\alpha$ or a type-$\gamma$ $P_3$.

  We claim that there must be a non-packed non-edge from $u$ to a vertex of $C$. For contradiction, suppose this is not true. Then either $v_1v_4,v_2v_3$ are covered by two type-$\alpha$ $P_3$s respectively, or $v_1v_3,v_2v_4$ are covered by two type-$\alpha$ $P_3$s respectively. In both cases, $v_1,v_2,v_3$ and $v_4$ are not in one proto-cluster anymore since after removing the packed edges, $v_1,v_2,v_3$ and $v_4$ are not in one connected component, a contradiction. Thus there must be a non-packed non-edge between $V(C)$ and $u$. Since $uv_1$ is a packed edge, Reduction Rule~\ref{RR3} can be applied to $C$ and the proto-cluster containing $u$, a contradiction. Thus there are no edges between $V(C)$ and any other vertices except $w$.

  Suppose that $w$ belongs to a clique of size at least two. Then there must be a non-packed non-edge and a packed edge between $C$ and $D$ (there cannot be more than two packed $P_3$s between a proto-cluster of size $4$ and another proto-cluster). Thus Reduction Rule~\ref{RR3} can be applied, a contradiction. Thus $w$ belongs to a proto-cluster of size one and let this proto-cluster be $D$. Since $w$ is already incident with two packed $P_3$s, $w$ is isolated from any other proto-clusters except $C$. Obviously, $V(C)\cup V(D)$ forms a connected component in the graph.
  This completes the proof for the lemma.
\end{proof}

\begin{lemma} \label{noC4-B}
After applying Reduction Rules \ref{RR1} to~\ref{RR4} exhaustively, there is no proto-cluster of size $4$.
\end{lemma}
\begin{proof}
Suppose for contradiction that there is a proto-cluster $C$ of size at least $4$. If $C$ is an isolated proto-cluster, $C$ must be a clique since otherwise Reduction Rule~\ref{RR1} or~\ref{RR2} can be applied, a contradiction. Then Reduction Rule~\ref{RRIsClique} can be applied and $C$ will be removed from the graph. Thus $C$ is not an isolated proto-cluster. By Lemma~\ref{noC4}, there is a proto-cluster $D$ of size $1$ such that $V(C)\cup V(D)$ forms a connected component of size $5$ in the graph. Then Reduction Rule~\ref{RR4} can be applied, a contradiction. As a result, there is no proto-cluster of size at least $4$. This completes the proof for the lemma.
\end{proof}

Summarizing, using the simple Reduction Rules \ref{RR1} to \ref{RRIsClique} we have successfully removed all proto-clusters of size at least four.

\subsubsection{Decreasing the proto-cluster size and structural observations}\label{sec:clust-size-struc}

Next, we focus on the structure of proto-clusters of size three and how to remove them as well.
First, we observe how connections around proto-clusters of size three look like.
See \cref{fig:ReductionRule6} for an illustration of these connections.

\begin{lemma} \label{noC3}
After applying Reduction Rules \ref{RR1} to \ref{RRIsClique} exhaustively, if there is a proto-cluster $C$ of size $3$, then there must be a proto-cluster $B$ of size $1$ and a proto-cluster $A$ of size $1$, such that the vertex pairs between $C$ and $B$ are covered by a type-$\alpha$ $P_3$ and a type-$\gamma$ $P_3$, and the type-$\gamma$ $P_3$ connects $C$ and $A$ via $B$. In addition, $C$ is isolated from any other proto-clusters except~$B$, and $B$ is isolated from any other proto-clusters except $A$ and $C$.
\end{lemma}

\begin{proof}
  After applying Reduction Rules \ref{RR1} to \ref{RRIsClique} exhaustively, let $C$ be a proto-cluster of size $3$.
  If $C$ is isolated from other proto-clusters, then $C$ must be a clique since otherwise Reduction Rule~\ref{RR1} can be applied.
  However, then Reduction Rule~\ref{RRIsClique} can be applied, a contradiction. Thus we assume that $C$ is not an isolated proto-cluster.

  Let the three vertices of $C$ be $u_1$, $u_2$, and $u_3$.
  Let $v$ be a vertex such that there is an edge between $v$ and $V(C)$. If the vertex pairs between $V(C)$ and $v$ are not covered by a type-$\alpha$ $P_3$ and a type-$\gamma$ $P_3$, then Reduction Rule~\ref{RR2} or~\ref{RR3} can be applied as $v$ can be incident with at most two packed $P_3$s, a contradiction. Without loss of generality, suppose that $u_1$, $u_3$, and $v$ belong to a type-$\alpha$ $P_3$. Assume for contradiction that there is another vertex $w$ such that $w$ is adjacent to some vertex of $V(C)$ ($w$ can either belong to the same proto-cluster as $v$ or belong to a different proto-cluster from $v$). If the vertex pairs between $V(C)$ and $w$ are not covered by a type-$\alpha$ $P_3$ and a type-$\gamma$ $P_3$, then Reduction Rule~\ref{RR2} or~\ref{RR3} can be applied to the corresponding $P_3$ or proto-clusters, a contradiction. If the vertex pairs between $V(C)$ and $w$ are covered by a type-$\alpha$ $P_3$ and a type-$\gamma$ $P_3$, say $u_1,u_2$ and $w$ belong to the type-$\alpha$ $P_3$, then $u_1$, $u_2$, and $u_3$ are not in one proto-cluster, a contradiction. It follows that there is no vertex adjacent to one of the vertices of $V(C)$ except $v$.

  Let $B$ be the proto-cluster to which $v$ belongs. Assume for contradiction that $|B|>1$ and there is another vertex $y$ belonging to $B$. As argued above, $y$ is not adjacent to any vertex of $V(C)$ and there is a non-packed non-edge between $V(B)$ and $V(C)$. Thus Reduction Rule~\ref{RR3} can be applied, a contradiction. It follows that $|B|=1$ and $C$ is isolated from any other proto-clusters except~$B$. We have assumed that $u_1,u_3$ and $v$ belong to a type-$\alpha$ $P_3$. As argued above, $u_2v$ is covered by a type-$\gamma$ $P_3$. Let $u_2vx$ be that type-$\gamma$ $P_3$ where $x$ belongs to a proto-cluster~$A$. We claim that $|A|=1$. Suppose for contradiction that $|A|>1$ and there is another vertex $z\in V(A)$. Then $vz$ must be a non-packed non-edge since $v$ is already incident with two packed $P_3$s. Thus Reduction Rule~\ref{RR3} can be applied, a contradiction. It follows that $|A|=1$. This concludes the proof for the lemma.
\end{proof}

\cref{noC3} now suffices to determine a solution around proto-clusters of size three.
See \cref{fig:ReductionRule6} for an illustration of the following \cref{RR6}.

\begin{reduction} \label{RR6}
  After applying Reduction Rules \ref{RR1} to \ref{RRIsClique} exhaustively, if there is a proto-cluster $C$ of size~$3$, a proto-cluster $B$ of size $1$ and a proto-cluster $A$ of size $1$ such that $C$ is not isolated from $B$, and a type-$\gamma$ $P_3$ connects $C$ and $A$ via~$B$, then delete the packed edge between $A$ and $B$, insert an edge to the packed non-edge between $C$ and $B$, and remove the corresponding $P_3$s from $\mathcal{H}$.
\end{reduction}

\begin{figure}[t]
\begin{center}
\begin{tikzpicture}[scale=0.8]

		\node [style=nodestyle2,label={[label distance=-0.5mm]90:$u_{2}$}] (0) at (-7, 1) {};
		\node [style=nodestyle2,label={[label distance=-0.5mm]-90:$u_{3}$}] (1) at (-6, -0) {};
		\node [style=nodestyle2,label={[label distance=-0.5mm]-90:$u_{1}$}] (2) at (-7, -1) {};
		\node [style=nodestyle2,label={[label distance=-0.5mm]-90:$v$}] (3) at (-4.25, -0.25) {};
		\node [style=nodestyle2,label={[label distance=-0.5mm]-90:$w$}] (4) at (-2.5, 0.75) {};
		\node (5) at (-6.7, -0) {};
        \draw[draw=black] (5) ellipse (28pt and 50pt);
        \draw[draw=black] (4) ellipse (10pt and 15pt);
        \draw[draw=black] (3) ellipse (10pt and 15pt);
		\node  (6) at (-8, -0) {$C$};
		\node  (7) at (-4.25, -1.25) {$B$};
		\node  (8) at (-2.5, -0.25) {$A$};

		\draw [draw=black,thick](2) to (0);
		\draw [draw=black,thick](0) to (1);

		\draw [draw=green,thick] (2) to (1);
		\draw [draw=green,thick] (1) to (3);
		\draw [draw=green,thick,dashed] (2) to (3);

		\draw [draw=red,thick] (3) to (4);
		\draw [draw=red,thick,dashed] (0) to (4);

		\draw [draw=red,thick] (0) to (3);

\end{tikzpicture}
\end{center}
\caption{An example for \cref{noC3} and Reduction Rule \ref{RR6}.} \label{fig:ReductionRule6}
\end{figure}

\begin{lemma} \label{safeRR6}
Reduction Rule~\ref{RR6} is safe.
\end{lemma}

\begin{proof}
Given an instance $(G,{\mathcal{H}})$ of \pCEMT\ satisfying the condition of Reduction Rule~\ref{RR6}, let $u_1$, $u_2$, and $u_3$ be the three vertices of $C$, let $v$ be the vertex of $B$ and $w$ be the vertex of $A$. Without loss of generality, let $u_1u_3v$ and $u_2vw$ be two packed $P_3$s. After applying Reduction Rule~\ref{RR6}, we get an instance $(G',{\mathcal{H}}')$ of \pCEMT. We claim that $(G,{\mathcal{H}})$ is a YES-instance if and only if $(G',{\mathcal{H}}')$ is a YES-instance.

For the soundness, suppose that $(G',{\mathcal{H}}')$ is a YES-instance and $S'$ is a cluster editing set of $G'$ such that $|S'|=|{\mathcal{H}}'|$. Obviously $S=S'\cup \{u_1v,vw\}$ is a solution to $(G,{\mathcal{H}})$.

For the completeness, suppose that $(G,{\mathcal{H}})$ is a YES-instance and $S$ is a cluster editing set of $G$ such that $|S|=|{\mathcal{H}}|$.
If $vw \in S$, then $u_2v$ becomes a non-packed edge between $C$ and $B$ after removing the $P_3$ $u_2vw$ from $\mathcal{H}$.
Thus, in this case we have $u_1v \in S$ as well by Reduction Rule~\ref{RR2}, that is, $\{u_1v,vw\}\subseteq S$.
Then $S'=S\setminus \{u_1v,vw\}$ is a solution to $(G',{\mathcal{H}}')$ because by \cref{noC3}, $C$ and $B$ are isolated from the rest of the graph.

Thus, assume $vw\notin S$ from now on.
Then, either $u_2w\in S$ or $u_2v\in S$.
First, we assume that $u_2w\in S$, and after inserting $u_2w$ and removing $u_2vw$ from $\mathcal{H}$ we get an instance $(G'',{\mathcal{H}}'')$ of \pCEMT.
Observe that since $C$ is a proto-cluster and $u_1u_3$ is packed, $u_2u_3$ is not packed.
Thus, $u_3u_2w$ is a non-packed path in $G''$ and $u_3w$ is a non-packed non-edge.
Thus Reduction Rule~\ref{RR1} can be applied to $(G'', \mathcal{H}'')$ and $(G'', \mathcal{H}'')$ is a NO-instance.
This contradicts the fact that $S$ is a solution to $(G,{\mathcal{H}})$.
Thus, we have $u_2v\in S$.  After deleting $u_2v$ and removing $u_2vw$ from $\mathcal{H}$, $u_2v$ becomes a non-packed non-edge. Thus Reduction Rule~$\ref{RR3}$ can be applied, showing $u_3v\in S$.
By Lemma~\ref{noC3}, $C$ is isolated from any other proto-clusters except $B$, and $B$ is isolated from any other proto-clusters except $A$ and $C$. It follows that in $G\triangle S$, $u_1,u_2$ and $u_3$ form a clique of size $3$ while $v$ and $w$ form a clique of size $2$.
Furthermore, $V(G)\setminus \{u_1,u_2,u_3,v,w\}$ forms a cluster graph in $G\triangle S$. Let $\widehat{S}=(S\setminus \{u_2v,u_3v\})\cup \{vw,u_1v\}$. Obviously $G\triangle \widehat{S}$ is also a cluster graph and $|\widehat{S}|=|{\mathcal{H}}|$. Thus $\widehat{S}$ is also a solution to $(G,{\mathcal{H}})$.
It follows that $\widehat{S}\setminus \{vw,u_1v\}$ is a solution for $(G'', \mathcal{H}'')$.
This completes the proof for the lemma.
\end{proof}

\begin{corollary} \label{cor:noBigClique}
After applying Reduction Rules \ref{RR1} to~\ref{RR6} exhaustively, there are no isolated cliques in the instance and every proto-cluster of the instance is of size at most $2$. Moreover, since the edge in a proto-cluster of size $2$ cannot be a packed edge, every packed $P_3$ in the remaining graph is a type-$\gamma$ $P_3$.
\end{corollary}

\subsubsection{Reducing the size of solution clusters}\label{sec:sol-clust-size}

In the previous section we have successfully removed all proto-clusters of size at least~3.
Suppose that after applying Reduction Rules~\ref{RR1} to~\ref{RR6} exhaustively, we have an instance $(G,{\mathcal{H}})$ of \pCEMT.
Suppose that $S$ is a solution to $(G,{\mathcal{H}})$.
Now we consider the size of the clusters in the cluster graph $G\triangle S$.
We first show that the largest clique in this graph has size at most~6.

\begin{lemma} \label{noClique7}
After applying Reduction Rules \ref{RR1} to \ref{RR6} exhaustively, we have an instance $(G,{\mathcal{H}})$ of \pCEMT. Suppose that $S$ is a solution to $(G,{\mathcal{H}})$. Then there is no clique of size larger than $6$ in $G\triangle S$.
\end{lemma}
\begin{proof}
  Suppose for contradiction that $A$ is a clique of size at least $7$ in $G\triangle S$ and let $u$ be a vertex in $A$. Then there are at least six vertex pairs between $\{u\}$ and $V(A)\setminus \{u\}$, which are either non-packed edges or covered by packed $P_3$s.
  Since $u$ is incident with at most two packed $P_3$s, at most four vertex pairs between $\{u\}$ and $V(A)\setminus \{u\}$ are covered by a packed $P_3$.
  Thus at least two vertex pairs between $\{u\}$ and $V(A)\setminus \{u\}$ are non-packed edges.
  By Corollary~\ref{cor:noBigClique}, every proto-cluster in $G$ is of size at most $2$, a contradiction. This completes the proof for the lemma.
\end{proof}

\begin{figure}[t]
\centering
\begin{tikzpicture}[scale=0.8]
		\node [style=nodestyle1] (0) at (-6, 2) {};
		\node [style=nodestyle1] (1) at (-4, 2) {};
		\node [style=nodestyle1] (2) at (-6.75, 0.25) {};
		\node [style=nodestyle1] (3) at (-6, -1.25) {};
		\node [style=nodestyle1] (4) at (-3.25, 0.25) {};
		\node [style=nodestyle1] (5) at (-4, -1.25) {};

		\node [none,label=above:{\Large $C_1$}] (6) at (-5, 2) {};
		\node [none,label=right:{\Large $C_3$}] (7) at (-3.5, -0.5) {};
		\node [none,label=left:{\Large $C_2$}] (8) at (-6.5, -0.5) {};

		\draw [style=black,very thick](0) to (1);
		\draw [style=black,very thick](2) to (3);
		\draw [style=black,very thick](5) to (4);
		\draw [style=green,thick] (2) to (0);
		\draw [style=green,thick,dashed] (0) to (4);
		\draw [style=green,thick] (2) to (4);
		\draw [style=red,thick] (0) to (3);
		\draw [style=red,thick,dashed] (3) to (5);
		\draw [style=red,thick] (0) to (5);
		\draw [style=blue,thick] (1) to (2);
		\draw [style=blue,thick,dashed] (2) to (5);
		\draw [style=blue,thick] (1) to (5);
		\draw [style=olive,thick,dashed] (1) to (3);
		\draw [style=olive,thick] (3) to (4);
		\draw [style=olive,thick] (1) to (4);
\end{tikzpicture}
\caption{An example of forming a clique of size $6$ in $G\triangle S$. The black edges are non-packed edges. The vertex pairs of the same color which is not black belong to the same packed $P_3$ and the dashed edges represent non-edges. The same rule of notation applies to the following pictures.} \label{fig:formClique6}
\end{figure}

We can now determine more precisely the structure of potential cliques of size~6 in $G \triangle S$. See Fig.~\ref{fig:formClique6} as an example.
\begin{lemma} \label{Clique6}
  Let $(G,{\mathcal{H}})$ be an instance of \pCEMT\ such that the size of every proto-cluster in $G$ is at most $2$.
  Let $S$ be a solution to $(G,{\mathcal{H}})$ and suppose that $A$ is a clique of size exactly $6$ in $G\triangle S$.
  Then the following statements hold:
  \begin{itemize}
  \item The vertices of $V(A)$ belong to three proto-clusters $C_1$, $C_2$, and $C_3$ of size 2 in $G$.
  \item Every vertex pair between $C_1$ and $C_2$, between $C_1$ and $C_3$, and between $C_2$ and $C_3$ is covered by some $P_3$ of $\mathcal{H}$.
  \item Furthermore, $V(C_1)\cup V(C_2)\cup V(C_3)$ forms a connected component in $G$.
  \end{itemize}
\end{lemma}
\begin{proof}
  Suppose for contradiction that $u\in V(A)$ belongs to a proto-cluster of size 1 in $G$.
  Then there are five vertex pairs between $\{u\}$ and $V(A)\setminus \{u\}$, which are covered by packed $P_3$s.
  Since $u$ belongs to at most two packed $P_3$s, at most four vertex pairs between $\{u\}$ and $V(A)\setminus \{u\}$ are covered by a packed $P_3$, a contradiction.

  Next we show that the vertices of $V(A)$ belong to three proto-clusters $C_1$, $C_2$, and $C_3$ of size 2 in $G$; see also Fig~\ref{fig:formClique6}.
  We see that for every vertex $v\in V(A)$, four of the vertex pairs between $\{v\}$ and $V(A)\setminus \{v\}$ are covered by packed $P_3$s and the other one is a non-packed edge.
  Thus every vertex $v\in V(A)$ belongs to two packed $P_3$s.
  It follows that for each $i \in [3]$ the proto-cluster $C_i$ is isolated from any other proto-cluster in $G\setminus (V(C_1) \cup V(C_2) \cup V(C_3))$.
  Note that there are no type-$\alpha$, type-$\beta$, or type-$\delta$ $P_3$s in $\mathcal{H}$ anymore.
  Thus the edges between the proto-clusters in $A$ are covered by type-$\gamma$ $P_3$s.
  Thus, without loss of generality, let $xyz$ be a $P_3$ such that $x\in V(C_1), y\in V(C_2)$ and $z\in V(C_3)$.
  Thus, $V(C_1)\cup V(C_2)\cup V(C_3)$ forms a connected component.
  This completes the proof for the lemma.
\end{proof}

By the reduction rule that solved small connected components it follows that cliques of size~6 cannot exist in $G \triangle S$.

\begin{lemma} \label{noClique6}
  After applying Reduction Rules \ref{RR1} to \ref{RR6} exhaustively, we have an instance $(G,{\mathcal{H}})$ of \pCEMT. Suppose that $S$ is a solution to $(G,{\mathcal{H}})$. Then there is no clique of size exactly $6$ in $G\triangle S$.
\end{lemma}
\begin{proof}
  Suppose for contradiction that $A$ is a clique of size exactly $6$ in $G\triangle S$.
  According to Lemma~\ref{Clique6}, $V(A)$ induces a connected component of size exactly $6$ in the input graph.
  Then Reduction Rule~\ref{RR4} or Reduction Rule~\ref{RRIsClique} can be applied, a contradiction.
  This completes the proof for the lemma.
\end{proof}

Now we consider the structure of potential cliques of size~5 in $G \triangle S$. See Fig.~\ref{formClique5} for examples.

\begin{lemma} \label{Clique5}
  After applying Reduction Rules \ref{RR1} to \ref{RR3} exhaustively, let $(G,{\mathcal{H}})$ be an instance of \pCEMT\ such that the size of every proto-cluster in $G$ is at most $2$ and $S$ is a solution to $(G,{\mathcal{H}})$.
  Suppose that $A$ is a clique of size exactly $5$ in $G\triangle S$.
  Then there are four proto-clusters $C_i$ for $i \in [4]$ such that the following statements hold:
  \begin{itemize}
  \item $|C_1|=|C_4|=1$ and $|C_2|=|C_3|=2$.
  \item The vertices of $A$ belong to the three proto-clusters $C_1$, $C_2$, and $C_3$ or to the three proto-clusters $C_2$, $C_3$, and $C_4$.
  \item Every vertex pair between $C_i$ and $C_j$ ($i,j\in \{1,2,3,4\}, i\neq j$) is covered by a packed $P_3$ except that the vertex pair between $C_1$ and $C_4$ is a non-packed non-edge.
  \item Furthermore, $V(C_1)\cup V(C_2)\cup V(C_3)\cup V(C_4)$ forms a connected component in $G$.
  \end{itemize}
\end{lemma}
\begin{proof}
  Suppose for a contradiction that at least three vertices of $V(A)$ belong to proto-clusters of size~1 in $G$; say $u,v,w\in V(A)$ belong to three distinct proto-clusters of size one, respectively, and two vertices of $V(A)$, say $x,y\in V(A) \setminus \{u, v, w\}$, belong to a proto-cluster of size two or belong to two distinct proto-clusters of size one, respectively.
  It follows that every vertex pair of $\binom{V(A)}{2}$ is either a non-packed edge or covered by some $P_3$ of $\mathcal{H}$.
  Then $uv,wv,xv,yv$ are four vertex pairs that are covered by packed $P_3$s.
  Since $v$ is incident with at most two packed $P_3$s, there are the two following cases:
  (a) We assume that $u,v,x$ belong to a packed $P_3$ and $w,v,y$ belong to another packed $P_3$.
  We omit the symmetric case that $u,v,y$ belong to a packed $P_3$ and $w,v,x$ belong to another packed $P_3$ since the analysis is analogous.
  (b) We assume that $u,v,w$ belong to a packed $P_3$ and $x,v,y$ belong to another packed $P_3$.

  For case (a), $uw,uy$ are also covered by one packed $P_3$ or two distinct packed $P_3$s.
  If $uw$ and $uy$ are covered by one packed $P_3$, then this $P_3$ is not modification disjoint with the packed $P_3$ covering $w,v,y$, a contradiction.
  If $uw$ and $uy$ are covered by two distinct packed $P_3$s, then $u$ is incident with three packed $P_3$s, a contradiction.

  For case (b), $ux,uy$ are also be covered by one packed $P_3$ or two distinct packed $P_3$s.
  If $ux$ and $uy$ are covered by one packed $P_3$, then it is not modification disjoint with the packed $P_3$ covering $x,v,y$, a contradiction.
  If $ux$ and $uy$ are covered by two distinct packed $P_3$s, then $u$ is incident with three packed $P_3$s, a contradiction.
  As all cases lead to a contradiction, it follows that the vertices of $V(A)$ belong to one proto-cluster of size~1 and two proto-clusters of size~2.

  Next we show that the vertices in $V(A)$ belong to three proto-clusters $C_1$, $C_2$, and $C_3$ (or $C_2$, $C_3$, and $C_4$) in $G$ such that $|C_1|=|C_4|=1$ and $|C_2|=|C_3|=2$; see also Case (1) and Case (2) of Fig.~\ref{formClique5}.
  Let $C_1$, $C_2$, $C_3$ be the proto-clusters contained in $A$ and without loss of generality let $|V(C_1)| = 1$.
  Let $V(C_1)=\{x\},V(C_2)=\{u_1,u_2\}$, and $V(C_3)=\{v_1,v_2\}$.
  Without loss of generality, let $x,u_1,v_1$ belong to a packed $P_3$ and $x,u_2,v_2$ belong to another packed $P_3$. Then $u_1v_2$ and $u_2v_1$ must be covered by packed $P_3$s since otherwise Reduction Rule~\ref{RR3} can be applied to $C_2$ and $C_3$.

  For a contradiction, assume that there are two vertices $y_1,y_2$ such that $y_1,u_1,v_2$ belong to one packed~$P_3$ and $y_2,u_2,v_1$ belong to another packed $P_3$.
  Then $y_1u_2,y_1v_1$ are non-packed non-edges since $u_2$ and $v_1$ are each already incident with two packed $P_3$s.
  It then follows that Reduction Rule~\ref{RR3} can be applied, a contradiction.
  It follows that there is a single vertex~$y$ such that $\{y, u_2, v_1\}$ and $\{y, u_1, v_2\}$ are vertex sets of $P_3$s in $\mathcal{H}$.
  Let $C_4$ be the proto-cluster to which $y$ belongs.

  If $|C_4|>1$, then there must be a non-packed non-edge between $C_4$ and $C_2$ and a non-packed non-edge between $C_4$ and $C_3$.
  Thus Reduction Rule~\ref{RR3} can be applied, a contradiction.
  Thus $|C_4|=1$.
  Since $u_1,u_2,v_1,v_2,x,y$ are all incident with two packed $P_3$s, the subgraph induced by $V(C_1)\cup V(C_2)\cup V(C_3)\cup V(C_4)$ is isolated from the other parts of the graph.
  We can view the graph induced by $V(C_1)\cup V(C_2)\cup V(C_3)\cup V(C_4)$ as a complete graph on $6$ vertices with five missing edges.
  Note that the edge between $x$ and $y$ is missing by the condition of this lemma.
  Suppose that $\{u_1,u_2,v_1,v_2,x,y\}$ does not induce a connected component in $G$.
  This is only possible when every edge incident to $x$ (symmetrically, $y$) is missing
  because a cut of a complete graph on $6$ vertices minus one edge is of size at least $4$.
  However, $x$ (symmetrically, $y$) is incident with two packed $P_3$s and thus at most two of the edges incident to $x$ (symmetrically, $y$) are missing, a contradiction.
  It follows that $V(C_1)\cup V(C_2)\cup V(C_3)\cup V(C_4)$ forms a connected component in $G$.
  This completes the proof for the lemma.
\end{proof}

\begin{figure}[t]
\centering
\begin{tikzpicture}[scale=0.6]
\begin{scope}[shift={(0cm,0cm)}]
		\node [style=nodestyle1,label=above:{ $C_1$}] (0) at (-4, 3) {};
		\node [style=nodestyle1] (1) at (-5.75, 1.75) {};
		\node [style=nodestyle1] (2) at (-5.75, -0.25) {};
		\node [style=nodestyle1] (3) at (-2.25, 1.75) {};
		\node [style=nodestyle1] (4) at (-2.25, -0.25) {};
		\node [style=nodestyle1,label=below:{ $C_4$}] (5) at (-4, -1.5) {};

		\node [style=none,label=left:{ $C_2$}] (6) at (-5.5, 0.75) {};
		\node [style=none,label=right:{ $C_3$}] (7) at (-2.5, 0.75) {};
		\node [style=none,label=below:{(1)}] (8) at (-4, -2.5) {};

		\draw [style=black,very thick] (1) to (2);
		\draw [style=black,very thick] (3) to (4);
		\draw [style=green,thick] (0) to (1);
		\draw [style=green,thick] (1) to (3);
		\draw [style=green,thick,dashed] (0) to (3);
		\draw [style=red,thick] (0) to (2);
		\draw [style=red,thick] (2) to (4);
		\draw [style=red,thick,dashed] (0) to (4);
		\draw [style=blue,thick] (1) to (5);
		\draw [style=blue,thick,dashed] (5) to (4);
		\draw [style=blue,thick] (1) to (4);
		\draw [style=olive,thick] (5) to (2);
		\draw [style=olive,thick] (5) to (3);
		\draw [style=olive,thick,dashed] (2) to (3);
\end{scope}

\begin{scope}[shift={(6cm,0cm)}]
		\node [style=nodestyle1,label=above:{ $C_1$}] (0) at (-4, 3) {};
		\node [style=nodestyle1] (1) at (-5.75, 1.75) {};
		\node [style=nodestyle1] (2) at (-5.75, -0.25) {};
		\node [style=nodestyle1] (3) at (-2.25, 1.75) {};
		\node [style=nodestyle1] (4) at (-2.25, -0.25) {};
		\node [style=nodestyle1,label=below:{ $C_4$}] (5) at (-4, -1.5) {};

		\node [style=none,label=left:{ $C_2$}] (6) at (-5.5, 0.75) {};
		\node [style=none,label=right:{ $C_3$}] (7) at (-2.5, 0.75) {};
		\node [style=none,label=below:{(2)}] (8) at (-4, -2.5) {};

		\draw [style=black,very thick] (1) to (2);
		\draw [style=black,very thick] (3) to (4);
		\draw [style=green,thick] (0) to (1);
		\draw [style=green,thick] (1) to (3);
		\draw [style=green,thick,dashed] (0) to (3);
		\draw [style=red,thick] (0) to (2);
		\draw [style=red,thick,dashed] (2) to (4);
		\draw [style=red,thick] (0) to (4);
		\draw [style=blue,thick] (1) to (5);
		\draw [style=blue,thick,dashed] (5) to (4);
		\draw [style=blue,thick] (1) to (4);
		\draw [style=olive,thick,dashed] (5) to (2);
		\draw [style=olive,thick] (5) to (3);
		\draw [style=olive,thick] (2) to (3);
\end{scope}

\begin{scope}[shift={(12cm,0cm)}]
		\node [style=nodestyle1,label=above:{ $C_1$}] (0) at (-4, 3) {};
		\node [style=nodestyle1] (1) at (-5.75, 1.75) {};
		\node [style=nodestyle1] (2) at (-5.75, -0.25) {};
		\node [style=nodestyle1] (3) at (-2.25, 1.75) {};
		\node [style=nodestyle1] (4) at (-2.25, -0.25) {};
		\node [style=nodestyle1,label=below:{ $C_4$}] (5) at (-4, -1.5) {};

		\node [style=none,label=left:{ $C_2$}] (6) at (-5.5, 0.75) {};
		\node [style=none,label=right:{ $C_3$}] (7) at (-2.5, 0.75) {};
		\node [style=none,label=below:{(3)}] (8) at (-4, -2.5) {};

		\draw [style=black,very thick] (1) to (2);
		\draw [style=black,very thick] (3) to (4);
		\draw [style=green,thick] (0) to (1);
		\draw [style=green,thick] (1) to (3);
		\draw [style=green,thick,dashed] (0) to (3);
		\draw [style=red,thick] (0) to (2);
		\draw [style=red,thick,dashed] (2) to (4);
		\draw [style=red,thick] (0) to (4);
		\draw [style=blue,thick,dashed] (1) to (5);
		\draw [style=blue,thick] (5) to (4);
		\draw [style=blue,thick] (1) to (4);
		\draw [style=olive,thick] (5) to (2);
		\draw [style=olive,thick] (5) to (3);
		\draw [style=olive,thick,dashed] (2) to (3);
\end{scope}

\begin{scope}[shift={(18cm,0cm)}]
		\node [style=nodestyle1,label=above:{$C_1$}] (0) at (-4, 3) {};
		\node [style=nodestyle1] (1) at (-5.75, 1.75) {};
		\node [style=nodestyle1] (2) at (-5.75, -0.25) {};
		\node [style=nodestyle1] (3) at (-2.25, 1.75) {};
		\node [style=nodestyle1] (4) at (-2.25, -0.25) {};
		\node [style=nodestyle1,label=below:{$C_4$}] (5) at (-4, -1.5) {};

		\node [style=none,label=left:{$C_2$}] (6) at (-5.5, 0.75) {};
		\node [style=none,label=right:{ $C_3$}] (7) at (-2.5, 0.75) {};
		\node [style=none,label=below:{(4)}] (8) at (-4, -2.5) {};

		\draw [style=black,very thick] (1) to (2);
		\draw [style=black,very thick] (3) to (4);
		\draw [style=green,thick] (0) to (1);
		\draw [style=green,thick,dashed] (1) to (3);
		\draw [style=green,thick] (0) to (3);
		\draw [style=red,thick] (0) to (2);
		\draw [style=red,thick] (2) to (4);
		\draw [style=red,thick,dashed] (0) to (4);
		\draw [style=blue,thick] (1) to (5);
		\draw [style=blue,thick] (5) to (4);
		\draw [style=blue,thick,dashed] (1) to (4);
		\draw [style=olive,thick] (5) to (2);
		\draw [style=olive,thick,dashed] (5) to (3);
		\draw [style=olive,thick] (2) to (3);
\end{scope}
\end{tikzpicture}

\caption{Some examples of Lemma~\ref{Clique5}. In Case (1), $C_1$ is separated from $C_2$ and $C_3$, and $C_2,C_3,C_4$ are merged into a clique of size $5$ in $G\triangle S$. In Case (2), $C_4$ is separated from $C_2$ and $C_3$, and $C_1,C_2,C_3$ are merged into a clique of size $5$ in $G\triangle S$. In Case (3), $C_1,C_2$ are merged into a clique of size $3$ and $C_3,C_4$ are merged into a clique of size $3$ such that these two cliques of size $3$ are separated from each other. In Case (4), the instance is a NO-instance. Case (3) and Case (4) are not touched by Lemma~\ref{Clique5} but they can be handled by Reduction Rule~\ref{RR4} and~\ref{RRIsClique}.} \label{formClique5}
\end{figure}

As for cliques of size 6, the reduction rule that solved small connected components thus took care of cliques of size~5.

\begin{lemma} \label{noClique5}
  Let $(G,{\mathcal{H}})$ be an instance of \pCEMT\ obtained after applying Reduction Rules \ref{RR1} to \ref{RR6} exhaustively.
  Suppose that $S$ is a solution to $(G,{\mathcal{H}})$.
  Then there is no clique of size exactly $5$ in $G\triangle S$.
\end{lemma}
\begin{proof}
  Suppose for contradiction that $A$ is a clique of size exactly $5$ in $G\triangle S$.
  According to Lemma~\ref{Clique5}, $V(A)$ belongs to a connected component of size $6$ in the input graph.
  Then Reduction Rule~\ref{RR4} or Reduction Rule~\ref{RRIsClique} can be applied, a contradiction.
  This completes the proof for the lemma.
\end{proof}

Summarizing, after applying our reduction rules the cliques in $G \triangle S$ have size at most 4.

\subsubsection{Path-like structures}\label{sec:path-like}

Next, we aim to get rid of cliques of size~4.
This will later enable us to reduce the instance of \pCEMT\ to 2-SAT.
To take care of cliques of size~4, we use a similar strategy as for cliques of size 5 or~6.
We first consider the structure of the proto-clusters taking part in the clique and we then devise reduction rules that remove or simplify these proto-clusters.
The structure here is more involved.
In particular, it is in general not true anymore that cliques of size~4 are contained in small connected components.
However, as we will see, these cliques take part in a path-like structure that can either be solved locally, or that behaves analogously to a~$P_4$, see \cref{fig:RR9} later on.
The following lemma formalizes the underlying structure that may contain cliques of size~4.

\begin{figure}[t]
\centering
\begin{tikzpicture}[scale=0.8]

		\node [style=nodestyle1,label=right:{$z_1$}] (0) at (2, 2) {};
		\node [style=nodestyle1,label=right:{$z_2$}] (1) at (2, -0) {};
		\node [style=nodestyle1,label=left:{$x$}] (2) at (-1, 2) {};
		\node [style=nodestyle1,label=left:{$y$}] (3) at (-1, -0) {};
		\node [style=nodestyle1,label=right:{$u$}] (4) at (1, 3.5) {};
		\node [style=nodestyle1,label=left:{$v$}] (5) at (0.5, -1.5) {};

		\draw [style=black,very thick] (0) to (1);
		\draw [style=red,thick] (2) to (3);
		\draw [style=red,thick] (3) to (1);
		\draw [style=red,thick,dashed] (2) to (1);
		\draw [style=green,thick] (4) to (2);
		\draw [style=green,thick] (2) to (0);
		\draw [style=green,thick,dashed] (4) to (0);
		\draw [style=blue,thick] (3) to (5);
		\draw [style=blue,thick,dashed] (5) to (0);
		\draw [style=blue,thick] (0) to (3);

\end{tikzpicture}
\caption{An example of forming a clique of size $4$ in $G\triangle S$.
  Vertices $z_1,z_2$ form a proto-cluster of size $2$ and each vertex of $u,v,x,y$ belongs to a proto-cluster of size $1$.} \label{fig:formClique4}
\end{figure}

\begin{lemma} \label{Clique4}
  After applying Reduction Rules \ref{RR1} to \ref{RR6} exhaustively, let $(G,{\mathcal{H}})$ be an instance of \pCEMT.
  Let $S$ be a solution to $(G,{\mathcal{H}})$.
  Suppose that $A$ is a clique of size $4$ in $G\triangle S$ and $V(A)=\{x,y,z_1,z_2\}$.
  Then the following statements hold:
\begin{enumerate}[label=(\arabic*)]
\item Three vertices of $V(A)$, say $x,y,z_2$, belong to one packed $P_3$ in $G$, and one vertex of $x,y,z_2$, say $z_2$, together with $z_1$ forms a proto-cluster $C_1$ of size 2 in~$G$.\label{en:Clique4-verts}
\item Vertices~$x$ and $y$ form a proto-cluster $C_2$ of size~1 and a proto-cluster $C_3$ of size 1 in $G$, respectively.\label{en:Clique4-singleton-proto-clusters}
\item There are two vertices $u$ and $v$ such that $x,u,z_1$ belong to a packed $P_3$ in $G$ and $y,v,z_1$ belong to another packed $P_3$ in~$G$.\label{en:Clique4-extraneous}
\item Vertices~$u$ and $v$ form a proto-cluster $C_4$ of size~1 and a proto-cluster $C_5$ of size~1 in $G$, respectively.\label{en:Clique4-extraneous2}
\item $u,v,z_2$ cannot belong to the same packed $P_3$. \label{en:Clique4-extraneous3}
\end{enumerate}
\end{lemma}
\begin{proof}
  We first show the part of \cref{en:Clique4-verts,en:Clique4-singleton-proto-clusters} about the partition of $V(A)$ into proto-clusters.
  For contradiction, suppose that $V(A)$ does not belong to one proto-cluster of size~2 and two proto-clusters of size~1 in~$G$.
  Then there are two cases: (i) Two vertices of $V(A)$, say $x_1,x_2$, belong to a proto-cluster $C_2$ of size two and the other two vertices of $V(A)$, say $y_1,y_2$, belong to a proto-cluster $C_3$ of size~2.
  (ii) All four vertices $x_1,x_2,y_1,y_2$ of $V(A)$ belong to four distinct proto-clusters $C_1$, $C_2$, $C_3$, and $C_4$ of size~1, respectively.

  Case (i): Since all vertex pairs between $C_2$ and $C_3$ need to be covered to form a clique of size~4, without loss of generality, assume that there is a vertex $u \notin V(A)$ such that $u$, $x_1$, and $y_1$ belong to a packed~$P_3$.
  Suppose that there is another vertex $u' \notin V(A) \cup \{u\}$ such that $u'$, $x_2$, and $y_2$ belong to a packed $P_3$.
  Since neither $u,x_2,y_1$ nor $u,x_1,y_2$ could belong to a packed $P_3$ ($uy_1$ and $uy_2$ are already covered by the assumed $P_3$s), one of the vertex pairs $ux_2$ and $uy_2$ must be a non-packed non-edge and thus Reduction Rule~\ref{RR3} can be applied, a contradiction. Thus $u,x_2$ and $y_2$ belong to a packed $P_3$. Similarly, we can show that there is another vertex $v$ such that $v,x_1,y_2$ belong to a packed $P_3$ and $v,x_2,y_1$ belong to a packed $P_3$. It follows that each vertex of $\{x_1,x_2,y_1,y_2,u,v\}$ is incident with two packed $P_3$s.
  First we assume that $u$ and $v$ belong to two different proto-clusters, say $C_1$ and $C_4$, respectively.
  If $|C_1|>1$ or $|C_4|>1$, then there is a non-packed non-edge involving $C_1$ or $C_4$ and thus Reduction Rule~\ref{RR3} can be applied.
  Thus $|C_1|=|C_4|=1$.
  It follows that $V(C_1)\cup V(C_2)\cup V(C_3)\cup V(C_4)$ induces a connected component and Reduction Rule~\ref{RR4} can be applied, a contradiction.
  Assume that $u$ and $v$ belong to one proto-cluster, say $C_1$.
  If $|C_1|>2$, then Reduction Rule~\ref{RR3} can be applied for the same reason as above.
  Thus $|C_1|=2$ and $V(C_1)\cup V(C_2)\cup V(C_3)\cup V(C_4)$ induces a connected component.
  It follows that Reduction Rule~\ref{RR4} can be applied to this connected component, a contradiction.
  Therefore, Case (i) does not happen.

  Case (ii): Since the vertex pair between each pair of $C_1$, $C_2$, $C_3$, and $C_4$ needs to be covered to form a clique of size four and each vertex can be in at most two $P_3$s, without loss of generality, assume that $x_1,x_2,y_1$ belong to a packed~$P_3$.
  Pair $x_1y_2$ also needs to be covered by a packed~$P_3$; observe that by modification-disjointness of the packed $P_3$s, the third vertex in this $P_3$ cannot be contained in~$V(A)$.
  Thus, there is another vertex $y_3 \notin V(A)$ such that $x_1,y_2,y_3$ belong to a packed $P_3$.
  The vertex pairs $x_2y_2$ and $y_1y_2$ cannot be covered by one packed $P_3$ since $x_2y_1$ is already covered by a packed $P_3$.
  Thus $x_2y_2$ and $y_1y_2$ need to be covered by two distinct $P_3$s respectively.
  However, then $y_2$ is incident with three packed $P_3$s, a contradiction.
  Therefore Case (ii) does not happen either.
  It follows that $V(A)$ consists of one proto-cluster of size~2 and two proto-clusters of size~1.

  Next we show that the claims on the $P_3$s in \cref{en:Clique4-verts} as well as \cref{en:Clique4-extraneous,en:Clique4-extraneous2} are true.
  Suppose that $A$ is a clique of size $4$ in $G\triangle S$.
  Let $V(A)=\{x,y,z_1,z_2\}$.
  By the analysis above, we get that two vertices of $A$ belong to a proto-cluster of size~2 and the other two vertices of~$A$ belong to two distinct proto-clusters of size~1 respectively.
  Without loss of generality, assume that $z_1,z_2$ form a proto-cluster $C_1$ of size~2 in $G$ while $x$ and $y$ form a proto-cluster $C_2$ of size~1 and a proto-cluster $C_3$ of size~1 in $G$ respectively.
  See Fig.~\ref{fig:formClique4} for an illustration.

  Since there are three vertex pairs, i.e., $\{xy,xz_1,xz_2\}$, between $x$ and $V(A)\setminus \{x\}$, two of the three vertex pairs are covered by one packed~$P_3$.
  Moreover, this $P_3$ cannot contain two vertices of~$C_1$.
  Without loss of generality, let thus $x,y,z_2$ belong to a packed $P_3$.
  Since $xz_1$ is also covered by a $P_3$ and this $P_3$ is modification-disjoint to the one containing $x,y,z_2$, there is another vertex $u \notin V(A)$ such that $x,u,z_1$ belong to a packed~$P_3$ in~$G$.

  Also $yz_1$ needs to be covered by a packed $P_3$, so there is another vertex $v$ such that $y,v,z_1$ belong to a packed $P_3$ ($u$ and $v$ are different as otherwise the $P_3$s induced by $y,v,z_1$ and $x,u,z_1$ are not modification-disjoint). Suppose that $u$ and $v$ belongs to the same proto-cluster of size at least~2.
  By \cref{cor:noBigClique}, this proto-cluster has size exactly~2.
  Since $x$ and $y$ are incident with two packed $P_3$s respectively, $uy$ and $vx$ are two non-packed non-edges. Thus Reduction Rule~\ref{RR3} can be applied to the proto-clusters adjacent to these non-edges, a contradiction. It follows that $u$ and $v$ must belong to two distinct proto-clusters.
  Assume that there is a vertex $u'$ such that $u'$ and $u$ belong to one proto-cluster of size at least two.
  Since $x,z_1$ are already incident with two packed $P_3$s respectively, $u'x$ and $u'z_1$ must be non-packed non-edges.
  Then Reduction Rule~\ref{RR3} can be applied since $ux$ or $uz_1$ is a packed edge. It follows that $u$ belongs to a proto-cluster of size one, say $C_4$. Similarly, we can show that $v$ belongs to a proto-cluster of size one, say~$C_5$.

  Finally we show that Item~\ref{en:Clique4-extraneous3} is true. Suppose for contradiction that $u,v,z_2$ belong to the same proto-cluster.
  Then every vertex of $\{u,v,x,y,z_1,z_2\}$ is incident with two packed $P_3$s.
  It follows that the subgraph induced by $\{u,v,x,y,z_1,z_2\}$ is a connected component in $G$, which can be handled by Reduction Rule~\ref{RR4}.
  Thus $u,v,z_2$ cannot belong to the same packed $P_3$.
  This completes the proof for the lemma.
\end{proof}

\begin{figure}[t]
\centering
\begin{tikzpicture}[scale=0.47]
\begin{scope}[shift={(-1cm,0cm)}]

		\node [style=nodestyle1,label=above:{$z_1$}] (0) at (0, 1.5) {};
		\node [style=nodestyle1,label=below:{$z_2$}] (1) at (0, -1.5) {};
		\node [style=nodestyle1,label=below:{$y$}] (2) at (-2, -0) {};
		\node [style=nodestyle1,label=below:{$x$}] (3) at (2, -0) {};
		\node [style=nodestyle1,label=below:{$v$}] (4) at (-3.75, -0) {};
		\node [style=nodestyle1,label=below:{$u$}] (5) at (3.75, -0) {};
		\node [style=none,label=below:{Item (1)}] (6) at (0, -2) {};

		\draw [style=black,very thick] (0) to (1);
		\draw [style=blue,thick] (0) to (2);
		\draw [style=blue,thick] (2) to (4);
		\draw [style=blue,thick,dashed] (0) to (4);
		\draw [style=red,thick] (2) to (1);
		\draw [style=red,thick] (1) to (3);
		\draw [style=red,thick,dashed] (2) to (3);
		\draw [style=green,thick] (0) to (3);
		\draw [style=green,thick] (3) to (5);
		\draw [style=green,thick,dashed] (0) to (5);

\end{scope}

\begin{scope}[shift={(11cm,0cm)}]
		\node [style=nodestyle1,label=above:{$z_1$}] (0) at (-2, 1.5) {};
		\node [style=nodestyle1,label=below:{$z_2$}] (1) at (-2, -1.5) {};
		\node [style=nodestyle1,label=above right:{$x$}] (2) at (-2, -0) {};
		\node [style=nodestyle1,label=below:{$y$}] (3) at (-4, -0) {};
		\node [style=nodestyle1,label={[label distance=-0.5mm]below:{$u$}}] (4) at (0, -0) {};
		\node [style=nodestyle1,label=below:{$v$}] (5) at (-6, -0) {};
		\node [style=nodestyle1,label=below:{$w$}] (6) at (2, -0) {};
		\node [style=none,label=below:{Item (2)}] (7) at (-2, -2.2) {};

		\draw [style=blue,thick] (0) to (3);
		\draw [style=blue,thick] (5) to (3);
		\draw [style=blue,thick,dashed] (5) to (0);
		\draw [style=red,thick] (3) to (2);
		\draw [style=red,thick,dashed] (2) to (1);
		\draw [style=red,thick] (3) to (1);
		\draw [style=gray,thick,thick] (4) to (6);
		\draw [style=gray,thick,thick] (4) to (1);
		\draw [style=gray,thick,thick,dashed] (1) to (6);
		\draw [style=green,thick,dashed] (0) to (2);
		\draw [style=green,thick] (2) to (4);
		\draw [style=green,thick] (0) to (4);

		\draw [style=black,very thick,bend right] (0) to (1);
\end{scope}

\begin{scope}[shift={(21cm,0cm)}]
		\node [style=nodestyle1,label=above:{$z_1$}] (0) at (-2, 1.5) {};
		\node [style=nodestyle1,label=below:{$z_2$}] (1) at (-2, -1.5) {};
		\node [style=nodestyle1,label=above right:{$x$}] (2) at (-2, -0) {};
		\node [style=nodestyle1,label=below:{$y$}] (3) at (-4, -0) {};
		\node [style=nodestyle1,label={[label distance=-0.5mm]below:{$u$}}] (4) at (0, -0) {};
		\node [style=nodestyle1,label=below:{$v$}] (5) at (-6, -0) {};
		\node [style=nodestyle1,label=below:{$w$}] (6) at (2, -0) {};
		\node [style=none,label=below:{Item (3)}] (7) at (-2, -2.2) {};

		\draw [style=blue,thick] (0) to (3);
		\draw [style=blue,thick] (5) to (3);
		\draw [style=blue,thick,dashed] (5) to (0);
		\draw [style=red,thick,dashed] (3) to (2);
		\draw [style=red,thick] (2) to (1);
		\draw [style=red,thick] (3) to (1);
		\draw [style=gray,thick,thick] (4) to (6);
		\draw [style=gray,thick,thick] (4) to (1);
		\draw [style=gray,thick,thick,dashed] (1) to (6);
		\draw [style=green,thick,dashed] (0) to (2);
		\draw [style=green,thick] (2) to (4);
		\draw [style=green,thick] (0) to (4);

		\draw [style=black,very thick,bend right] (0) to (1);
\end{scope}

\begin{scope}[shift={(1cm,-7cm)}]
		\node [style=nodestyle1,label=above:{$z_1$}] (0) at (-2, 1.5) {};
		\node [style=nodestyle1,label=below:{$z_2$}] (1) at (-2, -1.5) {};
		\node [style=nodestyle1,label=above right:{$x$}] (2) at (-2, -0) {};
		\node [style=nodestyle1,label=below:{$y$}] (3) at (-4, -0) {};
		\node [style=nodestyle1,label={[label distance=-0.5mm]below:{$u$}}] (4) at (0, -0) {};
		\node [style=nodestyle1,label=below:{$v$}] (5) at (-6, -0) {};
		\node [style=nodestyle1,label=below:{$w$}] (6) at (2, -0) {};
		\node [style=none,label=below:{Item (4)}] (7) at (-2, -2.2) {};

		\draw [style=blue,thick] (0) to (3);
		\draw [style=blue,thick] (5) to (3);
		\draw [style=blue,thick,dashed] (5) to (0);
		\draw [style=red,thick] (3) to (2);
		\draw [style=red,thick,dashed] (2) to (1);
		\draw [style=red,thick] (3) to (1);
		\draw [style=gray,thick,thick] (4) to (6);
		\draw [style=gray,thick,thick] (4) to (1);
		\draw [style=gray,thick,thick,dashed] (1) to (6);
		\draw [style=green,thick] (0) to (2);
		\draw [style=green,thick] (2) to (4);
		\draw [style=green,thick,dashed] (0) to (4);

		\draw [style=black,very thick,bend right] (0) to (1);
\end{scope}

\begin{scope}[shift={(11cm,-7cm)}]
		\node [style=nodestyle1,label=above:{$z_1$}] (0) at (-2, 1.5) {};
		\node [style=nodestyle1,label=below:{$z_2$}] (1) at (-2, -1.5) {};
		\node [style=nodestyle1,label=above right:{$x$}] (2) at (-2, -0) {};
		\node [style=nodestyle1,label=below:{$y$}] (3) at (-4, -0) {};
		\node [style=nodestyle1,label={[label distance=-0.5mm]below:{$u$}}] (4) at (0, -0) {};
		\node [style=nodestyle1,label=below:{$v$}] (5) at (-6, -0) {};
		\node [style=nodestyle1,label=below:{$w$}] (6) at (2, -0) {};
		\node [style=none,label=below:{Item (5)}] (7) at (3, -2.2) {};

		\node [style=nodestyle1,label=below:{$v$}] (8) at (4, -0) {};
		\node [style=nodestyle1,label=below:{$a$}] (9) at (6, -0) {};
		\node [style=nodestyle1,label=below:{$b$}] (10) at (8, -0) {};
		\node [style=nodestyle1,label=below:{$c$}] (11) at (10, -0) {};
		\node [style=nodestyle1,label=below:{$w$}] (12) at (12, -0) {};
		\node [style=none] (13) at (2.5, -0) {};
		\node [style=none] (14) at (3.7, -0) {};

		\draw [style=blue,thick] (0) to (3);
		\draw [style=blue,thick] (5) to (3);
		\draw [style=blue,thick,dashed] (5) to (0);
		\draw [style=red,thick] (3) to (2);
		\draw [style=red,thick] (2) to (1);
		\draw [style=red,thick,dashed] (3) to (1);
		\draw [style=gray,thick,thick] (4) to (6);
		\draw [style=gray,thick,thick] (4) to (1);
		\draw [style=gray,thick,thick,dashed] (1) to (6);
		\draw [style=green,thick] (0) to (2);
		\draw [style=green,thick,dashed] (2) to (4);
		\draw [style=green,thick] (0) to (4);

		\draw [style=black,very thick,bend right] (0) to (1);

		\draw [style=brown,thick] (8) to (9);
		\draw [style=brown,thick] (9) to (10);
		\draw [style=brown,thick,dashed,bend left] (8) to (10);
		\draw [style=cyan,thick] (10) to (11);
		\draw [style=cyan,thick] (11) to (12);
        \draw [style=cyan,thick,dashed,bend left] (10) to (12);
		\draw [->,line width=0.6mm] (13) to (14);
\end{scope}

\end{tikzpicture}
\caption{Examples of Reduction Rule~\ref{RR9}. Vertices $z_1,z_2$ form a proto-cluster of size $2$ and each of the other vertices belongs to a proto-cluster of size $1$.
  Note that in Item~\ref{9I3} the $P_3$ $y, x, z_2$ is not fully specified by the conditions, that is, its packed non-edge could also be between different vertices.} \label{fig:RR9}
\end{figure}

We next leverage the structure observed in \cref{Clique4} in a reduction rule.
Essentially, all the possible ways to realize the structure of \cref{Clique4} result in a situation that can either be solved directly, or can be replaced by a $P_5$ with suitable new packed $P_3$s.

\begin{reduction} \label{RR9}
  After applying Reduction Rules~\ref{RR1} to \ref{RR6} exhaustively, let $C_1, C_2, C_3,C_4$, and $C_5$ be five proto-clusters such that
  \begin{itemize}
  \item $V(C_1)=\{z_1,z_2\}, V(C_2)=\{x\}, V(C_3)=\{y\}, V(C_4)=\{u\}, V(C_5)=\{v\}$,
  \item $x,y,z_2$ belong to a packed $P_3$,
  \item $x,u,z_1$ belong to a packed $P_3$, and
  \item $y,v,z_1$ belong to a packed $P_3$.
  \end{itemize}
  Check which of the following conditions are satisfied and apply the corresponding data reduction.
\\  \\
\noindent If $uz_2$ and $vz_2$ are non-packed non-edges, then

\begin{enumerate}[label=(\arabic*)]
    \item \label{9I1} delete the edges $vy$ and $ux$, insert an edge to the packed non-edge of the $P_3$ which covers $xy$ and remove the corresponding packed $P_3$s from $\mathcal{H}$.
\end{enumerate}
Otherwise, if there is a vertex $w$ such that $u,w,z_2$ belong to a packed $P_3$, then do reductions according to the following cases:
\begin{enumerate}[label=(\arabic*)]
    \setcounter{enumi}{1}
    \item \label{9I2} $xz_1$ and $xz_2$ are packed non-edges: Return NO.
    \item \label{9I3} $xz_1$ is a packed non-edge and $xz_2$ is a packed edge: Delete the edge $uw$, delete the edges between $y$ and $\{x,z_1,z_2\}$, and add an edge to the non-edge $xz_1$.
      Remove the corresponding packed $P_3$s from $\mathcal{H}$.
    \item \label{9I4} $xz_1$ is a packed edge and $xz_2$ is a packed non-edge:
      Delete the edge $vy$, delete the edges between $u$ and $\{x,z_1,z_2\}$, and add an edge to the non-edge $xz_2$.
      Remove the corresponding packed $P_3$s from $\mathcal{H}$.
    \item \label{9I5} $xz_1$ and $xz_2$ are packed edges: Replace the subgraph induced by $\{u,v,w,x,y,z_1,z_2\}$ with two $P_3$s $vab$ and $bcw$.
      Remove the four packed $P_3$s incident with one vertex of $\{u,x,y,z_1,z_2\}$ from $\mathcal{H}$, and add $vab$ and $bcw$ to $\mathcal{H}$.
\end{enumerate}
Otherwise, if there is a vertex $w'$ such that $v,w',z_2$ belong to a packed $P_3$, then do reductions according to the following cases:
\begin{enumerate}[label=(\arabic*)]
    \setcounter{enumi}{5}
    \item \label{9I6} $yz_1$ and $yz_2$ are packed non-edges: Return NO.
    \item \label{9I7} $yz_1$ is a packed non-edge and $yz_2$ is a packed edge:
      Delete the edge $vw'$, delete the edges between $x$ and $\{y,z_1,z_2\}$, and add an edge to the non-edge $yz_1$.
      Remove the corresponding packed $P_3$s from $\mathcal{H}$.
    \item \label{9I8} $yz_1$ is a packed edge and $yz_2$ is a packed non-edge:
      Delete the edge $ux$, delete the edges between $v$ and $\{y,z_1,z_2\}$, and add an edge to the non-edge $yz_2$.
      Remove the corresponding packed $P_3$s from $\mathcal{H}$.
    \item \label{9I9} $yz_1$ and $yz_2$ are packed edges:
      Replace the subgraph induced by $\{u,v,w',x,y,z_1,z_2\}$ with two $P_3$s $w'ab$ and $bcu$.
      Remove the four packed $P_3$s incident with one vertex of $\{v,x,y,z_1,z_2\}$ from $\mathcal{H}$, and add $w'ab$ and $bcu$ to $\mathcal{H}$.
\end{enumerate}
\end{reduction}

\begin{lemma} \label{safeRR9}
Reduction Rule~\ref{RR9} is safe.
\end{lemma}

\begin{proof}
  Note that Items~\ref{9I2} to~\ref{9I5} are symmetric to the Items~\ref{9I6} to~\ref{9I9} respectively.
  To be more precise, we can relabel the vertices in the subgraph induced by $\{u,v,x,y,z_1,z_2,w\}$ of Item~\ref{9I2} as follows: we exchange the labeling of $x$ and $y$, exchange the labeling of $u$ and $v$, and relabel vertex $w$ as $w'$.
  Then we get a subgraph induced by $\{u,v,x,y,z_1,z_2,w'\}$ satisfying the condition of Item~\ref{9I6}.
  Similarly we show that Item~\ref{9I3} is symmetric to Item~\ref{9I7}, Item~\ref{9I4} is symmetric to Item~\ref{9I8}, and Item~\ref{9I5} is symmetric to Item~\ref{9I9}.
  Thus, the safeness of Items~\ref{9I6} to~\ref{9I9} follows from the safeness of Items~\ref{9I2} to~\ref{9I5}.

  Thus, it is enough to prove the correctness of Items~\ref{9I1} to~\ref{9I5}.
  In the following, to ease arguments we will sometimes successively determine edits that are being made by the solution.
  We then tacitly assume that the packed $P_3$ corresponding to the edit is removed from $\mathcal{H}$, obtaining a new instance of \pCEMT.
  We also sometimes leverage the correctness of the previous reduction rules and apply them to the resulting instances.
  This implies additional edits that can be assumed to be in the solution without loss of generality.

  \proofparagraph{Item~\ref{9I1}.}
  Suppose that $(G,{\mathcal{H}})$ is an instance of \pCEMT\ satisfying the condition of Item~\ref{9I1} of Reduction Rule~\ref{RR9}.
  First, we claim that $vz_1$ and $uz_1$ must be packed non-edges. Suppose for contradiction that $uz_1$ or $vz_1$ is a packed edge.
  Since $vz_2$ and $uz_2$ are non-packed non-edges as in the assumption, Reduction Rule~\ref{RR3} can be applied, a contradiction.
  Let $F$ be the set of vertex pairs edited by Item~\ref{9I1}.
  Observe that $F$ contains exactly one vertex pair of each of the packed $P_3$s incident with one of the vertices of $\{x,y,z_1,z_2\}$.
  After applying the operations of Item~\ref{9I1} we get an instance $(G'=G\triangle F,{\mathcal{H}}')$.
  We claim that $(G,{\mathcal{H}})$ is a YES-instance if and only if $(G',{\mathcal{H}}')$ is a YES-instance.

  First assume that $(G',{\mathcal{H}}')$ has a solution $S'$.
  Observe that $\{x, y, z_1, z_2\}$ is a cluster in $G'$ and no packed $P_3$ is incident with any vertex of this cluster.
  Thus, $S' \cap F = \emptyset$ and, moreover, $S'\cup F$ is a solution to $(G,{\mathcal{H}})$.

  Now assume that $(G,{\mathcal{H}})$ has a solution $S$.
  Then $G\triangle S$ is a cluster graph.
  We claim that $z_2$ is not incident with any other packed $P_3$s except the one covering $xy$.
  Suppose that there is another vertex $z_3$ such that $z_3z_2$ is a packed edge.
  Then $z_3z_1$ is a non-packed non-edge since $z_1$ is already incident with two packed $P_3$s.
  Then Reduction Rule~\ref{RR3} can be applied because $z_1$ and $z_2$ are together in a proto-cluster, a contradiction.
  Thus $z_2$ is not incident with any other packed $P_3$s except the one covering $xy$.

  Let $S_1\subseteq S$ be the set of vertex pairs which are packed edges or packed non-edges in the subgraph of $G$ induced by $\{u,v,x,y,z_1,z_2\}$.
  We claim that $\widehat{S}=(S\setminus S_1)\cup F$ is also a solution to $(G,{\mathcal{H}})$.
  Since $S$ is a solution to $(G,{\mathcal{H}})$, $S_1$ must contain exactly one vertex pair of each of the packed $P_3$s incident with one of the vertices of $\{x,y,z_1,z_2\}$.
  Since $F\cap (S\setminus S_1)=\emptyset$, $S_1\subseteq S$ and $|F|=|S_1|$, we get that $|\widehat{S}|=|S|$.
  Since $G\triangle S$ is a cluster graph, the subgraph of $G\triangle S$ induced by $V(G)\setminus \{x,y,z_1,z_2\}$ is also a cluster graph $\widetilde{G}$.
  $x,y,z_1$ are not incident with any other packed $P_3$s except the ones covering $xy,yz_1,xz_1$ since each of them is incident with two packed $P_3$s in $G$.
  We have shown that $z_2$ is not incident with any other packed $P_3$s except the one covering $xy$ in the previous paragraph.
  Since $x,y$ belong to two proto-clusters of size one and $z_1,z_2$ belong to a proto-cluster of size two, there is no edge between $V(G)\setminus \{u,v,x,y,z_1,z_2\}$ and $\{x,y,z_1,z_2\}$.
  It follows that $G\triangle \widehat{S}$ is a cluster graph consisting of two isolated parts, i.e., $\widetilde{G}$ and the clique of size four formed by $\{x,y,z_1,z_2\}$.
  It follows that $S\setminus S_1$ is a solution to $(G',{\mathcal{H}}')$.
  As a result, Item~\ref{9I1} is safe.

  \proofparagraph{Preparation for Items~\ref{9I2} to~\ref{9I5}.}
  For the proof of the correctness of Items~\ref{9I2} -~\ref{9I5}, we claim that $vz_1$ and $wz_2$ are packed non-edges. Suppose for contradiction that $vz_1$ or $wz_2$ is a packed edge. Since $z_1,z_2$ are already incident with two packed $P_3$s, $vz_2$ and $wz_1$ must be non-packed non-edges. Then Reduction Rule~\ref{RR3} can be applied, a contradiction. Thus $vz_1$ and $wz_2$ are packed non-edges.
  Also, we claim that $vz_2$ and $wz_1$ are non-packed non-edges since $z_1,z_2$ are both incident with two packed $P_3$s and there is no proto-cluster of size more than $2$ by Corollary~\ref{cor:noBigClique}.

  \proofparagraph{Item~\ref{9I2}.}
  For a contradiction, suppose that an instance $(G,{\mathcal{H}})$ of \pCEMT\ satisfying the condition of Item~\ref{9I2} of Reduction Rule~\ref{RR9} has a solution $S$.
  Observe that $vx$ is a non-packed non-edge because $x$ is in two packed $P_3$s with other vertices.
  Thus, at least one packed edge of $vy,xy$ belongs to $S$ since otherwise $\{v, x, y\}$ would induce a $P_3$ in $G \triangle S$.

  Suppose that $xy\in S$.
  Then $yz_2$ cannot be removed by $S$ and thus $yz_1 \notin S$ because otherwise $\{y, z_1, z_2\}$ would induce a $P_3$ in $G \triangle S$.
  Furthermore, $vz_1 \notin S$, because otherwise $\{v, z_1, z_2\}$ would induce a $P_3$ in $G \triangle S$.
  Thus, $vy\in S$.
  Consider the instance resulting from making the edits $xy$ an $vy$.
  Then, the subgraph induced by $\{y,z_1,z_2\}$ is a proto-cluster.
  Since $uy$ is a non-packed non-edge and $uz_1$ is a packed edge, Reduction Rule~\ref{RR3} can be applied, which deletes $uz_1$ and makes $ux$ become a non-packed edge.
  Now $u$ and $x$ form a proto-cluster of size two.
  Observe that now $xz_2$ is a non-packed non-edge.
  Thus again by Reduction Rule~\ref{RR3}, $uz_2$ is deleted and $uw$ becomes a non-packed edge.
  Then Reduction Rule~\ref{RR1} can be applied to the proto-cluster formed by $u,w,x$, a contradiction to $S$ being a solution.

  Suppose that $vy\in S$.
  After making this modification, $yz_1$ becomes a non-packed edge, so $yz_2\notin S$ since otherwise Reduction Rule~\ref{RR1} can be applied.
  If $xz_2\in S$, then $xz_1\in S$ since otherwise $\{x, z_1, z_2\}$ would induce a $P_3$.
  After making these two modifications, $uz_1$ and $ux$ become non-packed edges.
  Since $uy$ is a non-packed non-edge, Reduction Rule~\ref{RR1} can be applied to the proto-cluster formed by $u,x,y,z_1,z_2$.
  Thus $xz_2\notin S$.
  It follows that $xy \in S$ because $yz_2 \notin S$ as argued above.
  After making this modification, by Reduction Rule~\ref{RR3}, $uz_1$ is deleted and $ux$ becomes a non-packed edge.
  Now $u,x$ form a proto-cluster of size two.
  Again by Reduction Rule~\ref{RR3}, $uz_2$ is deleted and $uw$ becomes a non-packed edge.
  Then Reduction Rule~\ref{RR1} can be applied to the proto-cluster formed by $u,w,x$, a contradiction.

  As a result, $(G,{\mathcal{H}})$ is a NO-instance and Item~\ref{9I2} is safe.

  \proofparagraph{Item~\ref{9I3}.}
  Given an instance $(G,{\mathcal{H}})$ of \pCEMT\ satisfying the condition of Item~\ref{9I3} of Reduction Rule~\ref{RR9}, let $F$ be the set of vertex pairs edited by Item~\ref{9I3}.
  Observe that $F$ contains exactly one vertex pair of each of the packed $P_3$s incident with one of the vertices of $\{u,x,y,z_1,z_2\}$.
  After applying the operations of Item~\ref{9I3} we get an instance $(G'=G\triangle F,{\mathcal{H}}')$.
  We claim that $(G,{\mathcal{H}})$ is a YES-instance if and only if $(G',{\mathcal{H}}')$ is a YES-instance.

  First assume that $(G',{\mathcal{H}}')$ has a solution $S'$.
  Observe that $y$ is not in a packed $P_3$ in $\mathcal{H}'$ and thus $\{v, y\}$ forms a cluster in $G' \triangle S'$.
  Similarly, $w$ is not adjacent to any vertex of $u, x, z_1, z_2$.
  Thus $S'\cup F$ is a solution to $(G,{\mathcal{H}})$.

  Now assume that $(G,{\mathcal{H}})$ has a solution $S$.
  We claim that $F\subseteq S$.
  Suppose for contradiction that $xz_1\notin S$.
  Then there are two cases: (1) $ux\in S$ and (2) $uz_1\in S$.

  Case (1): If $ux\in S$, then $uz_1$ becomes a non-packed edge and $xz_1$ becomes a non-packed non-edge.
  It follows that $uz_2\notin S$ and $xz_2\in S$ since otherwise Reduction Rule~\ref{RR1} can be applied.
  Since $wz_1$ is a non-packed non-edge, $wz_2\notin S$.
  Thus $uw\in S$.
  We now distinguish whether $xy$ is an edge or a non-edge.
  If $xy$ is a non-edge, then $yz_2$ is an edge.
  Thus, $yz_2$ becomes a non-packed edge after $xz_2$ is deleted.
  Since $uy$ is a non-packed non-edge, Reduction Rule~\ref{RR1} can be applied, a contradiction.
  Otherwise, if $xy$ is an edge, then $yz_2$ is a non-edge.
  Then $y$ and $x$ form a cluster of size two.
  Recall that $vz_1$ is a packed non-edge.
  Thus, $yz_1$ is a packed edge and $uy$ is a non-packed non-edge, meaning that Reduction Rule~\ref{RR3} can be applied and $yz_1\in S$.
  Thus $vy$ becomes a non-packed edge.
  Since $vx$ is a non-packed non-edge, Reduction Rule~\ref{RR1} can be applied, a contradiction.

  Case (2): If $uz_1\in S$, then $uz_2\in S$ since otherwise Reduction Rule~\ref{RR1} can be applied.
  Thus $ux$ and $uw$ become non-packed edges after $uz_1$ and $uz_2$ are deleted. Since $xw$ is a non-packed non-edge, Reduction Rule~\ref{RR1} can be applied, a contradiction.

  Since both cases lead to a contradiction it follows that $xz_1\in S$.
  Thus, $xz_2,uz_2\notin S$.
  Thus $uw\in S$ since otherwise Reduction Rule~\ref{RR1} can be applied.

  It remains to show that all edges between $y$ and $\{x, z_1, z_2\}$ are in $S$.
  Now $u,x,z_1,z_2$ belong to one proto-cluster.
  Then by Reduction Rule~\ref{RR3}, indeed $yz_1\in S$.
  Thus $u,x,z_1,z_2$ are in one proto-cluster and $y$ is in a different proto-cluster.
  Since $uy$ is a non-packed non-edge, if $xy$ is an edge, $xy\in S$ since otherwise Reduction Rule~\ref{RR1} can be applied.
  Then by Reduction Rule~\ref{RR3}, $yz_1\in S$.
  Similarly, if $yz_2$ is an edge, $yz_2\in S$ since otherwise Reduction Rule~\ref{RR1} can be applied.
  Then by Reduction Rule~\ref{RR3}, $yz_1\in S$.
  As a result, $F\subseteq S$.

  We claim that $\widehat{S}=S\setminus F$ is a solution to $(G',{\mathcal{H}}')$.
  Since $u,x,y,z_1,z_2$ are already incident with two packed $P_3$s, $\{u,x,y,z_1,z_2\}$ are isolated from $V(G)\setminus \{u,v,w,x,y,z_1,z_2\}$ in $G$.
  It follows that in $G\triangle S$, $v,y$ belong to a clique of size two, $u,x,z_1,z_2$ belong to a clique of size four and $V(G)\setminus \{u,v,x,y,z_1,z_2\}$ induces a cluster graph such that there are no edges between $\{u,v,x,y,z_1,z_2\}$ and $V(G)\setminus \{u,v,x,y,z_1,z_2\}$.
  Thus $G'\triangle \widehat{S}$ is a cluster graph and $|\widehat{S}|=|{\mathcal{H}}'|$.
  As a result, Item~\ref{9I3} is safe.

  \proofparagraph{Item~\ref{9I4}.}
  Note that Item~\ref{9I4} is symmetric to Item~\ref{9I3}. To be more precise, we can relabel the vertices in the subgraph of Item~\ref{9I4} as follows: we exchange the labeling of $z_1$ and $z_2$, exchange the labeling of $u$ and $y$, and exchange the labeling of $w$ and $v$. Then we get a subgraph satisfying the condition of Item~\ref{9I3}.
  Thus, the safeness of Item~\ref{9I4} follows from the safeness of Item~\ref{9I3}.

  \proofparagraph{Item~\ref{9I5}.}
  Let $(G,{\mathcal{H}})$ be an instance of \pCEMT\ satisfying the condition of Item~\ref{9I5} of Reduction Rule~\ref{RR9}.
  Since $xz_1,xz_2$ are packed edges, there are two packed edges between $y$ and $\{x,z_1,z_2\}$; let the set of the two packed edges be $W_y$.
  Also, there are two packed edges between $u$ and $\{x,z_1,z_2\}$; let the set of the two packed edges be $W_u$.
  After applying the operations of Item~\ref{9I5} we get an instance $(G',{\mathcal{H}}')$.
  We claim that $(G,{\mathcal{H}})$ is a YES-instance if and only if $(G',{\mathcal{H}}')$ is a YES-instance.

  For soundness, suppose that $(G',{\mathcal{H}}')$ has a solution $S'$.
  Thus $G'\triangle S'$ is a cluster graph.
  There are nine possible cases for which vertex pairs in the two $P_3$s are contained in $S'$.
  However, only the following five cases are valid as in the other cases it is easy to see that $S'$ is not a cluster editing set:

  \emph{(1) $\{ab,cw\}\subseteq S'$.}
  Since $G'\triangle S'$ is a cluster graph, the subgraph of $G'\triangle S'$ induced by $V(G)\setminus \{b,c\}$ is also a cluster graph.
  Let $S= (S'\setminus \{ab,cw\}) \cup W_y \cup (\{ux,uz_1,uz_2\} \setminus W_u) \cup \{uw\}$.
  Note that, by construction of $G'$, there are no edges or packed non-edges from $a, b, c$ to other vertices except to $v, w$.
  In particular, this means that $\{v, a\}$ is a cluster in $G' \triangle S'$.
  Thus $G\triangle S$ is also a cluster graph and $|S|=|\mathcal{H}|$.
  Thus $(G,{\mathcal{H}})$ is a YES-instance.

  \emph{(2) $\{va,bc\}\subseteq S'$.}
  We can show that $(G,{\mathcal{H}})$ has a solution in a similar way to that of Case (1).

  \emph{(3) $\{vb,bc\}\subseteq S'$.}
  Since vertices $a,b$ and $c$ are not adjacent to any vertex of $V(G')\setminus \{a,b,c,v,w\}$ in $G'$, $\{v,a,b\}$ induces a triangle which is a connected component and $\{c,w\}$ induces a clique of size two which is also a connected component in $G'\triangle S'$.
  Let $S= (S'\setminus \{vb,bc\}) \cup W_u \cup (\{yx,yz_1,yz_2\}\setminus W_y) \cup \{vy\}$.
  It follows that $G\triangle S$ is a cluster graph and $|S|=|\mathcal{H}|$.
  Thus $(G,{\mathcal{H}})$ is a YES-instance.

  \emph{(4) $\{ab,bw\}\subseteq S'$.}
  We can show that $(G,{\mathcal{H}})$ has a solution in a similar way to that of Case (3).

  \emph{(5) $\{ab,bc\}\subseteq S'$.}
  Since $G'\triangle S'$ is a cluster graph, the subgraph of $G'\triangle S'$ induced by $V(G)\setminus \{b,c\}$ is also a cluster graph.
  Let $S= (S'\setminus \{ab,bc\}) \cup \{yx,yz_1,uz_1,uz_2\}$.
  Note that, by construction of $G'$, there are neither edges nor packed non-edges from $a, b, c$ to other vertices except to $v, w$.
  It follows that $\{v,a\}$ and $\{c,w\}$ are two clusters in $G' \triangle S'$.
  Thus $G\triangle S$ is also a cluster graph and $|S|=|\mathcal{H}|$.
  Thus $(G,{\mathcal{H}})$ is a YES-instance.

  For completeness, suppose that $(G,{\mathcal{H}})$ has a solution $S$.
  We can check that there are only three possible cases ((1) $vy\in S$, $uw\notin S$; (2) $uw\in S$, $vy\notin S$ (3) $vy\notin S$, $uw\notin S$).
  Readers can easily check that the the other case in which $vy\in S$, $uw\in S$ is invalid as there is no such cluster editing set $S$.

  \emph{(1) $F_1=W_u \cup (\{yx,yz_1,yz_2\}\setminus W_y) \cup \{vy\}\subseteq S$.}
  Since vertices $u,x,y,z_1,z_2$ are not adjacent to any vertex of $V(G)\setminus \{u,v,w,x,y,z_1,z_2\}$ in $G$, $\{x,y,z_1,z_2\}$ induces a clique of size four which is a connected component and $\{u,w\}$ induces a clique of size two which is also a connected component in $G\triangle S$.
  Let $S'=S\setminus F_1\cup \{va,bc\}$.
  It follows that $G'\triangle S'$ is a cluster graph and $|S'|=|{\mathcal{H}}'|$.
  Thus $(G',{\mathcal{H}}')$ is a YES-instance.

  \emph{(2) $F_2=W_y \cup (\{ux,uz_1,uz_2\}\setminus W_u) \cup \{uw\}\subseteq S$.}
  Since vertices $u,x,y,z_1,z_2$ are not adjacent to any vertex of $V(G)\setminus \{u,v,w,x,y,z_1,z_2\}$ in $G$, $\{u,x,z_1,z_2\}$ induces a clique of size four which is a connected component and $\{v,y\}$ induces a clique of size two which is also a connected component in $G\triangle S$.
  Let $S'=S\setminus F_2\cup \{ab,cw\}$.
  It follows that $G'\triangle S'$ is a cluster graph and $|S'|=|{\mathcal{H}}'|$.
  Thus $(G',{\mathcal{H}}')$ is a YES-instance.

  \emph{(3) $F_3=W_u\cup W_y\subseteq S$.}
  Since vertices $u,x,y,z_1,z_2$ are not adjacent to any vertex of $V(G)\setminus \{u,v,w,x,y,z_1,z_2\}$ in $G$, $\{x,z_1,z_2\}$, $\{v,y\}$ and $\{u,w\}$ are three clusters in $G\triangle S$.
  Let $S'=(S\setminus F_3)\cup \{ab,bc\}$.
  It follows that $G'\triangle S'$ is a cluster graph and $|S'|=|{\mathcal{H}}'|$.
  Thus $(G',{\mathcal{H}}')$ is a YES-instance.
  
  As a result, Item~\ref{9I5} is safe.
  This completes the proof for the lemma.
\end{proof}

After applying Reduction Rule~\ref{RR9}, Reduction Rule~\ref{RRIsClique} can be applied to remove the isolated cliques.

\begin{lemma}\label{noC4afterReduction}
  After applying Reduction Rules~\ref{RR1} to~\ref{RR9} exhaustively, let $(G,{\mathcal{H}})$ be an instance of \pCEMT\ which has a solution $S$.
  Then there is no clique of size at least $4$ in $G\triangle S$.
\end{lemma}
\begin{proof}
  By Lemma~\ref{noClique7},~\ref{noClique6} and~\ref{noClique5}, there is no clique of size at least $5$ in $G\triangle S$.
  Suppose for contradiction that $A$ is a clique of size $4$ in $G\triangle S$. Let $V(A)=\{x,y,z_1,z_2\}$.
  Then by Lemma~\ref{Clique4}, three vertices of $V(A)$, say $x,y,z_2$ belong to one packed $P_3$ in $G$, and one vertex of $x,y,z_2$, say $z_2$, forms with $z_1$ a proto-cluster $C_1$ of size two in $G$. Meanwhile, $x$ and $y$ form a proto-cluster $C_2$ of size one and a proto-cluster $C_3$ of size one in $G$ respectively.
  Moreover, there are two vertices $u$ and $v$ such that $x,u,z_1$ belong to a packed $P_3$ in $G$, $y,v,z_1$ belong to another packed $P_3$ in $G$, and $u$ and $v$ form a proto-cluster $C_4$ of size one and $C_5$ of size one in $G$ respectively.
  There are five cases:

  \emph{(1) $uz_2$ and $vz_2$ are non-packed non-edges.}
  Then Item~\ref{9I1} of Reduction Rule~\ref{RR9} can be applied.

  \emph{(2) $uz_2$ is a packed edge and $vz_2$ is a non-packed non-edge.}
  Then one of Items~\ref{9I2} -~\ref{9I5} can be applied.

  \emph{(3) $uz_2$ is a packed non-edge and $vz_2$ is a non-packed non-edge.}
  By Item~\ref{en:Clique4-extraneous3} of Lemma~\ref{Clique4}, $u,v,z_2$ cannot belong to one packed $P_3$.
  Thus there is another vertex $w$ such that $u,w,z_2$ belong to a packed $P_3$ and $uz_2$ is a packed non-edge.
  Thus $wz_2$ is a packed edge.
  Since $z_1$ is in a proto-cluster of size one and $z_1$ is already incident with two packed $P_3$s, $wz_1$ must be a non-packed non-edge.
  Since $z_1,z_2$ belong to one proto-cluster, Reduction Rule~\ref{RR3} can be applied.

  \emph{(4) $vz_2$ is a packed edge and $uz_2$ is a non-packed non-edge.}
  Then one of Items~\ref{9I6} -~\ref{9I9} of \cref{RR9} can be applied.

  \emph{(5) $vz_2$ is a packed non-edge and $uz_2$ is a non-packed non-edge.}
  By Item~\ref{en:Clique4-extraneous3} of Lemma~\ref{Clique4}, $u,v,z_2$ cannot belong to one packed $P_3$.
  Thus there is another vertex $w'$ such that $v,w',z_2$ belong to a packed $P_3$ and $vz_2$ is a packed non-edge.
  Thus $w'z_2$ is a packed edge.
  Since $z_1$ is in a proto-cluster of size one and $z_1$ is already incident with two packed $P_3$s, $w'z_1$ must be a non-packed non-edge.
  Since $z_1,z_2$ belong to one proto-cluster, reduction Rule~\ref{RR3} can be applied.

  It follows that there is no clique of size $4$ in $G\triangle S$. This completes the proof for the lemma.
\end{proof}

\subsubsection{Reduction to 2-SAT} \label{sec:2sat}

First, we introduce a new problem called \textsc{Cluster Deletion above modification-disjoint $P_{3}$ packing}. The formal definition is as follows:

\defprobMD{\textsc{Cluster Deletion above modification-disjoint $P_{3}$ packing} (\pCDA)}
{A graph $G=(V,E)$, a modification-disjoint packing \pkg\ of induced $P_{3}$s of $G$, and a nonnegative integer~$\ell$.}
{Is there a \emph{cluster deletion set}, i.e., a set of edges $S\subseteq E$ such that $G'=(V,E\setminus S)$ is a disjoint union of cliques, with $|S|-|\mathcal{H}|\leq\ell$?}

Note that in the definition of \pCDA, the $P_3$s of $\mathcal{H}$ are still modification-disjoint although the solution to the problem contains only edge deletions.
Since in this paper we only need to consider the special case of \pCDA\ where $\ell=0$, we use the tuple $(G,{\mathcal{H}})$ to represent an instance of \pCDA\ in which $\ell=0$.

\begin{lemma} \label{ClusterDeletion}
  Given an instance $(G,{\mathcal{H}})$ of \pCEMT, after applying Reduction Rules~\ref{RR1} to~\ref{RR9} exhaustively, we get an instance $(G',{\mathcal{H}}')$ of \pCEMT.
  Then $(G,{\mathcal{H}})$ is a YES-instance of \pCEMT\ if and only if $(G',{\mathcal{H}}')$ is a YES-instance of \pCDA.
\end{lemma}
\begin{proof}
  \proofparagraphf{Soundness.} Assume that $(G',{\mathcal{H}}')$ is a YES-instance of \pCDA\ and $S'$ is a cluster deletion set of size $|{\mathcal{H}}'|$.
  Obviously $S'$ is also a cluster editing set of $G'$. Thus $(G',{\mathcal{H}}')$ is a YES-instance of \pCEMT. It follows that $(G,{\mathcal{H}})$ is a YES-instance of \pCEMT.

  \proofparagraph{Completeness.} Assume that $(G,{\mathcal{H}})$ is a YES-instance of \pCEMT.
  Then $(G',{\mathcal{H}}')$ is a YES-instance of \pCEMT\ and let $S'$ be its solution. By Lemma~\ref{noC4afterReduction}, there is no clique of size at least four in $G'\triangle S'$. By Observation~\ref{obs2}, every non-edge of $S'$ is a packed non-edge. Let $uw\in S'$ be a non-edge of $G'$ which is covered by a $P_3$ $uvw$ of ${\mathcal{H}}'$. Then in $G'\triangle S'$, $\{u,v,w\}$ induces a triangle which is a connected component. It follows that $S'\setminus \{uw\}\cup \{uv\}$ is also a solution to $(G',{\mathcal{H}}')$. Let $S_1\subseteq S'$ be the set of non-edges of $S'$. Then there is a set $S_2$ of packed edges of $G'$ such that $(S'\setminus S_1)\cup S_2$ is a cluster deletion set for $G'$ of size $|{\mathcal{H}}'|$. Thus $(G',{\mathcal{H}}')$ is a YES-instance of \pCDA. This completes the proof for the lemma.
\end{proof}

Given an instance $(G,{\mathcal{H}})$ of \pCEMT, after applying Reduction Rules~\ref{RR1} to~\ref{RR9} exhaustively, we get an instance $(G',{\mathcal{H}}')$ of \pCDA.
Let $E_c\subseteq E(G')$ be the set of edges covered by some $P_3$ of ${\mathcal{H}}'$ and let $\lambda=2|{\mathcal{H}}'|$. We fix an arbitrary ordering of the edges of $E_c$ and label these edges by $e_0,e_1,...,e_{\lambda-1}$ according to this ordering. We construct an instance of \textsc{2-SAT} with $\lambda$ variables $x_0,x_1,...,x_{\lambda-1}$ as follows. First, initialize the \textsc{2-SAT} formula $\Phi=\true$.  For each induced $P_3$ $xyz\in {\mathcal{H}}'$, let $e_i=xy, e_j=yz$ and update $\Phi\leftarrow\Phi\wedge (x_i\vee x_j)\wedge (\neg x_i\vee \neg x_j)$.
For each induced $P_3$ $uvw$ in $G'$ such that $uv$ and $vw$ belong to two distinct $P_3$s of ${\mathcal{H}}'$, respectively, let $uv=e_p$ and $vw=e_q$ and update $\Phi\leftarrow\Phi\wedge (x_p\vee x_q)$.
This completes the construction of the \textsc{2-SAT} instance.

We now show that formula $\Phi$ and $(G,{\mathcal{H}})$ are equivalent instances.

\begin{lemma} \label{2SAT}
Given an instance $(G,{\mathcal{H}})$ of \pCEMT, after applying Reduction Rules~\ref{RR1} -~\ref{RR9} exhaustively, we get an instance $(G',{\mathcal{H}}')$ of \pCDA. We construct a \textsc{2-SAT} formula $\Phi$ as described above. Then $(G,{\mathcal{H}})$ is a YES-instance if and only if $\Phi$ is satisfiable.
\end{lemma}
\begin{proof}

  \proofparagraph{Completeness.}
  Assume that $(G,{\mathcal{H}})$ is a YES-instance. By Lemma~\ref{ClusterDeletion}, $(G',{\mathcal{H}}')$ is a YES-instance of \pCDA\ and let $S'$ be a cluster deletion set for $G'$ of size $|{\mathcal{H}}'|$. Let $\alpha$ be an assignment to $\Phi$ such that $\alpha(x_i)=\true$ if and only if $e_i\in S'$ for $i=0,...,\lambda-1$. We claim that $\alpha$ is a satisfying assignment to $\Phi$. Since $|S'|=|{\mathcal{H}}'|$ and the $P_3$s of ${\mathcal{H}}'$ are modification-disjoint, $S'$ contains exactly one edge of every packed $P_3$ of ${\mathcal{H}}'$. It follows that for every $P_3$ $xyz\in {\mathcal{H}}'$ ($xy=e_i,yz=e_j$), the two clauses $(x_i\vee x_j)$ and $(\neg x_i\vee \neg x_j)$ are satisfied. Since $S'$ is a solution to $(G',{\mathcal{H}}')$, there is no induced $P_3$ in $G'\triangle S'$. Thus for every induced $P_3$ $uvw$ in $G'$ such that $uv$ and $vw$ belong to two distinct packed $P_3$s ($uv=e_p,vw=e_q$) respectively, at least one edge of $\{uv,vw\}$ belongs to $S'$ and the clause $(x_p\vee x_q)$ is satisfied. As a result, $\alpha$ is a satisfying assignment to $\Phi$.

  \proofparagraphf{Soundness.}
  Assume that $\Phi$ is satisfiable and let $\alpha$ be a satisfying assignment to $\Phi$. Let $S'=\{e_i\mid \alpha(x_i)=\true\}$. We will show that $S'$ is a cluster deletion set for $G'$ of size $|{\mathcal{H}}'|$. First we claim that for every induced $P_3$ $xyz\in {\mathcal{H}}'$, exactly one edge of $xy$ and $yz$ belongs to $S'$. Assume that $e_i=xy$ and $e_j=yz$ for some $i,j\in \{0,\ldots ,\lambda-1\}$. Since $(x_i\vee x_j)$ and $(\neg x_i\vee \neg x_j)$ are two clauses of $\Phi$ and $\alpha$ is a satisfying assignment to $\Phi$, either $x_i=\false,x_j=\true$ or $x_i=\true,x_j=\false$ holds. Thus the claim is true and $|S'|=|{\mathcal{H}}'|$.

  It remains to show that $S'$ is indeed a cluster deletion set, that is, there is no induced $P_3$ in $G' \triangle S'$.
  We show this by going over the possibilities of such an induced $P_3$ for whether its edges are packed or not.
  Before that, for every induced $P_3$ $uvw$ in $G'$ such that $uv$ and $vw$ belong to two distinct $P_3$s of ${\mathcal{H}}'$, let $uv=e_p$ and $vw=e_q$ for some $p,q\in \{0,\ldots,\lambda-1\}$. By the construction, $(x_p\vee x_q)$ is a clause of $\Phi$ so it is satisfied by $\alpha$. Thus at least one edge of $uvw$ belongs to $S'$.

  First, by Corollary~\ref{cor:noBigClique}, there is no proto-cluster of size at least three in $G'$. Thus there is no induced $P_3$ $abc$ in $G'\triangle S'$ such that $ab$ and $bc$ are non-packed edges in $G'$.

  Second, we claim that there is no induced $P_3$ $xyz$ in $G'\triangle S'$ such that both $xy$ and $yz$ are packed edges in $G'$.
  Suppose for a contradiction that there is an induced $P_3$ $xyz$ in $G'\triangle S'$ such that both $xy$ and $yz$ are packed edges in $G'$.
  Then $xy$ and $yz$ must be covered by two distinct packed $P_3$s, since otherwise $xy$ or $yz$ belongs to $S'$ by the definition of $S'$.
  We contend that $xz$ must be a packed edge covered by another packed $P_3$ in $G'$, i.e., $xy,yz$ and $xz$ are covered by three distinct packed $P_3$s in $G'$.
  First of all, $xz$ is an edge of $G'$, because otherwise $xyz$ would be an induced $P_3$ in $G'$.
  Then $xy$ or $yz$ would belong to $S'$ by the definition of $S'$, a contradiction.
  If $xz$ is a non-packed edge in $G'$, then $xz$ is an edge in $G'\triangle S'$ since $S'$ can only contain vertex pairs covered by packed $P_3$s.
  However, this contradicts the assumption that $xyz$ is an induced $P_3$ in $G'\triangle S'$.
  Therefore, $xz$ is indeed a packed edge in $G'$.

  By the construction of $S'$, no two of $xy, yz$, and $xz$ are covered by the same packed $P_3$ as otherwise one of the three edges belongs to $S'$.
  Thus $xy,yz$ and $xz$ are covered by three distinct packed $P_3$s in $G'$.
  Suppose that without loss of generality, $xz$ is covered by $uxz\in {\mathcal{H}}'$.
  Note that $ux,xy,yz\notin S'$ as by our assumption, $xyz$ is an induced $P_3$ in $G'\triangle S'$.
  Since $y$ is already incident with two packed $P_3$s, $uy$ is either a non-packed non-edge in $G'$ or a non-packed edge in $G'$.
  If $uy$ is a non-packed non-edge in $G'$, then $uxy$ is an induced $P_3$ in $G'$. Let $ux=e_i$ and $xy=e_j$, then the clause $(x_i\vee x_j)$ of $\Phi$ is not satisfied, a contradiction.
  Thus $uy$ is a non-packed edge.

  By the analysis above, there is a vertex $w$ such that $x,y,w$ belong to a packed $P_3$ and there is a vertex $w'$ such that $y,z,w'$ belong to a packed $P_3$. We have the following subcases:
  (1) the subgraph induced by $\{x,y,z,u,w,w'\}$ is isolated from $G'\setminus \{x,y,z,u,w,w'\}$. Then Reduction Rule~\ref{RR4} can be applied;
  (2) Either $wy$ is a non-packed non-edge and $wu$ is a packed edge, or $w'y$ is a non-packed non-edge and $w'u$ is a packed edge in $G'$. Then Reduction Rule~\ref{RR3} can be applied as $uy$ is a proto-cluster of size $2$ in $G'$ by our analysis above;
  (3) the subcases (1) and (2) do not hold. Then we can check that one of the items of Reduction Rule~\ref{RR9} can be applied (note that which item can be applied depends on the structure of the subgraph we are considering):
  There could be another vertex $a$ such that $a,w,u$ belong to one packed $P_3$ or another vertex $a'$ such that $a',w',u$ belong to one packed $P_3$.
  If no such vertices $a$ and $a'$ exist, then Item~\ref{9I1} of Reduction Rule~\ref{RR9} applies.
  Otherwise, one of the other items applies.
  To see this more clearly, we relabel the vertices as follows: $y \leftarrow z_1, u \leftarrow z_2, w\leftarrow u, z\leftarrow y, x\leftarrow x, w'\leftarrow v, a\leftarrow w, a'\leftarrow w'$.
  Thus, all three subcases above contradict the assumption that no reduction rules can be applied in $G'$.
  Therefore, the claim holds, that is, there is no induced $P_3$ $xyz$ in $G'\triangle S'$ such that both $xy$ and $yz$ are packed edges in $G'$.

  Third and finally, we claim that there is no induced $P_3$ in $G'\triangle S'$ such that one edge of this $P_3$ is a non-packed edge in $G'$ and the other edge is a packed edge in $G'$.
  Suppose for a contradiction that there is such a $P_3$ $uvw$ in $G'\triangle S'$ such that $uv$ is a non-packed edge and $vw$ is a packed edge in $G'$.
  Then there is another vertex $x$ such that $v,w,x$ belong to a packed $P_3$ in $G'$.
  Since Reduction Rule~\ref{RR3} cannot be applied to $(G',{\mathcal{H}}')$, $uw$ must be covered by a packed $P_3$ in $G'$, i.e., there is a vertex $y$ such that $u,w,y$ belong to a packed $P_3$ in $G'$. 
  We contend that at least one of $vy$ and $ux$ are covered by a packed $P_3$. 
  Suppose for contradiction that both $vy$ and $ux$ are non-packed non-edges. 
  Then, if $uy$ is a packed edge, Reduction Rule~\ref{RR3} could be applied.
  Thus we can assume that $uy$ is a packed non-edge.
  Since $uvw$ is an induced $P_3$ in $G'\triangle S'$, $uw,wx\in S$. 
  Then $vwy$ is an induced $P_3$ in $G'$. 
  Assume that $vw=e_p$ and $wy=e_q$. 
  Then the assignment $\alpha$ cannot satisfy $(x_p\vee x_q)$, which is a clause of $\Phi$, contradicting that $\alpha$ is a satisfying assignment to $\Phi$.
  Thus we can assume that there is a vertex $z$ such that $v,y,z$ belong to a packed $P_3$ in $G'$ (the analysis for the case that there is a vertex $z'$ such that $u,x,z'$ belong to a packed $P_3$ in $G'$ is similar).
  
  We have the following subcases:
  (1) the subgraph induced by $\{x,y,z,u,v,w\}$ is isolated from $G' \setminus \{x,y,z,u,v,w\}$. Then Reduction Rule~\ref{RR4} can be applied;
  (2) $vz$ is a non-packed non-edge and $uz$ is a packed edge. Then Reduction Rule~\ref{RR3} can be applied as $uv$ is a proto-cluster of size $2$ in $G'$;
  (3) the subcase (1) and (2) do not hold.
  Then we can check that one of the items of Reduction Rule~\ref{RR9} can be applied (note that which item can be applied depends on the structure of the subgraph we are considering).
  There could be another vertex $a$ such that $a,x,u$ belong to one packed $P_3$ or another vertex $a'$ such that $a',z,u$ belong to one packed $P_3$.
  If no such vertices $a$ and $a'$ exist, then Item~\ref{9I1} of Reduction Rule~\ref{RR9} can be applied.
  Otherwise, one of the other items applies.
  To see more clearly that Reduction Rule~\ref{RR9} applies, we relabel the vertices as follows: $v \leftarrow z_1, u \leftarrow z_2, z\leftarrow v, w\leftarrow x, x\leftarrow u, y\leftarrow y, a \leftarrow w, a'\leftarrow w'$.
  All three subcases above contradict the assumption that no reduction rules can be applied in $G'$.
  It follows that there is no induced $P_3$ in $G'\triangle S'$ such that one edge of this $P_3$ is a non-packed edge in $G'$ and the other edge of this $P_3$ is a packed edge in $G'$. 

  As a result, $S'$ is a solution to the instance $(G',{\mathcal{H}}')$ of \pCDA. By Lemma~\ref{ClusterDeletion}, $(G,{\mathcal{H}})$ is a YES-instance.
  This concludes the proof for the lemma.
\end{proof}

The above lemma shows that there is a polynomial-time algorithm for the special instances of \pCDA\ with $\ell = 0$ that our reduction rules produces.

We can now prove that, without excess edits, \pCEMT\ can be solved in polynomial time.

\begin{thmhalfintegralpolynoexcess}[Restated]
  \thmhalfintegralpolynoexcessstatement
\end{thmhalfintegralpolynoexcess}

\begin{proof}
  By Lemma~\ref{2SAT}, given an instance $(G,{\mathcal{H}})$ of \pCEMT, after applying Reduction Rules~\ref{RR1} to~\ref{RR9} exhaustively, we reduce it to an equivalent instance of \textsc{2-SAT} in polynomial time.
  Then we can decide the \textsc{2-SAT} instance by invoking the algorithm for \textsc{2-SAT}.
  It is well-known that \textsc{2-SAT} can be solved in polynomial time. This completes the proof for the theorem.
\end{proof}

\section{Conclusions}

Unfortunately the lower bound that we have obtained is a major roadblock in designing fixed-parameter algorithms for \pCE\ parameterized above modification-disjoint~\(P_3\)s.
On the positive side, \pCEATlong\ (\pCEAT) admits an XP-algorithm with respect to the number of excess edits.
We have left open whether \pCEAT\ is fixed-parameter tractable.
Towards this, on the one hand the half-integral \(P_3\) packings provide quite strong structure that can be exploited to design several branching rules.
On the other hand, when attacking this question from several angles we discovered large grid-like structures that seemed difficult to overcome in fixed-parameter time, and a corresponding W[1]-hardness result would also not be surprising.

A different future research direction is to deconstruct our hardness reduction by examining which substructures it contains that are seldom in practical data.
Forbidding such substructures may destroy the already somewhat fragile hardness construction, perhaps paving the way for fixed-parameter algorithms.

Finally, it would be interesting to see how modification-disjoint \(P_3\) packings look in practice.
If it is true that only few vertices are in a large number of packed \(P_3\)s and most of them are in a small constant number, then a strategy that combines settling the clustering around the vertices with large number of \(P_3\)s and applying reduction rules from \cref{sec:xp-algorithm} could be efficient.

\section*{Acknowledgments}
  The authors would like to thank the anonymous referees for their valuable comments and helpful suggestions.
  This research is part of a project that has received funding from the European Research Council (ERC) under the European Union's Horizon 2020 research and innovation programme, grant agreement 714704.
  MS also gratefully acknowledges funding by the Alexander von Humboldt Foundation.
  \begin{center}
    \includegraphics[height=2\baselineskip]{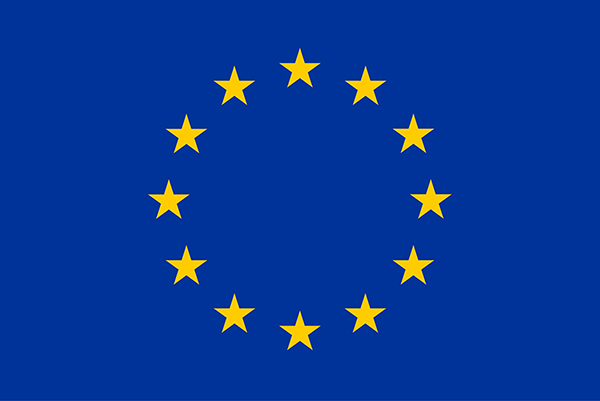}\hspace{1cm}\includegraphics[height=2\baselineskip]{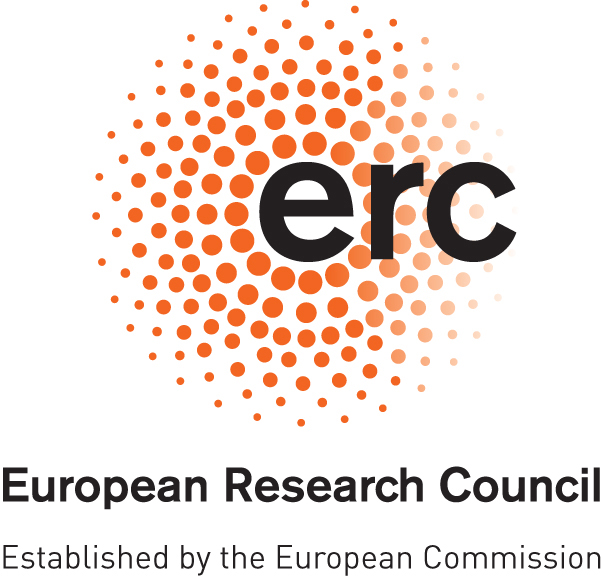}\hspace{1cm}\includegraphics[height=2\baselineskip]{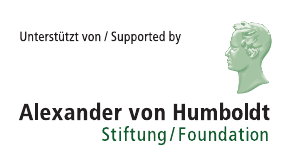}
  \end{center}

\bibliography{ClusterEditing}

\end{document}